\documentclass[12pt]{article}

\usepackage[font=small,format=plain,labelfont=bf,textfont=it]{caption}
\usepackage{setspace}
\usepackage{apptools}

\usepackage{fullpage, verbatim, amsfonts, amsmath, amssymb, amsthm, latexsym, setspace, graphicx}
\usepackage[T1]{fontenc}
\usepackage[latin9]{inputenc}
\usepackage{geometry}
\usepackage{xcolor}
\usepackage{natbib}
\usepackage{overpic}
\usepackage{tikz}
\usepackage{soul}
\usepackage{caption}
\usepackage{graphicx}
\usepackage{subcaption}
\usepackage[normalem]{ulem}

\usepackage[french, english]{babel}
\usepackage{hyperref,bookmark}

\hypersetup{linkcolor=MyDarkBlue, citecolor=MyDarkBlue, colorlinks=true} 
	\definecolor{MyDarkBlue}{rgb}{0,0.08,0.45}
	\definecolor{MyDarkGreen}{rgb}{0,0.55,0.08}

\theoremstyle{plain}
\newtheorem{thm}{Theorem}
\newtheorem{lem}{Lemma}
\newtheorem{cor}{Corollary}
\newtheorem{prop}{Proposition}

\theoremstyle{definition}
\newtheorem{rem}{Remark}

\AtAppendix{\counterwithin{lem}{section}}
\AtAppendix{\counterwithin{prop}{section}}
\AtAppendix{\counterwithin{thm}{section}}
\AtAppendix{\counterwithin{ex}{section}}
\AtAppendix{\counterwithin{cor}{section}}

\newcommand{\bl}{\textcolor{blue}}
\newcommand{\red}{\textcolor{red}}

\newcommand{\argmax}{\mathop{\mathrm{argmax}}}

\newcommand\cites[1]{\citeauthor{#1}'s \citeyearpar{#1}}

\newcommand{\finex}{\leavevmode\unskip\penalty9999 \hbox{}\nobreak\hfill\quad\hbox{$\blacktriangle$}}

\begin{document}

	\title{Auctions as Experiments\thanks{Frick: Princeton University (mfrick@princeton.edu); Iijima: Princeton University (riijima@princeton.edu); Ishii: Pennsylvania State University (yxi5014@psu.edu); Wu: Microsoft Research (nicktwu@stanford.edu). For helpful comments, we thank Yu Awaya, V.\ Bhaskar, Dirk Bergemann, Ben Brooks, Sylvain Chassang, Phil Haile, Emir Kamenica, Jakub Kastl, Ilan Kremer, Vijay Krishna, Alessandro Lizzeri, Stephan Lauermann, Stephen Morris, Alessandro Pavan, Wolfgang Pesendorfer, Larry Samuelson, Ran Shorrer, Andy Skrzypacz, Omer Tamuz, Caroline Thomas, Leeat Yariv, as well as numerous seminar and conference audiences. Iijima gratefully acknowledges the financial support from a Sloan Research Fellowship.} }
	\author{Mira Frick \and Ryota Iijima \and Yuhta Ishii \and Nicholas Wu}
	
	\date{This version: \today \\ \vspace{2mm} First posted version: February 9, 2026}
	\maketitle

	\begin{abstract}

Consider an auction with buyers whose values depend on an underlying state (e.g., market fundamentals). How does the auction format shape the information that buyers' bids reveal about the state? We recast auctions as statistical experiments and compare different auction formats in terms of the (Lehmann) informativeness of the induced experiments. Our main finding is that among a large class of auctions (e.g., $k$th-price, all-pay), the first-price auction is the most informative. As a result, this auction guarantees the highest payoffs to a decision-maker who uses the information revealed by buyers' bids in a monotone decision problem (e.g., a prediction problem or the choice of a reserve price in a future auction).

\medskip

\noindent {\it Keywords:} auctions, statistical experiments, Lehmann order.

\end{abstract}

\onehalfspacing

\newpage
	
\section{Introduction}

Consider a group of buyers who compete for a good in an auction. In many settings, buyers' values for the good depend on an underlying state of the world about which they are better informed than an outside observer. Such an observer may seek to infer buyers' information about the state from their auction bids, and use this information to guide a subsequent decision. Think, for example, about Treasury auctions for government debt securities, where buyers' (e.g., financial institutions')
values depend partly on their information about market fundamentals. Inferring this information from bids can be valuable to the auctioneer (i.e., the government) in guiding issuance and debt-management decisions, or to financial investors and forecasters in informing their investment decisions and predictions.\footnote{See, e.g., \cite{treasury2003} for the Treasury's use of past auction bids in guiding issuance decisions, and \cite{boyarchenko2021} for intermediaries' use of clients' bids in informing their own future trades.} Likewise, in internet and procurement auctions, auctioneers may benefit from using the information gleaned from buyers' bids about the value distribution to adjust reserve prices in future auctions.\footnote{See, e.g., \cite{ostrovsky2023} in the context of sponsored-search auctions and \cite{whited2025} in the context of auctions for highway improvement projects.}

In this paper, we ask how the auction format shapes the information that buyers' bids reveal about the state. To this end, we recast each auction as a statistical experiment that, conditional on each state of the world, produces signals---buyers' equilibrium bids---about the state. We compare different auction formats in terms of the informativeness of the induced experiments. Our main finding is that, among a large class of auctions, the first-price auction is the most informative. As a result, this auction guarantees the highest payoffs to a decision-maker (henceforth, DM) who uses auction bids to learn about the state prior to making a decision.

In our model (Section~\ref{sec:model}), $n$ buyers compete for one indivisible good in an auction. Buyers' values $v_i$ for the good are drawn independently from a distribution that depends monotonically on a state of the world $\theta \in \mathbb{R}$. For example, values may take the form $v_i =  \theta + \varepsilon_i$, where $\theta$ represents a common-value component (e.g., market fundamentals or the quality of the good) and $\varepsilon_i$ an idiosyncratic taste shock that is independent across buyers and $\theta$. Buyers are better informed about the state $\theta$ than a DM (e.g., the auctioneer or a third party), who draws inferences about $\theta$ from observing buyers' auction bids. To capture this, our baseline model assumes that buyers perfectly observe $\theta$ (in addition to $v_i$) while the DM does not (Section~\ref{sec:imperfect} relaxes this assumption). To compare different auction formats in terms of the information that buyers' bids reveal about $\theta$, we consider the symmetric monotone equilibria of a large class of \emph{standard auctions}: Buyers are ranked according to their (sealed) bids; the highest bidder wins the good; and all buyers make payments that can depend in fairly general ways on their rank and the entire profile of bids. Simple examples include $k$th-price auctions, where the highest bidder pays the $k$th-highest bid and others pay nothing; and all-pay auctions, where all bidders pay their own bids.

The equilibrium bid distributions of each such auction constitute a statistical experiment: Conditional on each state $\theta$, the DM observes $n$ independent signals about $\theta$, viz., buyers' equilibrium bids. (As Remark~\ref{rem:partial-observation} discusses, the analysis goes through if the DM observes only a subsample or selected statistics of bids.) We compare the informativeness of these experiments in terms of \cites{lehmann1988} accuracy order. This formalizes a sense in which one auction is \emph{more accurate} than another if the equilibrium bid distributions under the first auction are more sensitive to increases in $\theta$ than those under the second auction. Lehmann-accuracy is a widely used relaxation of \cites{blackwell} informativeness order in settings with one-dimensional signals and states. Although the accuracy order is less conservative than the Blackwell order, it still admits a foundation in terms of robust payoff improvements in a large class of decision problems (see below).

Section~\ref{sec:analysis} analyzes the accuracy order over auctions. We begin by noting that, under any auction, increases in the state $\theta$ affect equilibrium bid distributions through two channels: First, by assumption, the value distribution increases with $\theta$. Second, even fixing a buyer's value $v_i$, her bid may adjust to this increase in the value distribution, as this means that she faces a more competitive pool of other bidders. Crucially, the magnitude and sign of this second, \emph{strategic responsiveness effect} depend on the auction format, and Lemma~\ref{lem:competition} shows that an auction is more accurate precisely if it induces a more positive strategic responsiveness effect.

This observation implies that the first-price auction is more accurate than several canonical auctions (Proposition~\ref{prop:example}). To see the idea, first consider the second-price auction. Here, each buyer bids her own value regardless of the state, so observing these bids reveals buyers' values to the DM. In contrast, under the first-price auction, a DM who does not know $\theta$ cannot in general perfectly back out buyers' values from their bids. Nevertheless, first-price bids are even more informative about the underlying value distribution/state than directly observing buyers' values. The reason is that the first-price auction induces a positive strategic responsiveness effect: each buyer type increases her bid in response to an increase in opponents' values. This renders first-price bids more sensitive to increases in $\theta$ than buyers' values themselves, which makes the first-price auction more accurate than the second-price auction.  We also show that the first-price auction is more accurate than the third-price auction, where strategic responsiveness is negative (i.e., buyers reduce their bids in response to increases in opponents' values), as well as the all-pay auction, where strategic responsiveness is dampened by the fact that even non-winning bidders have to make non-zero payments.

Our main result (Theorem~\ref{thm:main}) is that the first-price auction is in fact more accurate than \emph{any} well-behaved standard auction: Unlike the canonical auctions above, arbitrary standard auctions need not admit closed-form equilibria; to establish a general ranking despite this difficulty, we impose some additional structure by restricting to auctions whose equilibrium bid distributions are monotone in $\theta$.\footnote{Monotonicity is in the sense of the monotone likelihood-ratio property. This assumption is not needed for the comparison against many canonical auction formats (e.g., second-, third-price, and all-pay), but Remark~\ref{rem:MLRP} exhibits natural settings where it is satisfied.} The proof of Theorem~\ref{thm:main} exploits the revenue equivalence of all standard auctions to relate their bidding functions to first-price bids. The first-price auction has two defining features:  (i) only the winner pays, and (ii) holding fixed a bidder's rank, her payment depends only on her own bid. We show that these two features render the first-price auction most accurate, because they generate a more positive strategic responsiveness effect than under any other auction.

We also derive accuracy rankings among two subclasses of auctions; each preserves one of the two features of the first-price auction, while isolating the informativeness effect of violating the other feature. Among $k$th-price auctions, which preserve feature (i), greater accuracy corresponds to lower $k$ (Proposition~\ref{prop:k}). Among own-pay auctions---where each bidder's payment is some multiple of her own bid (preserving feature (ii) and generalizing first-price and all-pay auctions)---greater accuracy corresponds to higher-ranked bidders paying relatively larger multiples of their bids than lower-ranked bidders (Proposition~\ref{prop:own bid}).

 Section~\ref{sec:payoff} illustrates the payoff implications of our accuracy comparisons of auctions. Suppose that after observing buyers' bids in an auction, the DM faces a decision problem where he chooses an optimal action based on the information about $\theta$ gleaned from these bids. Applying \cites{lehmann1988} classic result to our setting immediately implies that more accurate auctions lead to higher expected payoffs in all \emph{monotone decision problems}, where higher states correspond to higher ex-post optimal actions (Corollary~\ref{cor:lehmann}). We present two simple examples of such decision problems.

First, we consider statistical prediction problems: the DM's payoffs are described by a loss function that is increasing in the distance between his action and the state. How close the DM's expected loss comes to zero provides a quantitative measure of an auction's informativeness. We show that, depending on the value distributions, the informational advantage of first-price over second-price auctions can be arbitrarily large: Indeed, the expected loss under first-price bid observations can be arbitrarily close to zero while that under second-price bid observations is arbitrarily close to the no-information benchmark, and this can happen even when there are arbitrarily many more bidders in the second-price than the first-price auction.
Second, we suppose the DM is the auctioneer himself, who uses today's bid observations to set a reserve price in a future auction with the same value distribution as today. By extending Theorem~\ref{thm:main} to accommodate reserve prices, we show that to maximize total revenue across the two periods, the auctioneer optimally uses a first-price auction in period 1: Under any period-1 reserve price, the first-price auction yields the same period-1 revenue as any other auction but provides more accurate information about $\theta$; this improves period-2 revenue, as the choice of a period-2 reserve price is a monotone decision problem.

Finally, Section~\ref{sec:extensions} extends Theorem~\ref{thm:main} to environments where buyers have interdependent values (Section~\ref{sec:interdependent}) or only observe noisy signals about the state $\theta$ (Section~\ref{sec:imperfect}). Further afield, we also show that the comparison between first- and second-price auctions generalizes to correlated-value settings and multi-unit auctions (Section~\ref{sec:other-extensions}).

\subsection{Related Literature}\label{sec:literature}

We relate to several literatures that study inference in auctions from different angles.

The classic literature on information aggregation in common-value auctions investigates under which conditions the winner's payment (i.e., equilibrium price) reveals the common value as the number of buyers $n \to \infty$ \citep[e.g.,][]{wilson1977, milgrom1979, pesendorfer1997}. Away from this limit, \cite{di2021, di2026} provide conditions under which the winner's payment becomes Lehmann-more accurate as $n$ increases.
Some papers compare specific auction formats in terms of the informativeness of the winner's payment. \cite{hong2004} show that the winner's payments under the second-price and English auctions are asymptotically equally informative as the number of buyers $n \to \infty$, although the latter is more informative for any fixed $n$ \citep[][]{kremer2002}. \cite{kremer2005} show that, for any $k$, the winner's payment under the $(k+1)$th-price auction is asymptotically more informative than under the $k$th-price auction as $n\to\infty$. \cite{awaya2026} compare the informativeness of the winner's payment in first- and second-price auctions (and their multi-unit extensions) for a fixed number of buyers $n$. They provide a condition, in terms of the value distribution and $n$ (and, in the multi-unit case, the number of units $m$), under which the winner's payment in the first-price auction is Lehmann-more accurate. 

In the above papers, buyers' types are one-dimensional signals about the common value, so an outside observer can invert any observed equilibrium bid to recover the corresponding type. As a result, informational differences across auction formats arise because the winner's payments reveal different order statistics of the type distribution---for example, the highest vs.\ the second-highest type under first- vs.\ second-price auctions. Thus, the above comparisons reflect the \emph{statistical} channel that different order statistics of a sample contain different information about the underlying distribution. In contrast, in our paper, buyers observe both their idiosyncratic value and the common state $\theta$ (or a noisy signal thereof), which means that an outside observer who does not know $\theta$ cannot directly back out buyers' values from their equilibrium bids (see Remark~\ref{rem:type-only}). This creates a new, \emph{strategic} channel for informational differences across auctions---the aforementioned strategic responsiveness effect: the auction format affects how a buyer of a given value adjusts her equilibrium bid in response to changes in the state (i.e., the distribution of opponents' values).
To isolate this channel, we assume that the DM observes all $n$ buyers' bids (or some subsample of bids that does not depend on the auction format---see Remark~\ref{rem:partial-observation}) rather than only the winner's payment. We highlight that, for any value distributions and any number of buyers $n$, this channel makes the first-price auction most accurate among all well-behaved standard auctions, as the first-price auction induces the strongest strategic responsiveness effect.

A recent literature studies how an auctioneer can use observations from past auctions to optimize revenue in future auctions \citep[for a survey, see, e.g.,][]{nedelec2022}. \cite{cole2014} and \cite{morgenstern2015} study an auctioneer who uses period-1 observations of buyers' values (e.g., their bids in a second-price auction) to learn about the value distribution and to optimally set a reserve price for a period-2 auction; they analyze the convergence rate of the auctioneer's period-2 revenue to the known-distribution case as the number of period-1 observations grows large.\footnote{\cite{segal2003} and \cite{baliga2003} study revenue-maximizing mechanisms where allocations depend on the information about the value distribution revealed by buyers' reported values.} In Section~\ref{sec:reserve}, we consider a closely related reserve-price choice application, but our interest is in comparing the informativeness of different period-1 auction formats: We highlight that observing first-price bids outperforms directly observing buyers' values, as the strategic responsiveness effect under first-price auctions makes buyers' bids more informative about the value distribution than their values themselves. Some papers study dynamic reserve-price choice (focusing mostly on second-price auctions), where each period's reserve price affects current bid observations and hence the auctioneer's continuation revenue \citep[e.g.,][]{cesa2014}. This gives rise to a tradeoff between exploration and exploitation, a central theme in the literature on experimentation in monopoly pricing \citep[e.g.,][]{rothschild1974}. In Section~\ref{sec:reserve}, the period-1 choice of reserve price also involves this exploration-exploitation tradeoff, but our main point is that the choice of auction format does not: All standard auctions are statically revenue-equivalent, but the first-price auction is most accurate and hence yields the highest continuation revenue.

While we focus on the information that auctions provide about the underlying value distribution/state, auctions can also reveal information about individual buyers' realized values. This form of information revelation can be a relevant consideration in repeated auctions \citep[e.g.,][]{ortega1968, engelbrecht1983} or auctions with aftermarkets \citep[e.g.,][]{bukhchandani1989, goeree2003, haile2003}, and many papers highlight forward-looking buyers' signaling incentives and their effects on the auctioneer's revenue. Motivated by the question of privacy protection, \cite{eilat2023} compare different auction formats in terms of the information that the winner's payment reveals about the winner's value, as measured by the amount of entropy reduction. They assume a known value distribution, so the first-price auction is most informative (i.e., least privacy-preserving) in their setting, because its winning payment perfectly reveals the winner's value (by inversion of the bidding function). They also show that $k$th-price auctions are less informative the higher $k$. While our exercise gives rise to the same rankings among these particular auctions, the source of the rankings, the environment, and the information being compared are very different. For example, in our setting, the value distribution is unknown and the DM observes all buyers' bids, so the second-price auction perfectly reveals the winner's value (which equals the winner's bid), whereas the first-price auction is less informative about the winner's value.

Finally, we build on the literature on comparisons of statistical experiments, by adopting \cites{lehmann1988} accuracy order to compare the informativeness of different auction formats. The fact that this order is less conservative than \cites{blackwell} order allows for the sharp result that the first-price auction always Lehmann-dominates all well-behaved standard auctions; in contrast, depending on the underlying value distributions, the first-price auction need not Blackwell-dominate all such auctions.\footnote{E.g., under some value distributions, the upper endpoint of the third-price bid distribution is strictly decreasing in $\theta$ (Appendix~\ref{app:MLRP-parametric}), in which case first-price bids do not Blackwell-dominate third-price bids. However, the first-price auction Blackwell-dominates all other auctions in the uniform example in Figure~\ref{fig:uniform}.}
At the same time, the Lehmann order is conservative enough to ensure that more accurate auctions lead to robust payoff improvements in the large class of monotone decision problems. These payoff implications of the Lehmann order have been extended to more general decision problems \citep[e.g.,][]{quah2009}, ambiguous priors \citep{li2020}, and multi-dimensional signals or actions \citep{di2021, kim2023}; in Online Appendix~\ref{app:general lehmann}, we provide a generalization to non-i.i.d.\ signals that we use in the extensions in Section~\ref{sec:extensions}. Several papers apply the Lehmann order to the study of auctions in other contexts \citep[e.g.,][]{persico2000, bergemann2002, bobkova2024}.

\section{Model}\label{sec:model}

 {\bf Environment.} A group of $n$ buyers ($i = 1, \ldots, n$) compete for one indivisible good in an auction. Buyers have quasi-linear preferences and privately observe their values $v_i \in \mathbb{R}_+$ for the good, which are drawn i.i.d.\ from a cumulative distribution function (cdf) $F_\theta$. The value distribution $F_\theta$ depends on a state of the world $\theta \in \Theta$ that is drawn from a prior distribution $q_0\in\Delta(\Theta)$, where $\Theta$ is a measurable subset of $\mathbb R$. Each distribution $F_\theta$ is supported on a bounded interval $I_\theta=[\underline v_\theta, \overline v_\theta) \subseteq \mathbb R_+$ and admits a density $f_\theta$ that is continuous and strictly positive on $(\underline v_\theta, \overline v_\theta)$. We assume that higher states correspond to higher values, in the sense that the value distributions $(F_\theta)$ satisfy the monotone likelihood-ratio property (MLRP), i.e., $f_\theta(v')f_{\theta'}(v)\leq f_\theta(v)f_{\theta'}(v')$ for all $v> v'$ and $\theta>\theta'$. Throughout the paper, we use increasing (resp.\ decreasing) to mean weakly increasing (resp.\ decreasing).
 
One natural interpretation of state $\theta$ is as a common-value component (e.g., market fundamentals or the quality of the good). Indeed, in the current setting, each buyer's value can be written as $v_i = u(\theta, \varepsilon_i)$ for some increasing function $u$ of the common-value component $\theta$ and an idiosyncratic component $\varepsilon_i$, where $\varepsilon_i$ is drawn independently across buyers and $\theta$.\footnote{For example, we can write $v_i = u(\theta, \varepsilon_i)$, where $\varepsilon_i$ is i.i.d.\ uniform on $[0, 1]$ and $u(\theta, \varepsilon) = F_\theta^{-1} (\varepsilon)$.} A special case is when $v_i=\theta+\varepsilon_i$, where (to ensure the MLRP) $\varepsilon_i$ is drawn from a state-independent distribution with a log-concave density.

As motivated in the Introduction, we are interested in settings where buyers are better informed about $\theta$, and hence about the value distribution $F_\theta$, than a DM (e.g., the auctioneer himself or some third party), who seeks to draw inferences about $\theta$ from buyers' bids in the auction. For now, we focus on the benchmark where buyers perfectly observe $\theta$---in addition to their values $v_i$ or, equivalently, their idiosyncratic components $\varepsilon_i$---while the DM does not observe $\theta$. Remark~\ref{rem:type-only} below further discusses the assumption that buyers observe both $v_i$ and $\theta$; Section~\ref{sec:imperfect} extends our main result to a setting where buyers observe noisy signals about $\theta$.

\medskip

\noindent {\bf Auctions.} We will compare a large class of auction formats in terms of the information that buyers' bids reveal about $\theta$. Specifically, we consider the following class of \textbf{\textit{(standard) auctions}} \citep[e.g.,][]{krishna2009}: Each buyer $i$ simultaneously submits a bid $b_i  \in\mathbb R_+$. Buyers are ranked in terms of the order statistics 
$b^{(1)} \geq b^{(2)} \geq ... \geq b^{(n)}$ of their bids; in case of ties, an arbitrary tie-breaking rule is used to strictly rank buyers. The highest bidder receives the good. For each 
$\ell=1,\ldots, n$, the $\ell$th-highest bidder pays an amount $\psi_ \ell (b_1, \ldots, b_n) \in \mathbb{R}_+$ to the auctioneer, which can depend on her rank $\ell$ and the entire profile of bids. We assume that each payment function $\psi_\ell : \mathbb{R}^n_+ \to \mathbb{R}_+$ is increasing and symmetric under permutations (i.e., does not depend on bidders' identities). Hence, we can treat $\psi_\ell$ as a function $\psi_\ell (b^{(1)},\ldots, b^{(n)})$ of realized order statistics. We also assume that for all $\ell$, $\psi_\ell (b^{(1)},\ldots, b^{(n)})=0$ whenever $b^{(\ell)}=0$. This will ensure that all buyers receive a nonnegative interim payoff in equilibrium.  We identify each auction with its profile $\psi = (\psi_\ell)_{\ell =1, \ldots, n}$ of payment functions.

Standard auctions nest several important auction formats. For example, under the \textit{\textbf{$k$th-price auction}} (for $k = 1, \ldots, n$), payment functions $(\psi_\ell)$ are given by 
\[
\psi_\ell \left(b^{(1)},\ldots, b^{(n)}\right)=
\begin{cases}
b^{(k)} &\text{ if } \ell=1, \\
0 &\text{ if } \ell\not=1;
\end{cases}
\]
that is, the winner pays the $k$th-highest bid, while all other bidders pay nothing. 
  Under the \textit{\textbf{all-pay auction}}, payment functions $(\psi_\ell)$ are given by
\[
\psi_\ell \left(b^{(1)},\ldots, b^{(n)} \right)=b^{(\ell)}  \quad \text{ for all } \ell = 1, \ldots n;
\]
that is, all buyers pay their own bid. Our results extend readily to auctions with reserve prices (Section~\ref{sec:reserve}) and stochastic payments (by identifying $\psi_{\ell}$ with the expected payment).

Since buyers observe the state, each buyer $i$'s strategy is a collection of bidding functions $b_{i,\theta} (v_i)$ indexed by $\theta$. As is common, we restrict attention to auctions $\psi$ that admit a symmetric monotone equilibrium at each state $\theta$. That is, at each $\theta$, buyers' equilibrium strategies take the form $b^{\psi}_{i,\theta}(v_i)=b^{\psi}_\theta(v_i)$ for all $i$, where $b^{\psi}_\theta:I_\theta\to\mathbb R_+$ is continuous and strictly increasing. When there is no risk of confusion, we use the term equilibrium to refer to a symmetric monotone equilibrium.
Henceforth, we fix one such equilibrium $(b^\psi_\theta)$ for each auction $\psi$; in case of multiplicity, our results will not depend on which equilibrium of $\psi$ we consider. Let $G^{\psi}_\theta$ denote the induced \textit{\textbf{bid distribution}} in state $\theta$, i.e., the cdf defined by
\[
G^{\psi}_\theta (b) = F_\theta ( (b^{\psi}_\theta )^{-1} (b) ), \text{ for all } b \in \mathbb{R}_+.
\]
Thus, $G^{\psi}_\theta (b)$ represents the probability that an individual buyer bids at most $b$ when her value is drawn from $F_\theta$ and she bids according to $b^{\psi}_\theta$.\footnote{Here $(b^{\psi}_\theta )^{-1} (b) := \inf\{v \in I_\theta : b^{\psi}_\theta (v)\geq b\}$ denotes the left inverse of $b^{\psi}_\theta$, with $\inf \emptyset = \overline v_\theta$. We sometimes use $b^{\psi}_\theta(\overline v_\theta)$ to denote $\lim_{v \to \overline v_\theta} b^{\psi}_\theta (v)$. \label{fn:left-inverse}}

Note that since buyers know $\theta$ and their values are i.i.d.\ conditional on $\theta$, they face an independent private value auction at each $\theta$. (Section~\ref{sec:extensions} considers interdependent and correlated values). Thus, standard revenue-equivalence arguments imply that (all symmetric monotone equilibria of) all auctions $\psi$ yield the same expected revenue for the auctioneer. Note also that, by focusing on standard auctions and their symmetric monotone equilibria, we rule out more general mechanisms or equilibria where buyers can report the state. This is because our goal in this paper is to shed light on commonly used auction formats, by analyzing how they shape what information a DM can extract from buyers' bids. In contrast, we do not seek to explore all tools a designer might use to incentivize buyers to reveal their information.\footnote{If a designer can augment any $\psi$-auction by asking buyers to additionally submit reports of $\theta$ (without using these reports to determine allocations/payments), there is an equilibrium in which all buyers report $\theta$ truthfully. However, this channel is unavailable to a third-party DM, who cannot control the mechanism used to sell the good. Moreover, even if the DM is the auctioneer himself, in applications, state $\theta$ might be a reduced-form representation of information (e.g., buyers' assessments of market fundamentals) that may be nontrivial to communicate directly. Thus, broadly in the spirit of classic work emphasizing prices and other low-dimensional market messages as a way to economize on communication and information-processing costs
\citep[e.g.,][]{hayek1945,green1987}, it is natural to ask what can be learned about $\theta$ solely from monetary bids in standard auctions. Our focus on symmetric monotone equilibria also rules out somewhat artificial asymmetric equilibria that reveal $\theta$ in certain auctions: e.g., second-price auctions admit equilibria where a fixed buyer $i$ submits bids $b_{\theta,i}>\overline v_\theta$ that are distinct across $\theta$, while all other buyers always bid $0$. \label{fn:report}}

\medskip			

\noindent{\bf Experiments and Lehmann order.} While all auctions in our setting are revenue-equivalent, they may differ in terms of the information that equilibrium bids reveal about $\theta$. To formalize this, we view each auction $\psi$ as inducing a statistical experiment:
Conditional on each state $\theta$, the auction produces $n$ i.i.d.\ signals about $\theta$, namely the $n$ buyers' bids, which correspond to $n$ independent draws from the equilibrium bid distribution $G^{\psi}_\theta$ in state $\theta$. Our goal is to compare the informativeness of these statistical experiments $(G^{\psi}_\theta)$ across different auction formats $\psi$.

To do so, we measure informativeness using \cites{lehmann1988} classic accuracy order over experiments, which is a widely used relaxation of \cites{blackwell} order in settings with one-dimensional signals and states. Applied to two bid distributions $G = (G_\theta)$ and $\hat G = (\hat G_\theta)$, we say that $G$ is \textbf{\textit{more accurate}} than $\hat G$ if for all states $\theta>\theta'$ and bids $b \in \mathbb{R}_+$,\footnote{Here $G^{-1}_\theta (x) := \sup \{b \in {\rm supp}\, G_\theta : G_\theta (b)\leq x\}$ denotes the right inverse of cdf $G_\theta$, with $\sup\emptyset=0$. \label{fn:right-inverse}}
\[
G^{-1}_\theta \left(\hat G_\theta \left( b \right) \right) \geq G^{-1}_{\theta'}\left(\hat G_{\theta'} \left(b \right) \right).
\]
That is, $G^{-1}_\theta \left(\hat G_\theta \left( b \right) \right)$ is increasing in $\theta$, capturing a sense in which the bid distributions $G_\theta$ are more sensitive to increases in the state $\theta$ (and hence more informative about $\theta$) than the bid distributions $\hat G_\theta$. Equivalently, if $G$ is more accurate than $\hat G$ and $G_{\theta'}(b)=\hat G_{\theta'}(\hat b)\in(0,1)$, then $G_\theta(b)\leq\hat G_\theta(\hat b)$ for all $\theta>\theta'$. Thus, as the state rises, $b$'s quantile
under $G$ decreases more (or increases less) than that of $\hat b$ under $\hat G$. 

The Blackwell order is a very conservative partial order: observing signals from a Blackwell-more informative experiment guarantees a DM higher payoffs in \emph{all} decision problems. The accuracy order is less conservative, but \cite{lehmann1988} shows that it retains robust payoff implications in a large class of decision problems: If $G$ is more accurate than $\hat G$, then observing signals from $G$ rather than $\hat G$ guarantees the DM higher payoffs in all monotone decision problems (provided $(\hat G_\theta)$ satisfies the MLRP); see Section~\ref{sec:payoff} for the formal statement. As we will see, such decision problems include several natural settings where a DM uses information gleaned from observing bids in an auction.

\begin{rem}[{\bf Joint observation of $v_i$ and $\theta$}]\label{rem:type-only}
A crucial feature of our model is that buyers observe two-dimensional information---their own values $v_i$ and a common state $\theta$ that influences the value distribution $F_\theta$ (Section~\ref{sec:imperfect} extends to the case where buyers observe a common noisy signal of $\theta$). As discussed above, a natural interpretation of this feature is that each buyer $i$ observes both a common-value component $\theta$ and an idiosyncratic component $\varepsilon_i$ that jointly determine her value $v_i = u(\theta, \varepsilon_i)$; for example, in the context of Treasury auctions, $\theta$ may represent market fundamentals and $\varepsilon_i$ an idiosyncratic liquidity shock.\footnote{Motivated by empirical evidence that bidders in Treasury auctions possess a substantial amount of common information, \cite{pycia2026} study a related model of (multi-unit) auctions, where buyers observe a common signal whose realization is unknown to the seller and (possibly) an additional private signal. They analyze the structure of revenue-maximizing pay-as-bid auctions.}

To understand the role of this feature for our analysis, suppose instead that buyers only observe their values $v_i$ but not $\theta$, and play a symmetric monotone equilibrium under the prior $q_0 \in \Delta(\Theta)$. Then observing buyers' bids still reveals information about $\theta$, but all auction formats $\psi$ are informationally equivalent. Indeed, in this setting, each buyer $i$'s realized bid $b^{\psi} (v_i)$ in each state $\theta$ depends on $\theta$ only through her value $v_i$. Thus, under any auction $\psi$, a DM who observes these bids can invert the bidding function to learn buyers' values: That is, $(G^{\psi}_\theta)$ is equivalent (in terms of accuracy) to sampling from the value distributions $(F_\theta)$. In contrast, in our setting, buyers' bids $b^{\psi}_\theta (v_i)$ may depend on both $v_i$ and $\theta$, so a DM who does not know $\theta$ cannot in general back out buyers' values from their bids. Hence, different auctions $\psi$ can differ in terms of the information that $(G^{\psi}_\theta)$ reveals about $\theta$. \finex
\end{rem}

\begin{rem}[{\bf Partial bid observation}]\label{rem:partial-observation} We have framed our exercise as if the DM observes all $n$ buyers' auction bids, which is especially natural when the DM is the auctioneer himself. However, the analysis goes through under various forms of partial bid observation---for example, if the DM only observes a subsample or selected statistics of bids, which can be relevant when the DM is a third party.\footnote{For instance, in Treasury auctions, an important class of DMs are dealers, i.e., designated financial institutions who field clients' bids and thus observe a subsample of all bids, which they can use to inform their own future trades \citep[e.g.,][]{boyarchenko2021}. Some energy and environmental market auctions publicly disclose selected statistics, such as maximum, minimum, and median bids \citep[e.g., for emissions-allowance auctions, see][]{potomac2026}, which can provide valuable information to third-party DMs such as energy-market analysts and investors.} Specifically, note that by comparing the equilibrium bid distributions $(G^\psi_\theta)$ in terms of their Lehmann accuracy, we are analyzing the informativeness of a random \emph{individual} buyer's bid about the state. By \cite{lehmann1988}, the fact that more accurate bid distributions guarantee the DM higher payoffs in all monotone decision problems then holds for any fixed number $k \leq n$ of bid observations (i.e., independent draws from $(G^\psi_\theta)$); see Corollary~\ref{cor:lehmann} for the formal statement. Moreover, as we explain in Section~\ref{sec:payoff}, the analysis also goes through if the DM observes only a fixed order statistic (e.g., the highest bid or median bid) or a fixed collection of order statistics (e.g., the $k$ highest bids). In contrast, as we discussed in Section~\ref{sec:literature}, allowing the auction format itself to determine which order statistics of bids the DM observes introduces an additional channel for informational differences across auctions that is not our focus in this paper.\footnote{However, some of our main insights remain relevant even under this additional channel. Specifically, consider observing only the winner's payment under the first- vs.\ second-price auction, i.e., the highest first-price bid $(b^{\rm 1st}_\theta)^{(1)}$ vs.\ the second-highest second-price bid $(b^{\rm 2nd}_\theta)^{(2)}$. Our analysis below implies that observing $(b^{\rm 1st}_\theta)^{(1)}$ is more accurate than observing the highest second-price bid $(b^{\rm 2nd}_\theta)^{(1)}$. Moreover, since second-price bids coincide with buyers' values, the accuracy comparison of $(b^{\rm 2nd}_\theta)^{(1)}$ and $(b^{\rm 2nd}_\theta)^{(2)}$ reduces to that of the value order statistics $v^{(1)}$ and $v^{(2)}$, and \cite{awaya2026} provide conditions on $(F_\theta)$ and $n$ under which $v^{(1)}$ is more accurate than $v^{(2)}$. Thus, under the latter conditions, our analysis combined with \cite{awaya2026} implies that, by transitivity of the accuracy order, the first-price winner's payment $(b^{\rm 1st}_\theta)^{(1)}$ is more accurate than the second-price winner's payment $(b^{\rm 2nd}_\theta)^{(2)}$.}
 \finex
\end{rem}

\section{Analysis}\label{sec:analysis}
\subsection{Accuracy and Strategic Responsiveness}

We now proceed to analyze the accuracy order over auctions. As a first step, the following lemma provides an alternative characterization of this order in terms of bidding functions.
The proof follows directly  from the definition. 

\begin{lem}\label{lem:competition}
Consider any two auctions with equilibrium bidding functions and bid distributions $(b_\theta), (G_\theta)$ and $(\hat b_\theta), (\hat G_\theta)$, respectively. The following are equivalent:
\begin{enumerate}
\item  $G = (G_\theta)$ is more accurate than $\hat G = (\hat G_\theta)$.  
\item For all states $\theta>\theta'$ and bids $b \in \mathbb{R}_+$, we have
\[b_\theta \left(\hat{b}_\theta^{-1} \left(b \right) \right)\geq b_{\theta'} \left(\hat{b}_{\theta'}^{-1} \left(b \right) \right).\]
\end{enumerate}
\end{lem}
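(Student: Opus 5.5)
The plan is to rewrite the composite map $G_\theta^{-1}\circ\hat G_\theta$ directly in terms of the bidding functions and then read off the equivalence. Since $G_\theta = F_\theta\circ b_\theta^{-1}$ and $\hat G_\theta = F_\theta\circ \hat b_\theta^{-1}$, the key intermediate claim is the identity
\[
G_\theta^{-1}\bigl(\hat G_\theta(b)\bigr) = b_\theta\bigl(\hat b_\theta^{-1}(b)\bigr)\quad\text{for all } b\in\mathbb R_+ ,
\]
at least after suitable conventions at the endpoints of the supports. Given this identity, condition 1 (monotonicity in $\theta$ of the left side) is literally condition 2 (monotonicity in $\theta$ of the right side), so both directions of the equivalence follow at once.

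To establish the identity I would first take $b$ in the interior of the range of $\hat b_\theta$. There $v:=\hat b_\theta^{-1}(b)$ lies in $(\underline v_\theta,\overline v_\theta)$ with $\hat b_\theta(v)=b$, by continuity and strict monotonicity of $\hat b_\theta$, and $\hat G_\theta(b)=F_\theta(v)\in(0,1)$. Because $b_\theta$ is continuous and strictly increasing and $F_\theta$ is strictly increasing on $I_\theta$ (positive density), $G_\theta$ is continuous and strictly increasing on its support $[b_\theta(\underline v_\theta),b_\theta(\overline v_\theta)]$. Hence the right inverse satisfies $G_\theta^{-1}(F_\theta(v))=b_\theta(v)$: the set $\{b'\in\operatorname{supp}G_\theta: G_\theta(b')\le F_\theta(v)\}$ equals $[b_\theta(\underline v_\theta), b_\theta(v)]$, and its supremum is $b_\theta(v)$.

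The remaining step, which I expect to be the main (though purely technical) obstacle, is handling bids outside the interior of the range of $\hat b_\theta$. If $b\le \hat b_\theta(\underline v_\theta)$, then $\hat b_\theta^{-1}(b)=\underline v_\theta$ and $\hat G_\theta(b)=0$ (or the atomless value at the lower endpoint). The right-inverse convention gives $G_\theta^{-1}(0)=b_\theta(\underline v_\theta)$, since $G_\theta$ vanishes at the lower support point and is positive above it, and this matches $b_\theta(\underline v_\theta)$. If $b\ge \hat b_\theta(\overline v_\theta)$, then $\hat b_\theta^{-1}(b)=\overline v_\theta$ by the $\inf\emptyset=\overline v_\theta$ convention (or the limit), and $\hat G_\theta(b)=1$. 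The right inverse $G_\theta^{-1}(1)$ is the top of the support, namely $b_\theta(\overline v_\theta)$ in the limit notation of footnote~\ref{fn:left-inverse}. I would check each case carefully against the stated inverse conventions in footnotes~\ref{fn:left-inverse} and~\ref{fn:right-inverse}, since a mismatch at a boundary point would break the pointwise identity. With the identity verified for every $b$ and every $\theta$, the lemma follows immediately.
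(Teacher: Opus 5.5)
Your proposal is correct and follows the paper's proof. The paper establishes the same identity $G_\theta^{-1}(\hat G_\theta(b)) = b_\theta(\hat b_\theta^{-1}(b))$ for all $b$, using strict monotonicity of $F_\theta$ on $I_\theta$ and continuity and strict monotonicity of $b_\theta$, and then reads the equivalence off the definition of accuracy; your interior/endpoint case analysis just makes explicit what the paper handles implicitly through the inverse conventions in its footnotes.
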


To interpret Lemma~\ref{lem:competition}, note that increases in $\theta$ have two effects on the bid distribution. First, there is the direct effect that, by assumption, a higher state $\theta$ corresponds to a higher value distribution $F_\theta$; for a \emph{fixed} monotonic bidding function, this leads the corresponding bid distribution to increase with $\theta$. This effect is independent of the auction format. Second, however, there is a \textbf{\textit{strategic responsiveness effect}}: For any fixed value $v$, $v$'s equilibrium bid $b_\theta (v)$ may adjust to increases in $\theta$, as the corresponding increase in the value distribution $F_\theta$ means that $v$ faces a more competitive pool of other bidders. Crucially, the sign and magnitude of this adjustment depend on the auction format. Lemma~\ref{lem:competition} shows that $G$ is more accurate than $\hat G$ if and only if the strategic responsiveness effect is more positive under $b$ than $\hat b$, in the sense that, for all $v$, $b_\theta (v)$ increases more (or decreases less) with $\theta$ than does $\hat{b}_\theta(v)$. Intuitively, stronger strategic responsiveness amplifies the sensitivity of the bid distributions to the state, increasing accuracy.

Given Lemma~\ref{lem:competition}, comparing the accuracy of different auctions amounts to understanding which auctions induce a stronger strategic responsiveness effect.

\subsection{Canonical Auction Formats}\label{sec:canonical}

To build intuition, we first consider some canonical auctions whose equilibrium bidding functions admit closed-form solutions: the first-price, second-price, third-price, and all-pay auctions. The first-price, second-price, and all-pay auction always admit a unique symmetric monotone equilibrium, whose bid distributions we denote by $G^{\rm 1st} = \left( G^{\rm 1st}_\theta \right)$, $G^{\rm 2nd} = \left( G^{\rm 2nd}_\theta \right)$, and $G^{\rm all} = \left( G^{\rm all}_\theta \right)$. Denote by $G^{\rm 3rd} = \left( G^{\rm 3rd}_\theta \right)$ the bid distributions under the unique symmetric monotone equilibrium of the third-price auction whenever it exists.

\begin{prop}\label{prop:example} \

\begin{enumerate}
\item $G^{\rm 1st}$ is more accurate than $G^{\rm 2nd}$, which in turn is more accurate than $G^{\rm 3rd}$. 
\item $G^{\rm 1st}$ is more accurate than $G^{\rm all}$. 
\end{enumerate}

\end{prop}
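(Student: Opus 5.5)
The plan is to reduce every comparison to a pointwise monotonicity statement about bidding functions via Lemma~\ref{lem:competition}. I will use the closed-form symmetric equilibria, writing $H_\theta:=F_\theta^{n-1}$ for the cdf of the highest of the other $n-1$ values. The first-price bid is $b^{\rm 1st}_\theta(v)=v-\int_{\underline v_\theta}^{v}\bigl(F_\theta(x)/F_\theta(v)\bigr)^{n-1}dx$. The second-price bid is $b^{\rm 2nd}_\theta(v)=v$. The all-pay bid is $b^{\rm all}_\theta(v)=\int_{\underline v_\theta}^{v}x\,dH_\theta(x)=H_\theta(v)\,b^{\rm 1st}_\theta(v)$. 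The third-price bid, where the equilibrium exists, is the standard $b^{\rm 3rd}_\theta(v)=v+\frac{F_\theta(v)}{(n-2)f_\theta(v)}$.

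The key distributional input is that MLRP implies the reverse hazard rate order. For $x<v$ the ratio $F_\theta(x)/F_\theta(v)$ is decreasing in $\theta$, and $f_\theta(v)/F_\theta(v)$ is increasing in $\theta$. I will check this from the MLRP inequality by integrating $f_\theta(v')f_{\theta'}(v)\le f_\theta(v)f_{\theta'}(v')$ over $v'\le x$.

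\emph{First vs.\ second price.} Since $\hat b_\theta^{-1}(b)=b$ for the second-price auction, condition~2 of Lemma~\ref{lem:competition} becomes $b^{\rm 1st}_\theta(v)\ge b^{\rm 1st}_{\theta'}(v)$ for $\theta>\theta'$. This is immediate from the integral formula and the monotonicity of $F_\theta(x)/F_\theta(v)$.

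\emph{Second vs.\ third price.} Here I apply the lemma with $G=G^{\rm 2nd}$ and $\hat G=G^{\rm 3rd}$. The condition is that $(b^{\rm 3rd}_\theta)^{-1}(b)$ is increasing in $\theta$. This follows once $b^{\rm 3rd}_\theta(v)$ is decreasing in $\theta$ for each fixed $v$, which holds because $F_\theta/f_\theta$ is decreasing in $\theta$.

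\emph{First price vs.\ all-pay.} A direct pointwise comparison is awkward, so I will change variables to first-price bids. Since $F_\theta(v)=G^{\rm 1st}_\theta(b^{\rm 1st}_\theta(v))$, we have $b^{\rm all}_\theta\circ(b^{\rm 1st}_\theta)^{-1}(x)=x\,G^{\rm 1st}_\theta(x)^{n-1}=:\Phi_\theta(x)$. Condition~2 of the lemma then says that $\Phi_\theta^{-1}(b)$ is increasing in $\theta$. This holds if $\Phi_\theta(x)$ is decreasing in $\theta$ for each $x$, i.e., if $G^{\rm 1st}_\theta$ increases in first-order stochastic dominance. That in turn follows from $b^{\rm 1st}_\theta(v)$ increasing in $\theta$ (shown above) together with $F_\theta$ increasing in FOSD (implied by MLRP).

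I expect the main obstacle to be bookkeeping rather than ideas. The supports $I_\theta$ vary with $\theta$, so values or bids may lie in one support but not another. The left and right inverse conventions (footnotes~\ref{fn:left-inverse} and~\ref{fn:right-inverse}) then have to be checked at and beyond the endpoints, including bids above $b_\theta(\overline v_\theta)$ and below $b_\theta(\underline v_\theta)$. Flat regions of $\Phi_\theta$ need the same care. I would handle these cases by extending each bidding function constantly outside its support and verifying the inequalities case by case. A secondary point is justifying the third-price formula and its use only where the monotone equilibrium exists, i.e., where $b^{\rm 3rd}_\theta$ is increasing.
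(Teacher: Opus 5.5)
Your proposal is correct and follows the paper's route. Both apply Lemma~\ref{lem:competition} to the closed-form bids, using that $b^{\rm 1st}_\theta(v)$ is increasing in $\theta$, $b^{\rm 2nd}_\theta(v)=v$ is constant, and $b^{\rm 3rd}_\theta(v)$ is decreasing in $\theta$ by the reverse-hazard-rate consequence of the MLRP. For the all-pay case, the paper (via the proof of Proposition~\ref{prop:own bid}) works in quantile space, writing $b^{\rm 1st}_\theta((b^{\rm all}_\theta)^{-1}(b))=b/x_\theta(b)^{n-1}$ with the all-pay quantile $x_\theta(b)$ decreasing in $\theta$; this is the same identity $b=y\,x^{n-1}$ as your $\Phi_\theta$, just parametrized by the first-price bid $y$ rather than the quantile $x$.
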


Proposition~\ref{prop:example} is obtained by applying Lemma~\ref{lem:competition} to the corresponding closed-form bidding functions. Under the second-price auction, the symmetric monotone equilibrium is for each buyer to bid her true value in all states, so $b^{\rm 2nd}_\theta (v) = v$ is \emph{constant} in $\theta$ for all $v$.  In contrast, under the first-price auction, the equilibrium bidding function is given by
\[
b^{\rm 1st}_\theta(v)=\frac{\int_0^{v} w  dF_\theta^{n-1}(w)}{F^{n-1}_\theta(v)}={\mathbb E}_\theta[v^{(2)}| v=v^{(1)}] \leq v,
\]
 where $v^{(1)},\ldots, v^{(n)}$ denote the order statistics of $n$ i.i.d.\ draws from $F_\theta$. Hence, by the MLRP of $(F_\theta)$, $b^{\rm 1st}_\theta(v)$ is \emph{increasing} in $\theta$ for all $v$. Thus, $G^{\rm 1st}$ is more accurate than $G^{\rm 2nd}$ by Lemma~\ref{lem:competition}. 
 
To interpret, note that under the first-price auction, a DM who does not know state $\theta$ cannot in general perfectly back out buyers' values from their bids; in contrast, second-price bids directly reveal buyers' values to the DM. Thus, one might naively expect the second-price auction to be preferable for a DM who seeks to learn the value distribution $F_\theta$. However, this naive intuition neglects the role of the strategic responsiveness effect: While this effect is absent under the second-price auction, under the first-price auction it leads each buyer type to increase her bid with the state. This makes observing first-price bids even more informative of the state/value distribution than directly observing buyers' values.\footnote{In contrast, the comparison is not driven by within-state bid dispersion. If $F_\theta$ is log-concave, then first-price bids are less dispersed than second-price bids, i.e., $b^{\rm 1st}_\theta(v') - b^{\rm 1st}_\theta(v) \leq v' - v$ for all $v' > v$ \citep[e.g.,][Theorem~5]{bagnoli2005}. However, under a variant of the first-price auction where winning payments are scaled down by a constant (i.e., $\psi_1 = \frac{1}{c}b^{(1)}$ for some $c > 1$), equilibrium bids $b_\theta (v) = c b^{\rm 1st}_\theta (v)$ are scaled up accordingly. These scaled bids are informationally equivalent to first-price bids, but for any $v' > v$, their dispersion $c(b^{\rm 1st}_\theta(v') - b^{\rm 1st}_\theta(v))$ can exceed the second-price dispersion for large enough $c$.}

   \begin{figure}[t]
\begin{center}
\begin{subfigure}{0.21\textwidth}
  \centering
  \includegraphics[width=\linewidth]{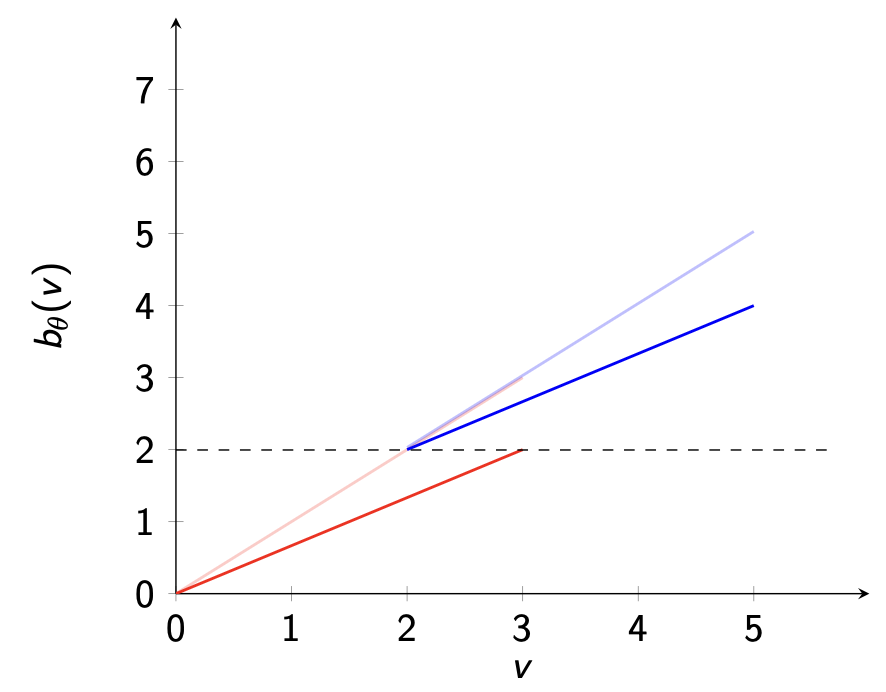}
  \caption*{ \tiny$b_{0}^{{\rm 1st}}(v)=\frac{2}{3}v\in [0,2]$, $b_{2}^{{\rm 1st}}(v)=\frac{2}{3}v+\frac{2}{3} \in [2, 4]$}
\end{subfigure} \quad
\begin{subfigure}{0.21\textwidth}
  \centering
  \includegraphics[width=\linewidth]{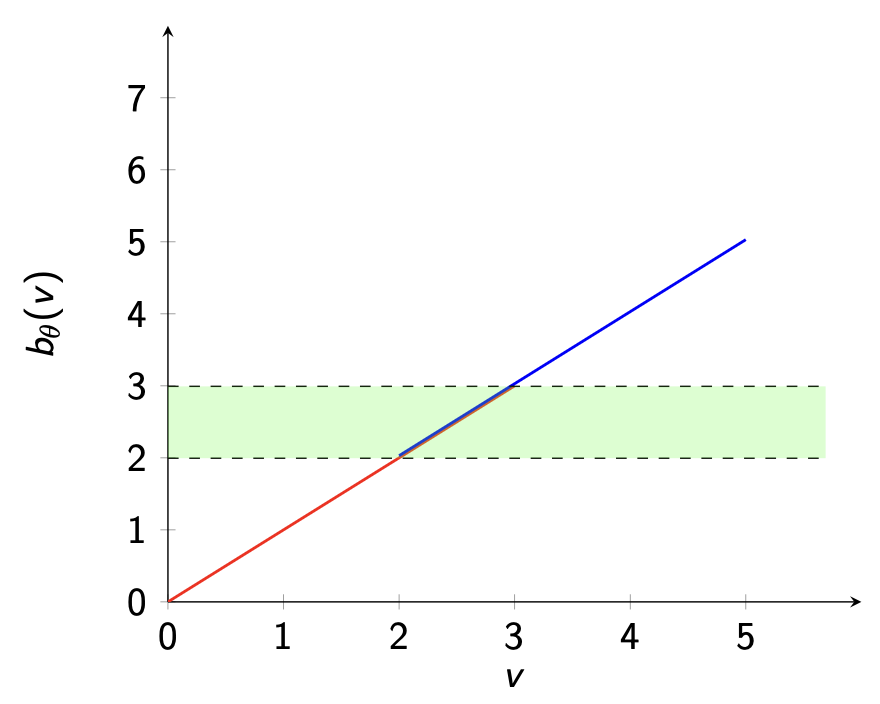}
   \caption*{ \tiny $b_0^{{\rm 2nd}}(v)=v \in [0, 3]$, $b_{2}^{{\rm 2nd}}(v)=v \in [2, 5]$}
\end{subfigure} \quad
\begin{subfigure}{0.21\textwidth}
  \centering
  \includegraphics[width=\linewidth]{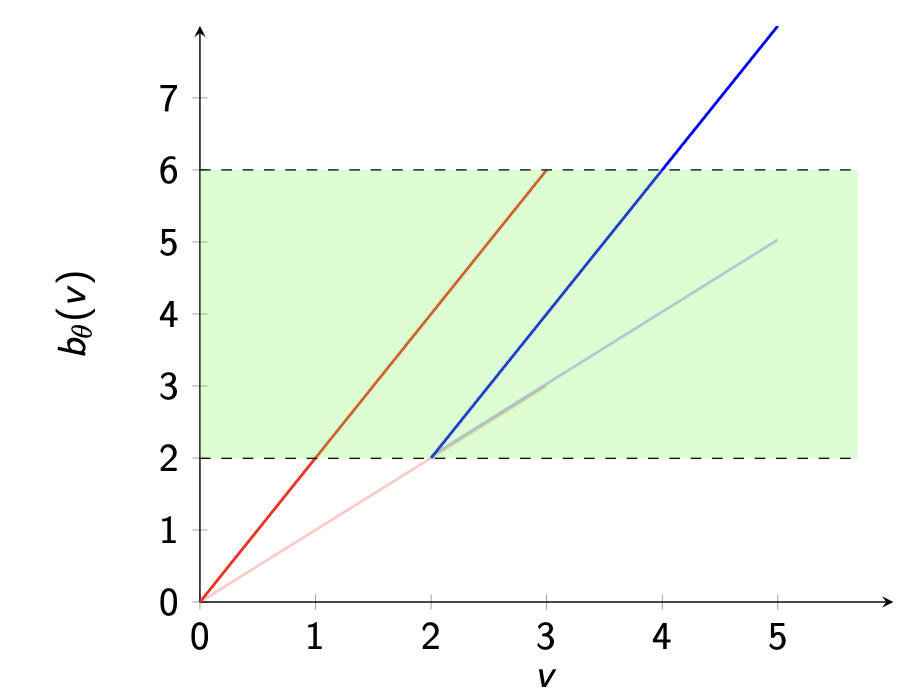}
   \caption*{ \tiny $b_0^{{\rm 3rd}}(v)=2v \in [0, 6]$, $b_{2}^{{\rm 3rd}}(v)=2v - 2\in [2, 8]$}
\end{subfigure} \quad
\begin{subfigure}{0.26\textwidth}
  \centering 
  \includegraphics[width=\linewidth]{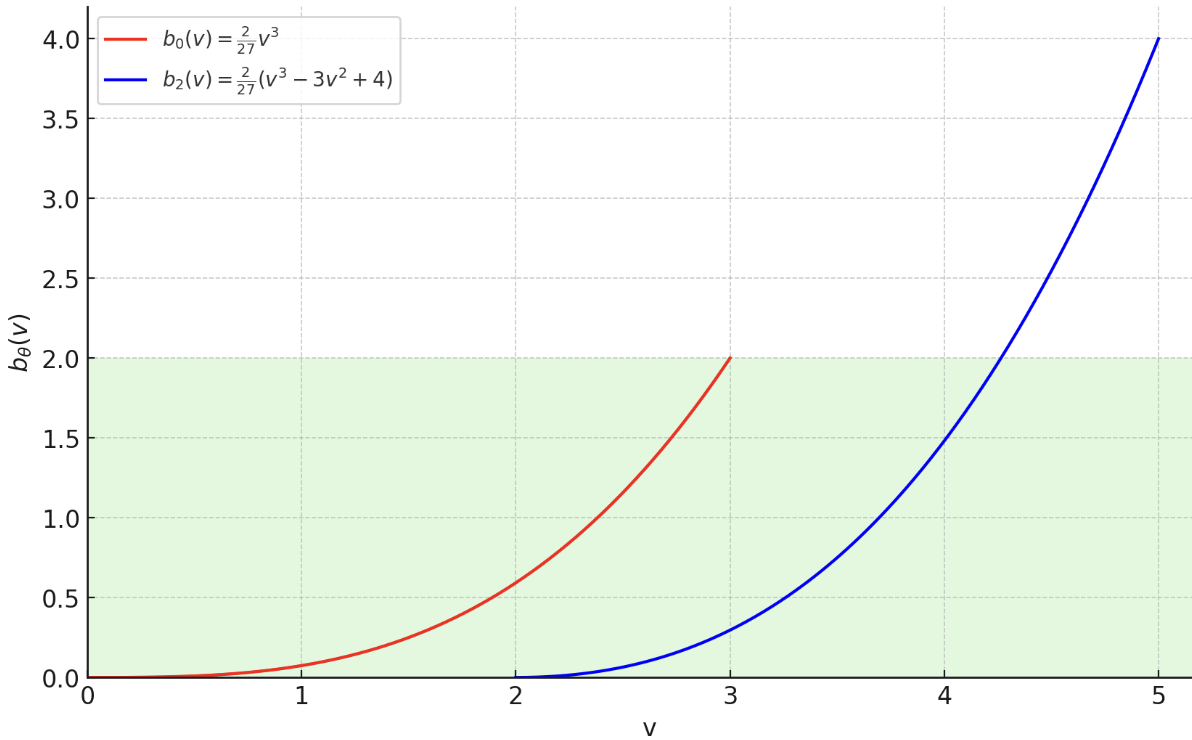}
   \caption*{ {\tiny $b_0^{\rm all}(v)=\frac{2}{27}v^3 \in [0, 2]$, \\$b_{2}^{{\rm all}}(v)=\frac{2}{27} (v^3 - 3v^2 + 4) \in [0,4]$}}
\end{subfigure}
\end{center}

\vspace{-4mm}
\begin{caption}
{\label{fig:uniform} \footnotesize Suppose $n =3$ and $F_\theta$ is the uniform distribution on $[\theta, \theta+3]$ where $\theta \in \Theta=\{0, 2\}$. The bidding functions under first-price, second-price, third-price, and all-pay auctions are shown from left to right in state $0$ (red) and state $2$ (blue).}
\end{caption}
\end{figure}

Figure~\ref{fig:uniform} provides a stark illustration of this comparison when $n =3$ and $F_\theta$ is given by the uniform distribution over $[\theta, \theta+3]$ for $\theta\in \Theta=\{0, 2\}$. Observe that the first-price bid distributions across the two states have non-overlapping supports (except at the single point $b =2$), so $(G^{\rm 1st}_\theta)$ is a \emph{fully informative} experiment. In contrast, under the second-price auction, the bids of types $v\in [2,3]$ provide no information, as illustrated by the green region in the figure, while other types' bids are fully informative. Hence, $G^{\rm 1st}$ strictly Blackwell-dominates $G^{\rm 2nd}$.

Under the third-price auction, the strategic responsiveness effect pushes in the opposite direction. There is a unique symmetric monotone equilibrium if and only if $v + \frac{F_\theta(v)}{(n-2)f_\theta(v)}$ is strictly increasing in $v$ for all $v$;\footnote{A sufficient condition is that each $F_\theta$ is log-concave \citep[e.g.,][]{krishna2009}.} in this case, the equilibrium bidding function is
\[
b^{\rm 3rd}_\theta(v)=v + \frac{F_\theta(v)}{(n-2)f_\theta(v)} \geq v.
\]
By the MLRP of $(F_\theta)$, the reverse hazard rates $\frac{f_\theta (v)}{F_\theta (v)}$ are increasing in $\theta$ \citep[e.g.,][Theorem 1.C.1]{shaked2007}. Hence, 
$b^{\rm 3rd}_\theta (v)$ is \emph{decreasing} in $\theta$ for all $v$. Thus, strategic responsiveness is negative, so $G^{\rm 3rd}$  is less accurate than $G^{\rm 2nd}$ by Lemma~\ref{lem:competition}. While for fixed $v$, third-price bids $ b^{\rm 3rd}_\theta(v)$ display more across-state variation than the constant second-price bids $b^{\rm 2nd}_\theta (v) = v$, this variation shifts the bid distribution $G^{\rm 3rd}_\theta$ in the opposite direction of changes in $F_\theta$, which leads $G^{\rm 3rd}_\theta$ to vary less with $\theta$ than $G^{\rm 2nd}_\theta$. This is again illustrated by the uniform example in Figure~\ref{fig:uniform}: Under the third-price auction, the set of values corresponding to uninformative bids is $[1, 4]$, which is larger than the interval $[2, 3]$ under the second-price auction. Thus, $G^{\rm 3rd}$ is a Blackwell garbling of $G^{\rm 2nd}$.

Finally, equilibrium bids under the all-pay auction are given by
\begin{eqnarray*}
b^{\rm all}_\theta(v)&=&\int_{0}^v wdF^{n-1}_\theta(w) = b^{\rm 1st}_\theta(v) F^{n-1}_\theta(v).
\end{eqnarray*}
That is, each $v$'s bid equals her first-price bid scaled by an adjustment term---$v$'s winning probability $F^{n-1}_\theta(v)$---that is decreasing in $\theta$ for all $v$. Intuitively, the fact that even non-winning bidders have to pay depresses all-pay bids relative to first-price bids, and more so in higher states $\theta$ where the pool of other bidders is more competitive. By Lemma~\ref{lem:competition}, this diminishes the informational content of all-pay bids relative to the first-price auction.\footnote{See the proof of Proposition~\ref{prop:own bid} for the details (Appendix~\ref{app:own bid}).} At the same time, depending on $(F_\theta)$, $b^{\rm all}_\theta (v)$ can be increasing in $\theta$ for large $v$ but decreasing in $\theta$ for small $v$ \citep[e.g., ][]{hopkins2007}. As a result, $G^{\rm all}$ cannot in general be ranked against $G^{\rm 2nd}$, where bids are constant in $\theta$. In Figure~\ref{fig:uniform}, the all-pay bids at values $v \in [0, 4.26]$ are partially informative (while other types' bids are fully revealing), so $G^{\rm all}$ is again a strict Blackwell garbling of $G^{\rm 1st}$. 
Note that, in this example, all-pay bids are decreasing in $\theta$ at all $v \in [2, 3]$ and that, by Lemma~\ref{lem:competition}, $G^{\rm all}$ is less accurate than $G^{\rm 2nd}$.

\begin{rem}[{\bf Open-format auctions}]\label{rem:open-format} Although our analysis focuses on sealed-bid auctions, the two most common
open formats are also no more informative than the first-price auction. In the standard Dutch (descending) auction equilibrium, only the winning bid
$b_\theta^{\rm 1st}(v^{(1)})$ is observed, which is a Blackwell garbling of observing all
first-price bids. In the standard English (ascending) auction equilibrium, the full
sequence of dropout prices reveals $(v^{(2)},\ldots,v^{(n)})$, which is a Blackwell
garbling of observing all second-price bids. \finex
\end{rem}

\subsection{Main Result}\label{sec:main}

We now consider the entire class of standard auctions, with general payment functions $\psi = (\psi_\ell)$ as defined in Section~\ref{sec:model}.
Among such auctions, the first-price auction has two defining features.\footnote{Both features are shared by monotone transformations of the first-price auction (i.e., where $\psi_1$ is some strictly increasing function of $b^{(1)}$), but the bid distributions of such $\psi$ are informationally equivalent to $G^{\rm 1st}$.}
 First, \textbf{\textit{only the winner pays}}: $\psi_\ell(b^{(1)},\ldots, b^{(n)})=0$ for all $\ell\not=1$. 
This feature is shared by all $k$th-price auctions, but not by the all-pay auction. Second, \textbf{\textit{payments depend only on own bids}} holding fixed a buyer's rank: For all $\ell$, the $\ell$th-highest bidder's payment $\psi_\ell(b^{(1)},\ldots, b^{(n)})$ depends only on $b^{(\ell)}$ and not on $b^{(k)}$ for $k\not=\ell$. This feature is shared by the all-pay auction, but not by $k$th-price auctions for $k \neq 1$.

Our main result is that these two features make the first-price auction most accurate among all well-behaved standard auctions:

\begin{thm}\label{thm:main}
For any standard auction $\psi$, $G^{\rm 1st}$ is more accurate than $G^{\psi}$, provided that $(G^{\psi}_\theta)$ satisfies the MLRP.
\end{thm}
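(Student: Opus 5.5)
The plan is to use Lemma~\ref{lem:competition} together with revenue equivalence. By Lemma~\ref{lem:competition}, it suffices to show that for every bid $b$ the quantity
\[
\beta_\theta(b) := b^{\rm 1st}_\theta\big((b^\psi_\theta)^{-1}(b)\big)
\]
is increasing in $\theta$. This is the first-price bid of the type who bids $b$ in the $\psi$-auction at state $\theta$. Write $v_\theta := (b^\psi_\theta)^{-1}(b)$ for this type.

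\textbf{Step 1: a formula for $\beta_\theta(b)$.} Fix $b$ in the interior of the support of $G^\psi_\theta$. Let $P_\theta(b)$ be the expected payment of a bidder who bids $b$ in the $\psi$-auction when the other $n-1$ bids are drawn i.i.d.\ from $G^\psi_\theta$. Revenue equivalence at state $\theta$ says that type $v_\theta$'s interim payment is the same as in the first-price auction. So
\[
P_\theta(b) = b^{\rm 1st}_\theta(v_\theta)\,F^{n-1}_\theta(v_\theta) = \beta_\theta(b)\,G^\psi_\theta(b)^{n-1},
\]
where the last equality uses $F_\theta(v_\theta) = G^\psi_\theta(b)$. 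It follows that
\[
\beta_\theta(b) = \frac{P_\theta(b)}{G^\psi_\theta(b)^{n-1}}.
\]
This is the expected payment per unit of winning probability at bid $b$. Bid distributions are atomless, since bidding functions are strictly increasing and continuous, so ties have probability zero.

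\textbf{Step 2: decompose by rank.} Write $G_\theta = G^\psi_\theta$ and condition on the bidder's rank $\ell$, i.e., on exactly $\ell-1$ opponents bidding above $b$. This gives
\[
\beta_\theta(b)=\sum_{\ell=1}^n \binom{n-1}{\ell-1}\left(\frac{1-G_\theta(b)}{G_\theta(b)}\right)^{\ell-1} \mathbb E_\theta\!\left[\psi_\ell \,\middle|\, \text{own bid } b,\ \text{rank } \ell\right].
\]
In the conditional expectation, the $\ell-1$ higher bids are i.i.d.\ from $G_\theta$ truncated to $(b,\infty)$, and the $n-\ell$ lower bids are i.i.d.\ from $G_\theta$ truncated to $[0,b]$.

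Each summand is increasing in $\theta$, for the following reasons.
\begin{itemize}
\item The MLRP of $(G_\theta)$ implies first-order stochastic dominance. Hence $G_\theta(b)$ is decreasing in $\theta$, and the odds ratio $(1-G_\theta(b))/G_\theta(b)$ is increasing in $\theta$.
\item The MLRP is preserved under truncation to an interval. So both truncated distributions increase in $\theta$ in the first-order stochastic dominance sense.
\item Since $\psi_\ell$ is increasing in every bid and is a function of the order statistics, each conditional expectation is increasing in $\theta$ by a standard coupling of i.i.d.\ samples.
\item All terms are nonnegative because $\psi_\ell \ge 0$. A product of nonnegative increasing factors is increasing, so each summand, and hence $\beta_\theta(b)$, is increasing in $\theta$.
\end{itemize}
This also shows where the two features of the first-price auction come from. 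Under first-price payments, only the $\ell=1$ term is nonzero and it equals exactly $b^{\rm 1st}_\theta(v_\theta)$; the decomposition does not require any direct comparison with the first-price auction.

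\textbf{Step 3: boundary cases (the main obstacle).} I expect the main difficulty to be the cases where $b$ lies outside the interior of the support of $G_\theta$ or $G_{\theta'}$, or where $G_\theta(b)=0$, together with the left-inverse conventions of footnote~\ref{fn:left-inverse}.
\begin{itemize}
\item If $b$ is below the support at $\theta$, then $v_\theta = \underline v_\theta$ and $\beta_\theta(b) = b^{\rm 1st}_\theta(\underline v_\theta) = \underline v_\theta$.
\item If $b$ is above the support, then $v_\theta = \overline v_\theta$ and $\beta_\theta(b)$ equals the upper endpoint of first-price bids.
\end{itemize}
Both endpoints are increasing in $\theta$ by the MLRP of $(F_\theta)$. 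For the mixed cases, where $b$ is interior at one state and outside the support at the other, I would take limits of the interior formula as $b$ approaches the endpoints of the support and use continuity of the bidding functions. A second point needing care is the revenue-equivalence step itself: it requires type $v_\theta$'s payment to be pinned down by the envelope formula with zero payment for the lowest type. This is guaranteed by the normalization $\psi_\ell=0$ when $b^{(\ell)}=0$ together with monotonicity of the equilibrium, which gives the lowest type zero surplus and zero payment.
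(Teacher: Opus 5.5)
Your Steps 1--2 are the paper's argument. The weights $\binom{n-1}{\ell-1}\bigl((1-G^\psi_\theta(b))/G^\psi_\theta(b)\bigr)^{\ell-1}$ are exactly $p_\ell(x)/p_1(x)$ at $x=G^\psi_\theta(b)$, and your two monotonicity bullets are the paper's Claims~1 and~2.

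The gap is in Step~3. Your case analysis handles three situations: $b$ below both supports, $b$ above both supports (each endpoint is increasing in $\theta$), and $b$ interior at one state (limits from the common interior). It misses the case of disjoint supports, $\underline b_\theta:=b^\psi_\theta(\underline v_\theta)\ge \bar b_{\theta'}:=b^\psi_{\theta'}(\overline v_{\theta'})$ for $\theta>\theta'$. The assumptions permit this, e.g., the second-price auction with $F_\theta$ uniform on $[\theta,\theta+1]$ and $\Theta=\{0,5\}$. For $b\in[\bar b_{\theta'},\underline b_\theta]$, the bid lies below the support at $\theta$ and above it at $\theta'$. You therefore need $\underline v_\theta\ge b^{\rm 1st}_{\theta'}(\overline v_{\theta'})$: the \emph{lower} first-price endpoint at the high state must be at least the \emph{upper} one at the low state. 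This cross-comparison does not follow from each endpoint being monotone in $\theta$, and there is no common interior from which to take limits.

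The paper treats this case separately, by applying the revenue-equivalence identity at quantiles $\varepsilon$ (state $\theta$) and $1-\varepsilon$ (state $\theta'$). In your notation:
\begin{itemize}
\item For interior $b$ near $\underline b_\theta$, drop the nonnegative $\ell\ge 2$ terms. Since all lower bids are at least $\underline b_\theta$, this gives $\beta_\theta(b)\ge\psi_1(\underline b_\theta,\ldots,\underline b_\theta)$.
\item For interior $b$ near $\bar b_{\theta'}$ (finite here), the odds ratio tends to $0$, while each conditional payment is at most $\psi_\ell(\bar b_{\theta'},\ldots,\bar b_{\theta'})$. Hence $\limsup\beta_{\theta'}(b)\le\psi_1(\bar b_{\theta'},\ldots,\bar b_{\theta'})$.
\item Since $\bar b_{\theta'}\le\underline b_\theta$ and $\psi_1$ is increasing in bids, letting $b$ tend to the two endpoints yields $\underline v_\theta\ge b^{\rm 1st}_{\theta'}(\overline v_{\theta'})$.
\end{itemize}
This step uses two facts your Step~3 does not invoke: monotonicity of $\psi_1$ across different bid levels, and the vanishing of losers' payments at the top of the support. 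Your Step~3 relies only on monotonicity in $\theta$ at a fixed $b$. With this case added, your proof is complete and coincides with the paper's.
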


Moving beyond the canonical auction formats in Proposition~\ref{prop:example}, Theorem~\ref{thm:main} compares the first-price auction against any standard auction $\psi$ that admits a symmetric monotone equilibrium. (As noted above, in case of multiplicity, the result applies to the bid distributions $G^{\psi}$ under any symmetric monotone equilibrium). Unlike the canonical auctions considered in Proposition~\ref{prop:example}, general auctions $\psi$ need not have closed-form equilibrium bidding functions. To establish a general ranking despite this difficulty, Theorem~\ref{thm:main} imposes some additional structure by requiring that the bid distributions $(G^{\psi}_\theta)$ satisfy the MLRP. Below, we first sketch the argument behind Theorem~\ref{thm:main} (see Appendix~\ref{app:main} for details), highlighting how the two defining features of the first-price auction jointly imply that it generates a stronger strategic responsiveness effect than any other such auction. Remark~\ref{rem:MLRP} then further discusses the MLRP assumption on $(G^{\psi}_\theta)$.

\medskip
\noindent {\bf Illustration of Theorem 1.} While $b^{\psi}_\theta$ need not admit a closed-form solution, the proof of Theorem~\ref{thm:main} invokes revenue equivalence to relate $b^{\psi}_\theta$ and $b^{\rm 1st}_\theta$ and to apply Lemma~\ref{lem:competition} to them. Rather than working with buyers' values, it is convenient to express bids as functions of buyers' value quantiles $x = F_\theta (v)$, which (regardless of the state) constitute $n$ i.i.d.\ draws from $U[0, 1]$. Let $x^{(1)}\geq x^{(2)}\geq \ldots \geq x^{(n)}$ denote the quantile order statistics, and let $p_i(x)$ denote the probability that a realized quantile $x$ is $i$th-highest. Then by standard revenue-equivalence arguments, in each state $\theta$, quantile $x$'s expected payment under the first-price auction is equal to $x$'s expected  payment under $\psi$, i.e.,
\[
\underbrace{p_{1}(x) b^{\rm 1st}_\theta(x)}_{\text{$x$'s expected payment under first-price}}  = \underbrace{\sum_{i=1}^n p_{i}(x){\mathbb E} \left[ \psi_i \left(b_\theta^\psi(x^{(1)}),\ldots, b_\theta^\psi(x^{(n)}) \right) \mid x^{(i)}=x  \right]}_{\text{$x$'s expected payment under $\psi$}}.
\]
Note that under $\psi$, $x$'s expected payment considers each possible rank $i$ of $x$ (weighted by its probability $p_i (x)$) and conditional on this rank, considers $i$'s expected payment $\psi_i$, which may depend on all buyers' bids. Defining $x^b_\theta =  (b^{\psi}_\theta)^{-1} (b)$ to be the quantile corresponding to each bid $b$ under $\psi$, for any interior bid $b$ we can rearrange to obtain 
\begin{equation}\label{eq:proof sketch}
b^{\rm 1st}_\theta \left((b_\theta^\psi)^{-1}(b)\right)= \sum_{i=1}^n \bl{\frac{p_{i}(x^b_\theta)}{p_1(x^b_\theta) }} \red{{\mathbb E} \left[ \psi_i \left(b_\theta^\psi(x^{(1)}),\ldots, b_\theta^\psi(x^{(n)}) \right) \mid  b^\psi_\theta(x^{(i)})=b  \right]}.
   \end{equation}

Thus, to prove that $G^{\rm 1st}$ is more accurate than $G^{\psi}$, it suffices by Lemma~\ref{lem:competition} to show that the right-hand side of (\ref{eq:proof sketch}) is increasing in $\theta$, which captures that the 1st-price auction induces a stronger strategic responsiveness effect than $\psi$. To do so, we prove that both the blue and red terms in (\ref{eq:proof sketch}) are increasing in $\theta$, where each of these terms highlights the role of one of the two defining features of first-price auctions.

First, consider the blue terms $\frac{p_{i}(x^b_\theta)}{p_1(x^b_\theta) }$, i.e., the relative probability that the quantile $x^b_\theta$ corresponding to bid $b$ under $\psi$ is $i$th-highest vs.\ highest. For any rank $i$, this is increasing in $\theta$: First, any fixed bid $b$'s quantile $x^b_\theta$ is decreasing in $\theta$ under $\psi$, because the bid distributions $G^{\psi}_\theta$ are FOSD-increasing in $\theta$ (by the MLRP of $(G^{\psi}_\theta)$); second, $p_i(x)/p_1(x)$ is decreasing in $x$ for every $i$, by a
standard property of rank probabilities in an i.i.d.\ sample.

The blue terms capture the informational advantage of the first-price auction over auctions that violate its first defining feature---that only the winner pays. Indeed, for any auction $\psi$ that shares this feature (e.g., $k$th-price auctions), the blue terms are irrelevant: they do not affect the right-hand side of (\ref{eq:proof sketch}) as $\psi_i \equiv 0$ for all $i \neq 1$.

 Second, consider the red terms ${\mathbb E} \left[ \psi_i \left(b_\theta^\psi(x^{(1)}),\ldots, b_\theta^\psi(x^{(n)}) \right) \mid  b^{\psi}_\theta(x^{(i)})=b  \right]$, i.e., the $i$th-highest bidder's expected payment when her bid is $b$. A complication is that $\psi_i$ may depend on all other buyers' bids, which are in general correlated conditional on bid $b$ being $i$th-highest. However, observe that, by the Markov property of order statistics, the $i-1$ highest bids $\left(b_\theta^\psi(x^{(1)}),\ldots, b_\theta^\psi(x^{(i-1)}) \right)$ are independent of the $n-i$ lowest bids $\left(b_\theta^\psi(x^{(i+1)}),\ldots, b_\theta^\psi(x^{(n)} )\right)$ conditional on bid $b$ being $i$th-highest. Conditional on $b^{(i)}=b$, the former have the same distribution as the order statistics of $i-1$ i.i.d.\ draws from the bid distribution $G^{\psi}_\theta$ truncated below at $b$, while the latter have the same distribution as the order statistics of $n-i$ i.i.d.\ draws from $G^{\psi}_\theta$ truncated above at $b$. Moreover, by the MLRP of $(G^{\psi}_\theta)$,  both these truncated distributions are FOSD-increasing in $\theta$ for all $b$. Since the payment function $\psi_i$ is increasing, this implies that its expectation conditional on $b^{(i)} = b$ is increasing in $\theta$.
 
%

The red terms capture the informational advantage of the first-price auction over auctions that violate its second key feature---that all payment functions $\psi_{i}$ depend only on own bids $b^{(i)}$. Indeed, for any other auction $\psi$ that shares this feature (e.g., all-pay), the red terms reduce to $\psi_i (b)$ and hence are constant in $\theta$.

\begin{rem}[{\bf MLRP of $G^{\psi}$}]\label{rem:MLRP}
As we illustrated above, the proof of Theorem~\ref{thm:main} uses the MLRP assumption on $(G^{\psi}_\theta)$ to show that the red terms in (\ref{eq:proof sketch}) (i.e., the $i$th-highest bidder's expected payment conditional on her bid taking any value $b$) are always increasing in $\theta$.\footnote{We also used the MLRP assumption to show that the blue terms are increasing in $\theta$, but this step only required the less demanding condition that $(G^{\psi}_\theta)$ is FOSD-increasing in $\theta$.} However, we emphasize that the MLRP assumption on $(G^{\psi}_\theta)$ is not needed when comparing the first-price auction against many specific auctions $\psi$. In particular, we did not impose this assumption for the comparison against the canonical auctions in Proposition~\ref{prop:example}. It can also be dropped for any auctions $\psi$ where each $\psi_i$ depends only on $b^{(i)}$ (e.g., the class of auctions in Proposition~\ref{prop:own bid} below), and can be relaxed to reverse-hazard rate dominance when each $\psi_i$ depends only on lower-ranked bids $b^{(i)},\ldots, b^{(n)}$ (e.g., all $k$th-price auctions). 

At the same time, the MLRP of $(G^{\psi}_\theta)$ is implied by the MLRP of the value distributions $(F_\theta)$ in various important settings. This is immediate for second-price auctions, where $G^{\rm 2nd}_\theta = F_\theta$. In addition, Appendix~\ref{app:MLRP-parametric} exhibits a natural parametric family of value distributions where the MLRP of $(G^{\psi}_\theta)$ holds for several canonical auctions $\psi$. Moreover, Appendix~\ref{app:MLRP-large-n} shows that, under general value distributions $(F_\theta)$ satisfying some technical regularity conditions, the bid distributions under any $k$th-price auction satisfy the MLRP for all sufficiently large numbers of bidders.\footnote{While Appendix~\ref{app:MLRP-parametric} also exhibits examples of auctions $\psi$ for which $(G^{\psi}_\theta)$ violates the MLRP, it is worth noting that, even in these cases, $G^{\psi}$ cannot be strictly more accurate than $G^{\rm 1st}$. Indeed, Online Appendix~\ref{app:undominated} shows that for any value distributions $(F_\theta)$ and any standard auction $\psi$ that admits a unique symmetric monotone equilibrium, $G^{\psi}$ is never strictly more accurate than $G^{\rm 1st}$. Thus, subject to equilibrium uniqueness, $G^{\rm 1st}$ is undominated in terms of accuracy even when the MLRP assumption is dropped.}
\finex
\end{rem}


\subsection{Ranking Other Auctions}\label{sec:subclass}

To shed further light on the two defining features of the first-price auction, the following results establish accuracy rankings among two subclasses of auctions. Each subclass preserves one of the two features of the first-price auction, so the ranking within that subclass isolates the effect of violating the other feature.

First, consider $k$th-price auctions, which preserve the feature that only the winner pays: 
\begin{prop}\label{prop:k} For any $k = 1, \ldots, n-1$, the bid distribution $G^{k{\rm th}}$ under the $k$th-price auction is more accurate than $G^{(k+1){\rm th}}$, provided that $(G^{(k+1){\rm th}}_\theta)$ satisfies the MLRP.  
\end{prop}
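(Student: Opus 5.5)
The plan is to apply Lemma~\ref{lem:competition} with $b_\theta = b^{k{\rm th}}_\theta$ and $\hat b_\theta = b^{(k+1){\rm th}}_\theta$. So it suffices to show that, for every bid $b$, the quantity $b^{k{\rm th}}_\theta\big((b^{(k+1){\rm th}}_\theta)^{-1}(b)\big)$ is increasing in $\theta$. As in the sketch of Theorem~\ref{thm:main}, I will work in quantiles $x = F_\theta(v)$, so that buyers' quantiles are $n$ i.i.d.\ $U[0,1]$ draws regardless of $\theta$. Write $\beta^{k}_\theta$ and $\beta^{k+1}_\theta$ for the two bidding functions as functions of quantiles. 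The goal is an explicit representation of $\beta^k_\theta$ in terms of $\beta^{k+1}_\theta$, which avoids the need for closed forms.

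The first step is revenue equivalence. Under both auctions only the winner pays, so quantile $x$'s expected payment is $p_1(x)\,\mathbb E[\beta^{k}_\theta(x^{(k)})\mid x^{(1)}=x]$ under the $k$th-price auction and $p_1(x)\,\mathbb E[\beta^{k+1}_\theta(x^{(k+1)})\mid x^{(1)}=x]$ under the $(k+1)$th-price auction. Both equal the first-price expected payment $p_1(x)b^{\rm 1st}_\theta(x)$. Hence, for all $x$,
\[
\mathbb E\big[\beta^{k}_\theta(x^{(k)})\mid x^{(1)}=x\big]=\mathbb E\big[\beta^{k+1}_\theta(x^{(k+1)})\mid x^{(1)}=x\big].
\]
Next I use the Markov property of order statistics. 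Conditional on $x^{(k)}=y$, the variable $x^{(k+1)}$ is distributed as the maximum of $n-k$ i.i.d.\ uniform draws on $[0,y]$, independently of the higher order statistics. Define the candidate
\[
\tilde\beta_\theta(y):=\mathbb E\Big[\beta^{k+1}_\theta\big(\max\{U_1,\ldots,U_{n-k}\}\big)\ \Big|\ U_j\le y \ \forall j\Big],
\]
with $U_j$ i.i.d.\ uniform. By the tower property, $\tilde\beta_\theta$ satisfies the same identity as $\beta^k_\theta$.

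The second step, which I expect to be the main obstacle, is identification: showing that $\beta^k_\theta=\tilde\beta_\theta$. The conditional density of $x^{(k)}$ given $x^{(1)}=x$ is explicit, proportional to $y^{n-k}(x-y)^{k-2}$ on $[0,x]$. So the identity $\int_0^x \beta(y)\,y^{n-k}(x-y)^{k-2}\,dy = m(x)\,x^{n-2}$, holding for all $x$, is an Abel/Volterra convolution equation in $\beta(y)y^{n-k}$. I would argue that its solution is unique among bounded measurable $\beta$, for instance via Laplace transforms or by differentiating $k-2$ times to reach a first-kind Volterra equation with a nonvanishing kernel on the diagonal. The case $k=1$ is different, since then the equation is $\beta(x)=m(x)$ directly and uniqueness is trivial. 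I would also check that continuity and monotonicity of the equilibrium bids make this well posed, and that the boundary bids (those outside the interior of the support) can be handled by the inverse conventions in footnotes~\ref{fn:left-inverse} and~\ref{fn:right-inverse}.

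The final step is to translate back into bids. Fix a bid $b$ and let $x^b_\theta=(\beta^{k+1}_\theta)^{-1}(b)$. Then $\beta^{k+1}_\theta$ evaluated at uniforms truncated to $[0,x^b_\theta]$ has the law of $G^{(k+1){\rm th}}_\theta$ truncated above at $b$. It follows that $b^{k{\rm th}}_\theta\big((b^{(k+1){\rm th}}_\theta)^{-1}(b)\big)$ equals the expectation of the maximum of $n-k$ i.i.d.\ draws from $G^{(k+1){\rm th}}_\theta$ conditioned to lie below $b$. By the MLRP of $(G^{(k+1){\rm th}}_\theta)$, these upper-truncated distributions are FOSD-increasing in $\theta$. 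Hence so is the law of their maximum, and its expectation is increasing in $\theta$. Lemma~\ref{lem:competition} then gives the result. This argument only needs reverse-hazard-rate dominance, consistent with Remark~\ref{rem:MLRP}.
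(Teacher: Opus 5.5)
Your proof is correct. Its outer steps match the paper's. The paper proves a slightly more general claim, with the winner paying $\sum_{i\geq k}\beta_i b^{(i)}$; your case is $\beta_{k+1}=1$ and all other $\beta_i=0$. It builds exactly your candidate $\tilde\beta_\theta(x)=\mathbb E[\beta^{k+1}_\theta(x^{(k+1)})\mid x^{(k)}=x]$ from revenue equivalence with the first-price auction, the tower property, and the Markov property of order statistics. It then finishes with the same upper-truncation argument used in Theorem~\ref{thm:main}.

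The genuine difference is the identification step $\beta^k_\theta=\tilde\beta_\theta$. The paper argues that $\tilde\beta_\theta$ is continuous, strictly increasing, and gives every quantile its first-price expected payment, so it is itself a symmetric monotone equilibrium of the $k$th-price auction. It then appeals to the known uniqueness of that equilibrium (Monderer and Tennenholtz). You instead show that the linear map $\beta\mapsto\mathbb E[\beta(x^{(k)})\mid x^{(1)}=\cdot\,]$ is injective. With $g(y)=\beta(y)y^{n-k}$, the identity pins down the $(k-1)$-fold iterated integral of $g$, and $k-1$ differentiations recover $g$ almost everywhere.

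What each route buys:
\begin{itemize}
\item Your route is self-contained. It needs neither a verification of global incentive constraints for the candidate nor an external uniqueness theorem, and it yields uniqueness of the symmetric monotone $k$th-price equilibrium as a byproduct.
\item The paper's route additionally delivers \emph{existence} of a $k$th-price equilibrium whenever the $(k+1)$th-price auction has one. You take existence as given, which is harmless under the paper's standing restriction to auctions that admit equilibria.
\end{itemize}

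Two small fixes are needed. First, the normalizing power in your integral identity is $x^{n-1}$ rather than $x^{n-2}$; this can be absorbed into $m$. Second, you should not restrict to bounded $\beta$, because for $k\geq 3$ equilibrium bids can diverge at the top quantile. For example, $v+F_\theta(v)/((n-2)f_\theta(v))$ diverges when $f_\theta(v)\to 0$ as $v\to\overline v_\theta$. Injectivity only uses integrability on each $[0,x]$ with $x<1$, which continuity on $[0,1)$ provides. Continuity of both functions then upgrades a.e.\ equality to equality.
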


Proposition~\ref{prop:k} extends the ranking among $G^{1{\rm st}}$, $G^{2{\rm nd}}$, and  $G^{3{\rm rd}}$ in Proposition~\ref{prop:example}. While all $k$th-price auctions with $k \geq 2$ violate the feature that the winner's payment depends only on her own bid, the resulting loss of informativeness is more severe when $\psi_1$ is determined by a lower-ranked bid, i.e., when $k$ is larger. In particular, $G^{n{\rm th}}$ is least accurate within this class. To prove Proposition~\ref{prop:k} (Appendix~\ref{app:k}), we do not work with explicit bidding functions, which are difficult to analyze for $k > 3$. Instead, we exploit the recursion
\[
b^{k{\rm th}}_\theta (v) ={\mathbb E}_\theta[b^{(k+1){\rm th}}_\theta(v^{(k+1)})|v^{(k)}=v]
\]
that relates $k$th-price and $(k+1)$th-price bidding functions.


Second, consider \textit{\textbf{own-pay auctions}}, which preserve the first-price auction's feature that payments depend only on own bids: There exists a nonzero vector $\alpha \in \mathbb{R}^n_+$ such that $\psi_\ell (b^{(1)},\ldots, b^{(n)})=\alpha_\ell b^{(\ell)}$ for all $\ell$, i.e., the $\ell$th-highest bidder pays a multiple $\alpha_\ell$ of her own bid.\footnote{This class of auctions is also studied in the context of contests or rent-seeking \citep[e.g.,][]{baye2012}.} This class nests the first-price auction ($\alpha_1=1$ and $\alpha_\ell=0$ for all $\ell \neq 1$) and the all-pay auction ($\alpha_\ell=1$ for all $\ell$). The own-pay auction induced by any nonzero vector $\alpha$ admits a unique symmetric monotone equilibrium, whose bid distributions we denote by $G^\alpha$.

\begin{prop}\label{prop:own bid}
 Take any nonzero $\alpha, \alpha' \in \mathbb R_+^n$ such that $\alpha_j'\alpha_i\geq\alpha_i'\alpha_j$ for all $i,j$ with $i<j$. Then $G^\alpha$ is more accurate than $G^{\alpha'}$.  \end{prop}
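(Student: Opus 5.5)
The plan is to work in quantile space, as in the illustration of Theorem~\ref{thm:main}, and to use revenue equivalence to write both $b^\alpha_\theta$ and $b^{\alpha'}_\theta$ in closed form. No MLRP assumption on the bid distributions will be needed.

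\textbf{Step 1: closed forms.} Write bids as functions of the value quantile $x=F_\theta(v)$. Let $p_i(x)=\binom{n-1}{i-1}x^{n-i}(1-x)^{i-1}$ be the probability that quantile $x$ is $i$th-highest, and set $A_\alpha(x):=\sum_i \alpha_i p_i(x)$. Under an own-pay auction, quantile $x$ pays $\alpha_i b^\alpha_\theta(x)$ when ranked $i$th, so its expected payment is $A_\alpha(x)\,b^\alpha_\theta(x)$. Revenue equivalence then gives
\[
A_\alpha(x)\, b^\alpha_\theta(x) \;=\; R_\theta(x) \;:=\; p_1(x)\, b^{\rm 1st}_\theta(x) \;=\; \int_0^x F_\theta^{-1}(t)\, d\bigl(t^{n-1}\bigr).
\]
Hence $b^\alpha_\theta(x)=R_\theta(x)/A_\alpha(x)$ and similarly $b^{\alpha'}_\theta(x)=R_\theta(x)/A_{\alpha'}(x)$. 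Dividing,
\[
b^\alpha_\theta(x)=b^{\alpha'}_\theta(x)\cdot \frac{A_{\alpha'}(x)}{A_\alpha(x)}.
\]

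\textbf{Step 2: reduction via Lemma~\ref{lem:competition}.} By Lemma~\ref{lem:competition}, applied with bids indexed by quantiles (an order-preserving reparametrization of values), it suffices to show the following. For each bid $b$ in the interior of the range, let $x_\theta$ be the quantile at which $b^{\alpha'}_\theta(x_\theta)=b$. Then $b\cdot A_{\alpha'}(x_\theta)/A_\alpha(x_\theta)$ must be increasing in $\theta$. I would split this into two monotonicity facts.

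\begin{enumerate}
\item \emph{$x_\theta$ is decreasing in $\theta$.} By the MLRP, hence FOSD, of $(F_\theta)$, the quantile function $F^{-1}_\theta(t)$ is increasing in $\theta$. So $R_\theta(x)$, and therefore $b^{\alpha'}_\theta(x)$ at each fixed $x$, is increasing in $\theta$. Since $b^{\alpha'}_\theta$ is strictly increasing in $x$, the quantile $x_\theta$ attaining a fixed $b$ falls as $\theta$ rises.
\item \emph{$A_{\alpha'}(x)/A_\alpha(x)$ is decreasing in $x$.} For $i<j$, the ratio $p_i(x)/p_j(x)$ is increasing in $x$: a higher quantile is more likely to be highly ranked. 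The hypothesis $\alpha'_j\alpha_i\ge \alpha'_i\alpha_j$ for $i<j$ says that $\alpha'_i/\alpha_i$ is increasing in the rank index $i$. I would show $\frac{d}{dx}\bigl(A_{\alpha'}/A_\alpha\bigr)\le 0$ by writing the numerator of the derivative as a symmetrized double sum
\[
\sum_{i<j}(\alpha'_i\alpha_j-\alpha'_j\alpha_i)\bigl(p_i'(x)p_j(x)-p_i(x)p_j'(x)\bigr).
\]
Each term is a nonpositive factor times the nonnegative factor $p_j^2\,(p_i/p_j)'$, so the sum is $\le 0$. Equivalently, $A_{\alpha'}/A_\alpha$ is a weighted average of $\alpha'_i/\alpha_i$ with weights $\propto \alpha_i p_i(x)$, and these weights shift toward small $i$ as $x$ rises. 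Combining the two facts, $A_{\alpha'}(x_\theta)/A_\alpha(x_\theta)$ is increasing in $\theta$, which is what Step 2 requires.
\end{enumerate}

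\textbf{Step 3: boundary cases.} The main obstacle is bookkeeping rather than the core inequality. Some $\alpha_i$ may be zero, making a ratio $\alpha'_i/\alpha_i$ undefined. The cross-product form of the derivative in item 2 handles this cleanly, so I would use it rather than the averaging interpretation. Moreover, $A_\alpha$ may vanish at the endpoints $x\in\{0,1\}$; for instance, the first-price case has $A_\alpha(0)=0$. Finally, the bids $b$ in Lemma~\ref{lem:competition} may lie outside the support of $G^{\alpha'}_\theta$ in some states. For these, I would use the left-inverse conventions of footnote~\ref{fn:left-inverse} to check that the composed map is constant at the bottom of the support, $b^{\alpha}_\theta(\overline v_\theta)$ at the top, and monotone across states by continuity and the interior argument.
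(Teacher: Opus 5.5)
Your proposal is correct and follows the paper's proof essentially step for step. Both use the same revenue-equivalence closed form $b^\alpha_\theta=b^{\alpha'}_\theta\,A_{\alpha'}/A_\alpha$, the same reduction via Lemma~\ref{lem:competition}, and the same two monotonicity facts: the quantile of a fixed bid is decreasing in $\theta$, and $A_{\alpha'}/A_\alpha$ is decreasing in $x$ (Lemma~\ref{lem:alpha}, which you prove with a derivative version of the paper's symmetrized double-sum cross-multiplication), followed by the same continuity treatment of boundary bids.
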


Thus, $G^\alpha$ is more accurate if higher-ranked bidders pay a larger multiple of their bids relative to lower-ranked bidders---that is, the closer the auction comes to satisfying the first-price auction's feature that only the winner pays. 
Consequently, the least accurate own-pay auction is the \textit{\textbf{loser-pay}} auction \citep{riley1981,krishna2009}, where $\alpha_n=1$ and $\alpha_\ell=0$ for all $\ell\neq n$. Proposition~\ref{prop:own bid} does not require the MLRP of the dominated bid distributions, as the analogs of the red terms in (\ref{eq:proof sketch}) are always constant in $\theta$.


\begin{rem}[{\bf War of attrition}]\label{rem:WOA}
Whereas the above rankings isolate the effect of violating one of the features of first-price auctions, one can also analyze the effect of combining two violations. 
As a canonical example, consider the \textbf{\textit{war-of-attrition}} auction \citep[e.g.,][]{krishna1997}, whose payment functions are given by $\psi_1 (b^{(1)},\ldots, b^{(n)})= b^{(2)}$ and $\psi_\ell (b^{(1)},\ldots, b^{(n)}) = b^{(\ell)}$ for all $\ell \geq 2$. This auction combines two departures from the first-price auction: as in the second-price auction, the winner pays the second-highest rather than the highest bid; and as in the all-pay auction, all non-winning bidders pay their own bids. The following result shows that combining these two departures reduces accuracy more than either departure in isolation. We denote by $G^{\rm woa}$ the bid distributions in the unique symmetric monotone equilibrium of the war-of-attrition auction.
\begin{prop}\label{prop:WOA}
Both $G^{\rm 2nd}$ and $G^{\rm all}$ are more accurate than $G^{\rm woa}$. \finex
\end{prop}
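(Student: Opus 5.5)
The plan is to apply Lemma~\ref{lem:competition} twice, with $\hat b_\theta=b^{\rm woa}_\theta$ in both cases, once against $b^{\rm 2nd}_\theta$ and once against $b^{\rm all}_\theta$. The first step is to derive the war-of-attrition bidding function by revenue equivalence. Let $H_\theta:=F_\theta^{n-1}$ be the cdf of the highest opponent value. A bidder with value $v$ bidding $b(v)$ pays $\int b(y)\,dH_\theta(y)$ over $y<v$ when she wins, and $(1-H_\theta(v))\,b(v)$ when she loses. This total must equal the all-pay interim payment $\int_{\underline v_\theta}^v w\,dH_\theta(w)$. Differentiating gives $b'(v)(1-H_\theta(v))=vH_\theta'(v)$, so
\[
b^{\rm woa}_\theta(v)=\int_{\underline v_\theta}^{v}\frac{w\,dH_\theta(w)}{1-H_\theta(w)}
=\int_0^{H_\theta(v)}\frac{H_\theta^{-1}(t)}{1-t}\,dt ,
\]
where the second form uses the change of variables to quantiles $t=H_\theta(w)$.

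\emph{Comparison with the second-price auction.} Because $b^{\rm 2nd}_\theta$ is the identity, condition 2 of Lemma~\ref{lem:competition} reduces to requiring that $(b^{\rm woa}_\theta)^{-1}(b)$ be increasing in $\theta$. It therefore suffices to show that $b^{\rm woa}_\theta(v)$ is decreasing in $\theta$ for each fixed $v$. Write $b^{\rm woa}_\theta(v)=\int w\,h_\theta(w)\,dw$, where $h_\theta=H_\theta'/(1-H_\theta)$ is the hazard rate of $H_\theta$. The MLRP of $(F_\theta)$ implies the MLRP of $(H_\theta)$, because the density ratio $F_\theta^{n-2}f_\theta/F_{\theta'}^{n-2}f_{\theta'}$ is a product of increasing functions of $v$ ($F_\theta/F_{\theta'}$ is increasing under MLRP). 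MLRP in turn implies hazard-rate dominance, so $h_\theta(w)$ is decreasing in $\theta$ and the claim follows. Some boundary bookkeeping is needed, since the lower endpoints $\underline v_\theta$ move with $\theta$ and bids outside the support must be handled through the left-inverse convention. I will check these cases directly and expect them to be routine.

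\emph{Comparison with the all-pay auction.} Let $y_\theta(b):=H_\theta((b^{\rm woa}_\theta)^{-1}(b))$ be the opponent-quantile of a war-of-attrition bid $b$, and set $A_\theta(b):=b^{\rm all}_\theta((b^{\rm woa}_\theta)^{-1}(b))$. On the interior we have $db^{\rm all}_\theta=v\,dH_\theta$ and $db^{\rm woa}_\theta=v\,dH_\theta/(1-H_\theta)$. Hence
\[
\frac{dA_\theta}{db}=1-y_\theta(b), \qquad\text{so}\qquad A_\theta(b)=\int_0^b\bigl(1-y_\theta(s)\bigr)\,ds .
\]
By Lemma~\ref{lem:competition}, it then suffices to show that $y_\theta(s)$ is decreasing in $\theta$ for every $s$. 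The quantile representation above delivers this. For fixed $y$, the war-of-attrition bid $\int_0^y H_\theta^{-1}(t)/(1-t)\,dt$ is increasing in $\theta$, because $H_\theta^{-1}(t)$ is increasing in $\theta$ by first-order stochastic dominance, which is implied by MLRP. Since the bid is strictly increasing in $y$, the quantile $y_\theta(s)$ that achieves a fixed bid $s$ must fall as $\theta$ rises.

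I expect the main obstacle to be this last comparison. The difficulty is that $b^{\rm woa}_\theta(v)$ is decreasing in $\theta$ when the value is held fixed, yet the argument needs monotonicity when the quantile is held fixed, so switching to quantile coordinates is essential. Care is also needed for bids $b$ above the top of the support in some state. There $y_\theta=1$ and $A_\theta$ is constant at the top all-pay bid, and I must verify that the inequality in Lemma~\ref{lem:competition} survives the left- and right-inverse conventions in footnotes~\ref{fn:left-inverse} and~\ref{fn:right-inverse}. This should follow from the same monotonicity of $y_\theta$ extended by continuity.
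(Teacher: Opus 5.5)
Your proposal is correct and follows the paper's proof essentially step for step. For the second-price comparison, you use hazard-rate monotonicity of $H_\theta=F_\theta^{n-1}$ to show that $b^{\rm woa}_\theta(v)$ is decreasing in $\theta$. For the all-pay comparison, you use the identity $b^{\rm all}_\theta\bigl((b^{\rm woa}_\theta)^{-1}(b)\bigr)=\int_0^b\bigl(1-y_\theta(s)\bigr)\,ds$ together with monotonicity of the war-of-attrition bid in quantile coordinates; your opponent quantile $y$ is just the paper's $x^{n-1}$.

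The top-of-support case you worry about never arises. Because $\int_0^1 H_\theta^{-1}(t)/(1-t)\,dt=\infty$, war-of-attrition bids fill $[0,\infty)$ in every state.
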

\end{rem}

\section{Payoff Implications}\label{sec:payoff}

To illustrate the payoff implications of our finding that first-price auctions are most accurate, we now explicitly incorporate a second stage, where the DM uses the information about $\theta$ gleaned from buyers' auction bids in a decision problem. Specifically, suppose $\Theta$ is compact and that after observing buyers' bids, the DM faces a \textbf{\textit{monotone decision problem}} ${\cal D}=(A, u)$ \`a la \cite{karlin1956}. Here, $A\subseteq\mathbb R$ is an action set that is either finite or a compact interval, and $u: A\times\Theta\to\mathbb R$ is a continuous utility function that is single-peaked in $a$ at all $\theta$, with a maximizer $a^* (\theta) = \argmax_{a \in A} u(a, \theta)$ that is increasing in $\theta$.

Let $U({\cal D}, G^\psi)$ denote the DM's expected payoff to choosing optimally at ${\cal D}$ when prior to his choice, he observes the $n$ buyers' bids $b_1,\ldots, b_n$ drawn from the equilibrium bid distributions $G^\psi$ of auction $\psi$. That is, 
\[
U({\cal D}, G^\psi)=\max_{\sigma: {\mathbb R}^n_+ \to A}{\mathbb E}[u(\sigma(b_1,\ldots, b_n),\theta)],
\] 
where the expectation is with respect to the state $\theta$ (which is drawn from the prior $q_0 \in \Delta(\Theta)$) and the bids $b_1,\ldots, b_n$ (which are i.i.d.\ draws from $G_\theta^\psi$ conditional on each $\theta$). 

\begin{cor}\label{cor:lehmann}
Consider any monotone decision problem $\cal D$. Suppose that $G^\psi$ is more accurate than $G^{\psi'}$, and $(G^{\psi'}_\theta)$ satisfies the MLRP. Then $U({\cal D}, G^\psi)\geq U({\cal D}, G^{\psi'})$. In particular, $U ({\cal D}, G^{\rm 1st}) \geq U ({\cal D}, G^{\psi'})$.
\end{cor}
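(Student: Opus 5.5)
We will derive Corollary~\ref{cor:lehmann} from \cites{lehmann1988} theorem, applied to the product experiments generated by $n$ i.i.d.\ bids. The single-signal version of that theorem says the following. Let $\hat G$ satisfy the MLRP, and let $G$ be more accurate than $\hat G$. Then in every monotone decision problem $\cal D$, the optimal payoff from observing one draw of $G_\theta$ is at least the optimal payoff from observing one draw of $\hat G_\theta$.

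The substance is to extend this to $n$ independent draws. We will not redo Lehmann's argument; instead we reduce the comparison to a garbling construction, following Lehmann's own proof. Let $T_\theta(b):=G^{-1}_\theta(\hat G_\theta(b))$, which by hypothesis is increasing in $\theta$. Take any optimal decision rule $\hat\sigma(\hat b_1,\ldots,\hat b_n)$ for $\hat G$. Since $(\hat G_\theta)$ satisfies the MLRP, so does the joint likelihood $\prod_i \hat g_\theta(\hat b_i)$ in each coordinate. Hence, by Karlin--Rubin-type monotone comparative statics for single-peaked $u$ with increasing $a^*$, $\hat\sigma$ can be chosen increasing in every coordinate.

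Next, build a rule $\sigma$ under $G$ that does weakly better in each state. After randomizing over ties and atoms, each observed $b_i$ under $G$ is mapped to the quantile $G_{\theta}(b_i)$. The difficulty is that this quantile depends on $\theta$, which is unknown. Lehmann's argument handles this by comparing, state by state, the risk functions of monotone rules. We will follow the same route and show that for each $\theta$, the distribution of $\sigma(b)$ under $G_\theta$ is ``closer'' to $a^*(\theta)$ than the distribution of $\hat\sigma(\hat b)$ under $\hat G_\theta$. Concretely, $\sigma$ is defined by inverting the quantile transformation at a reference state, coordinate by coordinate. Monotonicity of $T_\theta$ in $\theta$ then gives two properties:
\begin{itemize}
\item for states above the reference state, $\sigma$'s action stochastically dominates $\hat\sigma$'s;
\item for states below it, $\sigma$'s action is stochastically dominated by $\hat\sigma$'s.
\end{itemize}
Single-peakedness turns this into higher expected utility state by state once the reference state is chosen to match $\hat\sigma$'s switching points. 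This multi-coordinate step, running Lehmann's argument with a monotone rule in $\mathbb R^n$ rather than $\mathbb R$, is the main obstacle.

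Rather than re-prove it, we plan to use the paper's Online Appendix~\ref{app:general lehmann}, which provides the Lehmann payoff result for non-i.i.d.\ multi-dimensional signals (in the spirit of \cite{di2021, kim2023}), and verify its hypotheses. The coordinatewise accuracy comparison is exactly our assumption, and the MLRP of the dominated experiment is assumed, so the hypotheses hold. Compactness of $\Theta$ and $A$ together with continuity of $u$ guarantee that optimal rules exist, so $U(\cdot)$ is a well-defined maximum. This gives $U({\cal D},G^\psi)\ge U({\cal D},G^{\psi'})$.

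The final claim is then immediate. By Theorem~\ref{thm:main}, $G^{\rm 1st}$ is more accurate than any $G^{\psi'}$ satisfying the MLRP, so taking $\psi$ to be the first-price auction in the first part yields $U({\cal D},G^{\rm 1st})\ge U({\cal D},G^{\psi'})$.
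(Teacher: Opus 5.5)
Your proof is correct, but it takes a different route from the paper's. The paper treats the corollary as an immediate consequence of \cite{lehmann1988} (Theorem~7.1), which already covers any fixed number of i.i.d.\ draws. It adds the remark, via \cite{quah2009}, that only the MLRP of the dominated experiment is needed. So the extension to $n$ draws that you identify as the main obstacle is simply absorbed into the citation.

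You instead specialize the paper's own multidimensional result, Theorem~\ref{thm:multidimension}, to the coordinatewise family $\phi_\theta(\hat b_1,\ldots,\hat b_n)=(T_\theta(\hat b_1),\ldots,T_\theta(\hat b_n))$. This is essentially Corollary~\ref{cor:partial} with $K=\{1,\ldots,n\}$. Your route is not circular, since that theorem is proved independently of the corollary. It handles the $n$ draws and the one-sided MLRP requirement in one step, and it carries over unchanged to subsamples and order statistics.

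The cost is that hypothesis (i) asks for a \emph{monotone family}, not merely the accuracy comparison, so ``the hypotheses hold'' needs a short verification:
\begin{itemize}
\item Monotonicity in $\theta$ is the accuracy assumption, via Lemma~\ref{lem:competition}.
\item The a.s.\ order-embedding property holds because $T_\theta=b_\theta\circ\hat b_\theta^{-1}$ is strictly increasing on the support of $\hat G_\theta$.
\item The pushforward identity $\hat G_\theta^{\otimes n}\circ\phi_\theta^{-1}=G_\theta^{\otimes n}$ requires $\hat G_\theta$ to be atomless, because $\phi_\theta$ must be deterministic. This holds since $F_\theta$ is atomless and $\hat b_\theta$ is strictly increasing. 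Your ``randomizing over ties and atoms'' is therefore neither needed nor available in that framework.
\item Hypothesis (ii) holds because the product density $\prod_i \hat g_\theta(\hat b_i)$ satisfies the cross-state affiliation inequality coordinate by coordinate, by the MLRP of $(\hat G_\theta)$.
\end{itemize}

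Finally, your informal description of Lehmann's construction is slightly off. The reference states should be the switching states $\theta_j$ of $u(a_{j+1},\cdot)-u(a_j,\cdot)$, one for each action threshold. This yields threshold-by-threshold comparisons rather than a single stochastic-dominance statement. Since you do not rely on that sketch, it does no harm.
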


Thus, more accurate auctions provide a higher value of information to the DM, robustly across all monotone decision problems and prior beliefs. This result is an immediate implication of the characterization of the accuracy order in \cite{lehmann1988} (Theorem~7.1), combined with the fact that the $n$ bids are i.i.d.\  conditional on each state $\theta$.
The MLRP of the less accurate bids $(G^{\psi'}_\theta)$ is needed to apply \cites{lehmann1988} result: Different from its role in Theorem~\ref{thm:main}, it serves to guarantee that the DM's optimal strategy is monotone, which is needed to obtain that $U({\cal D}, G^\psi)\geq U({\cal D}, G^{\psi'})$.\footnote{While \cite{lehmann1988} imposed the MLRP on both $G^{\psi}$ and $G^{\psi'}$, his result only requires the MLRP of $G^{\psi'}$; see, e.g., \cite{quah2009}, who extended \cites{lehmann1988} result to the more general class of interval-dominance order utilities.} Together with Theorem~\ref{thm:main}, this result then implies that $U ({\cal D}, G^{\rm 1st}) \geq U ({\cal D}, G^{\psi'})$.

Note that \cites{lehmann1988} result applies under any fixed number of i.i.d.\ draws from each experiment. Thus, Corollary~\ref{cor:lehmann} remains valid when the DM only observes some subsample of $k \leq n$ bids in each auction. Moreover, since the accuracy order is preserved under taking order statistics, the payoff implications are also unchanged if the DM only observes the $k$th-highest bid in each auction (for some fixed $k = 1, \ldots, n$).\footnote{Indeed, if $(G^\psi_\theta)$ is more accurate than $(G^{\psi'}_\theta)$, then $(H\circ G^{\psi}_\theta)$ is more accurate than $(H\circ G^{\psi'}_\theta)$ for any strictly increasing bijection $H:[0,1]\to[0,1]$, since
$(H\circ G^{\psi}_\theta)^{-1}\circ(H\circ G^{\psi'}_\theta) = (G^{\psi}_\theta)^{-1}\circ G^{\psi'}_\theta$. The cdf of the $k$th-highest order statistic corresponds to applying the transformation $H_k (x)=\sum_{j=0}^{k-1}\binom{n}{j}(1-x)^j x^{n-j}$. Although an arbitrary monotone transformation $H$ need not preserve the MLRP of $(G^{\psi'}_\theta)$, the order-statistic transformation $H_k$ does preserve it \citep[Theorem~1.C.33]{shaked2007}.}
 Finally, Online Appendix~\ref{app:general lehmann} provides an extension of Corollary~\ref{cor:lehmann} to non-i.i.d.\ signals; this extension implies that more accurate auctions $\psi$ also induce higher payoffs in all monotone decision problems when the DM observes a collection of order statistics (e.g., the $k$ highest bids  $b^{(1)}, \ldots, b^{(k)}$).

 As discussed in the Introduction, Corollary~\ref{cor:lehmann} can be applied to a wide range of economic settings, where the DM may be the auctioneer himself or a third party. Below we consider two simple examples: Section~\ref{sec:prediction} considers statistical prediction problems and quantifies the informational advantage of first-price auctions. Section~\ref{sec:reserve} considers the auctioneer's problem of setting a reserve price.

\subsection{Example 1: Statistical Prediction Problems}\label{sec:prediction}

As a first example, consider the subclass of monotone decision problems where $A =\Theta$ is a compact interval, the prior $q_0 \in \Delta (\Theta)$ admits a continuous and strictly positive density on $\Theta$,  and $u(a, \theta) = - \ell (|a-\theta|)$ for some strictly increasing and continuous function $\ell:\mathbb R_+\to\mathbb R_+$ with $\ell(0)=0$.  That is, the DM faces a statistical prediction problem (e.g., a financial analyst's forecasting problem in the context of Treasury auctions) where his disutility of failing to match the state is captured by the loss function $\ell$.
Let
\[
L(G^\psi)= \min_{\sigma: {\mathbb R}^n_+ \to \Theta}{\mathbb E}[\ell(|\sigma(b_1,\ldots, b_n)-\theta|)]>0
\] 
denote the DM's expected minimal loss when prior to choosing an action, he observes the $n$ buyers' bids in auction $\psi$. How close 
$L(G^\psi)$ comes to $0$ provides a quantitative measure of how informative $G^{\psi}$ is about $\theta$. For example, when $\ell(d)=d^2$, $L(G^\psi)$ coincides with the expected value of the DM's posterior variance about $\theta$.

It is immediate from Corollary~\ref{cor:lehmann} that $L (G^{\rm 1st})$ is minimal among all auctions $\psi$ for which $G^\psi$ satisfies the MLRP, and in particular $L  (G^{\rm 1st}) \leq L(G^{\rm 2nd})$. While the gap between $L (G^{\rm 1st})$ and $L(G^{\rm 2nd})$ depends on $(F_\theta)$, the following corollary shows that the informational advantage of first-price over second-price auctions can be arbitrarily large. We index bid distributions $G^{\psi}_n$ by the number of buyers $n$, and denote by $L_{\emptyset} = \min_{\hat\theta\in\Theta}{\mathbb E}[\ell(|\hat\theta-\theta|)]$ the \textbf{\textit{no-information loss}}, i.e., the DM's expected minimal loss when choosing solely based on his prior $q_0$.

\begin{cor}\label{cor:prediction} Fix any loss function $\ell$, numbers of bidders $n,m \geq 2$, and $\varepsilon>0$. There exist value distributions $(F_\theta)$ such that $L (G_m^{\rm 1st}) \leq\varepsilon$ but $L (G_n^{\rm 2nd}) \geq L_{\emptyset} - \varepsilon$. Thus, 
\begin{equation*}
\sup_{(F_\theta)}\frac{L (G_n^{\rm 2nd})}{L (G_m^{\rm 1st})}=\infty.
\end{equation*}
\end{cor}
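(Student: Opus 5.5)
The plan is to prove Corollary~\ref{cor:prediction} by an explicit construction. We choose a family $(F_\theta)$ in which the second-price experiment, i.e.\ $n$ i.i.d.\ draws of values since $G^{\rm 2nd}_\theta=F_\theta$, is nearly uninformative. At the same time the first-price bids of $m$ buyers must nearly reveal $\theta$.

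Both halves reduce to soft statistical facts.
\begin{itemize}
\item For the second-price half, it suffices that $\sup_{\theta,\theta'}\|F_\theta^{\otimes n}-F_{\theta'}^{\otimes n}\|_{TV}\le n\sup_{\theta,\theta'}\|F_\theta-F_{\theta'}\|_{TV}$ is small. Any rule $\sigma$ then has risk within $\max\ell\cdot(\text{TV gap})$ of a rule that ignores the data, so $L(G^{\rm 2nd}_n)\ge L_\emptyset-\varepsilon$ for a small enough TV gap. This uses that $\Theta$ is compact and $\ell$ is bounded on it.
\item For the first-price half, we exhibit a near-consistent estimator. The target is a statistic of the $m$ bids, for instance the maximum bid, that lies within $\delta$ of a strictly increasing function $\phi(\theta)$ with probability $1-\delta$. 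Then $\sigma=\phi^{-1}(\text{statistic})$ gives $L(G^{\rm 1st}_m)\le \ell(\omega(\delta))+\delta\max\ell\le\varepsilon$, where $\omega$ is a modulus of continuity of $\phi^{-1}$ on the compact $\Theta$.
\end{itemize}
The ratio statement then follows by letting $\varepsilon\to0$, using $L_\emptyset>0$ by the positive-density prior.

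The key is to exploit the strategic responsiveness channel identified in Lemma~\ref{lem:competition}. Recall $b^{\rm 1st}_\theta(v)=\mathbb E_\theta[\max_{j\ne i}v_j\mid \max_{j\neq i} v_j<v]$. This is a conditional expectation, so it can move substantially with $\theta$ even when $F_\theta$ moves little in total variation. What matters is that the change in $F_\theta$ is concentrated where the normalized opponent cdf $(F_\theta(w)/F_\theta(v))^{m-1}$ puts weight, together with a long lever arm.

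The first candidate I would try is a two-cluster family. With probability $1-p$ a value lies in a tight common cluster near $1$ of width $w$. With probability $p$ it lies in a tight low cluster near $a_\theta$, with $a_\theta$ strictly increasing in $\theta$. The TV distance is at most about $2p$, which handles the second-price side once $np\le\varepsilon/2$. On the first-price side, I would compute $b_\theta(v)$ for cluster types with $v$ in the bottom part of the high cluster. There $F_\theta(v)$ is of order $p$, so the opponents' conditional law given that they lie below $v$ puts substantial weight on the low cluster, and bids land near a $\theta$-dependent level. The parameters $p$, $w$ and the cluster shapes are tuned jointly so that the $\theta$-dependent part of the bid dominates its within-state spread.

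I expect the main obstacle to be exactly this tuning. Naive one-parameter families, such as exponential tilts $f_\theta\propto e^{\lambda(1+\eta\theta)w}$ or power cdfs $F_\theta(w)=w^{k_\theta}$, give a strategic shift of the same order as $\eta$ times the within-state dispersion. They therefore only improve accuracy by a bounded factor and do not give an unbounded ratio. The construction must make the within-state bid spread vanish relative to the cross-state shift while keeping $F_\theta$ TV-close. Equivalently, $b^{\rm 1st}_\theta$ must be nearly flat on a set of values carrying most of the mass, at a level that depends on $\theta$.

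If the two-cluster attempt fails to deliver this, I would use a fallback. I would engineer $F_\theta$ so that, for most $v$, the conditional distribution of the opponent maximum below $v$ is essentially a point mass at a $\theta$-dependent location. The bid then equals that location up to $o(1)$, and MLRP is checked directly on the explicit densities.
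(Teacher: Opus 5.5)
Your reduction of both halves to soft facts is fine. It mirrors the architecture of the paper's proof: a total-variation approximation on the second-price side, an explicit single-bid estimator on the first-price side, and $L_\emptyset>0$ for the ratio. But the corollary's entire content is the construction, and the one you commit to fails, for three reasons.

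\emph{First, it violates the MLRP.} Take a tight low cluster at $a_\theta$ that moves up with $\theta$, while the weight $p$ and the high cluster near $1$ are held fixed. This violates the MLRP, which is a maintained assumption on $(F_\theta)$. For $\theta'>\theta$, the ratio $f_{\theta'}/f_\theta$ is huge at $a_{\theta'}$, where only state $\theta'$ has a cluster, but is approximately $1$ on the common high cluster. So the ratio is not increasing.

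\emph{Second, even ignoring the MLRP, first-price bids are not informative enough.} A high-cluster type with quantile $x$ bids approximately $1-(1-a_\theta)(p/x)^{m-1}$, so $\log(1-b)\approx\log(1-a_\theta)+(m-1)\log(p/x)$. The cross-state variation of $\log(1-a_\theta)$ is bounded as long as the low cluster stays separated from the high one. By contrast, the within-state noise $(m-1)\log x$ from the unknown quantile is of order one. The types you single out, those with $F_\theta(v)$ of order $p$, have total probability $O(p)$, and your second-price step forces $p=O(\varepsilon/n)$. So for $m\le n$ the first-price observer rarely sees one of these types; they are no more available to her than the low-cluster values are to the second-price observer.

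\emph{Third, your fallback does not close this gap.} Suppose $\sup_{\theta,\theta'}\|F_\theta-F_{\theta'}\|_{\mathrm{TV}}\le\delta$. Then $b^{\rm 1st}_\theta(v)=v-\int_0^v(F_\theta(w)/F_\theta(v))^{m-1}dw$ moves by only $O(\delta)$ across states at every $v$ with $F_\theta(v)$ bounded away from zero. Hence bids equal to a $\theta$-dependent location ``up to $o(1)$'' reveal nothing unless the within-state spread is $o(\delta)$. You give no mechanism delivering that, nor explicit densities on which to check the MLRP.

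The missing idea is to let the state move the exponentially small \emph{weight} of the low cluster rather than its location, and to read that weight off on a logarithmic scale. The paper takes $f_\theta\propto e^{K\theta v}h_\eta(v)$ on $[0,1]$, with $h_\eta$ concentrated near $0$ and $1$. This is the exponential-tilt route you discarded, but applied to a bimodal base with a large tilt $K$.
\begin{itemize}
\item The tilt makes the MLRP automatic.
\item As $\eta\to0$, $F_\theta$ approaches the two-point law with mass $p_K(\theta)=(1+e^{K\theta})^{-1}$ at $0$ and the rest at $1$.
\item Second-price bids are then approximately $n$ Bernoulli draws, and the prior probability of ever seeing a low draw is at most $n\int p_K q\to0$ as $K\to\infty$. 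The paper uses this prior-averaged bound because $p_K(0)=1/2$. To keep your uniform TV bound instead, give the low cluster base weight $\rho$, so that $p(\theta)\le\rho$.
\item The first-price bid of quantile $x>p_K(\theta)$ tends to $1-(p_K(\theta)/x)^{m-1}$. So $-\log(1-b_1)/((m-1)K)$ lies within $(\log2+|\log x|)/K$ of $\theta$.
\end{itemize}
The state thus shifts $\log(1-b)$ by order $K$, while the quantile noise stays of order one. This blow-up in signal-to-noise is exactly what a bounded location shift cannot produce.
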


That is, under some value distributions $(F_\theta)$, $L (G_m^{\rm 1st})\approx 0$ while $L (G_n^{\rm 2nd}) \approx L_\emptyset$, i.e., first-price bids provide almost perfect information, while second-price bids provide almost no information about $\theta$. Moreover, this is true even if the number of bidders $n$ in the second-price auction is arbitrarily larger than the number of bidders $m$ in the first-price auction. The proof (Appendix~\ref{app:cor-prediction}) provides an explicit construction of such value distributions.\footnote{Figure~\ref{fig:uniform} provides an example where $L(G^{\rm 2nd}_n) > 0$ for all $n$ while $L(G^{\rm 1st}_m) =0$ for $m \in \{2, 3\}$, but this relies on the discreteness of $\Theta$. Moreover, the example does not establish that second-price bids can be almost uninformative under value distributions that make first-price bids almost fully revealing.}

In general, a larger number of bidders is always more informative under the second-price auction (i.e., $L(G^{\rm 2nd}_n)$ is decreasing in $n$) as this corresponds to observing more draws from the value distribution. However, $L (G_n^{\rm 1st})$ can be non-monotonic in $n$: More bidders can dampen the strategic responsiveness effect, making any individual bid observation less accurate, and this effect can more than offset the informational benefit of an increased number of bid observations.\footnote{E.g., in Figure~\ref{fig:uniform}, $L(G^{\rm 1st}_n) =0$ for $n \in \{2, 3\}$ but $L(G^{\rm 1st}_n)  > 0$ for all $n > 3$. }


\subsection{Example 2: Choice of Reserve Price}\label{sec:reserve}

As a second example, suppose that, as motivated in the Introduction, the DM is the auctioneer and his decision problem is to use bid observations to set a future reserve price.

To study this setting, we first extend the main analysis to allow for a reserve price $r \geq 0$: Given payment functions $\psi = (\psi_\ell)$ as in Section~\ref{sec:model}, the highest bidder wins the good if and only if $b^{(1)} \geq r$. The winner pays $\max \{ r, \psi_1(b^{(1)},\ldots, b^{(n)})\}$; all other bidders pay  $\psi_\ell (b^{(1)},\ldots, b^{(n)})$ depending on their rank $\ell$. We modify the notion of symmetric monotone equilibrium by requiring that, at each state $\theta$, all buyers' bidding strategies are described by a map $b_\theta: I_\theta \to [r, \infty)\cup \{0\}$ that is strictly increasing and continuous for all $v\geq r$ and satisfies $b_\theta (r) = r$ and $b_\theta (v) = 0$ for all $v < r$. Thus, we interpret bidding $0$ as not participating in the auction, and in equilibrium, the participating buyers are precisely those with values $v \geq r$.\footnote{Some auctions where non-winning bidders make non-zero payments do not admit such equilibria when $r >0$. For example, in the all-pay auction, buyers with values just above $r$ have an incentive to deviate to bidding less than $r$. Online Appendix~\ref{app:fee} replaces reserve prices with entry fees; here, symmetric monotone equilibria exist for a broader class of auctions, and we obtain analogous insights as below.} Let $G^{\psi, r}=(G^{\psi, r}_\theta)$  denote the corresponding equilibrium bid distributions. The first-price and second-price auctions admit unique symmetric monotone equilibria, where for all $v \geq r$,
\begin{equation*}
b_\theta^{{\rm 1st}, r}(v)=
{\mathbb E}_\theta[\max\{v^{(2)}, r\} \mid v=v^{(1)}],
\quad \quad \quad
b_\theta^{{\rm 2nd}, r}(v)=
v. \end{equation*}

The following result extends Theorem~\ref{thm:main}:

\begin{prop}\label{prop:reserve}
 For any $r\geq0$, $G^{{\rm 1st}, r}$ is more accurate than $G^{\psi, r}$, provided that $(G^{\psi, r}_\theta)$ satisfies the MLRP. Moreover, $G^{{\rm 1st}, r'}$ is more accurate than $G^{{\rm 1st}, r}$ whenever $0\leq r'<r$.\end{prop}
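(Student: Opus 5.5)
The plan is to mimic the proof of Theorem~\ref{thm:main}, with quantiles restricted to the set of participating buyers. First I would extend Lemma~\ref{lem:competition} to the reserve-price setting. Every equilibrium bid distribution $G^{\psi,r}_\theta$ has an atom at $0$ of mass $F_\theta(r)$, and it coincides with the pushforward of $F_\theta$ on $[r,\infty)$ under $b^{\psi,r}_\theta$. Consider any bid $b\ge r$, and any $b<r$ (which corresponds to quantile $F_\theta(r)$). The accuracy condition reduces to checking that $b^{{\rm 1st},r}_\theta\big((b^{\psi,r}_\theta)^{-1}(b)\big)$ is increasing in $\theta$ for $b\ge r$. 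At $b=r$ both sides equal $r$, and for $b<r$ both inverses equal the same threshold type $r$, so these cases are trivial.

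For the first claim, I would apply revenue equivalence with reserve $r$. In each state, types $v\ge r$ have the same interim expected payment under any $\psi$ with the reserve, because the allocation rule is the same and the threshold type $r$ gets zero surplus. Writing $x=F_\theta(v)$ for $v\ge r$, the first-price expected payment is $p_1(x)\,b^{{\rm 1st},r}_\theta(v)$. The $\psi$ payment is $p_1(x)\,\mathbb E[\max\{r,\psi_1\}\mid x^{(1)}=x]+\sum_{i\ge2}p_i(x)\,\mathbb E[\psi_i\mid x^{(i)}=x]$. Dividing by $p_1(x^b_\theta)$ gives the analogue of \eqref{eq:proof sketch}, with $\psi_1$ replaced by $\max\{r,\psi_1\}$, which is still increasing in bids. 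The blue-term argument goes through unchanged: $x^b_\theta$ is decreasing in $\theta$ by FOSD of $G^{\psi,r}_\theta$, and $p_i/p_1$ is decreasing in $x$.

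For the red terms I would use the Markov property of order statistics. Conditional on $b^{(i)}=b$, the higher bids are i.i.d.\ from $G^{\psi,r}_\theta$ truncated below at $b$, and the lower bids are i.i.d.\ from it truncated above at $b$. The second truncation now includes the atom at $0$, and I must check that the MLRP of $(G^{\psi,r}_\theta)$, understood with respect to a dominating measure that has an atom at $0$, still gives FOSD-monotone truncations in $\theta$. I expect this to hold because $G^{\psi,r}_\theta(0)=F_\theta(r)$ relative to the continuous part obeys the likelihood-ratio ordering. Verifying this atom bookkeeping, including ties at $0$ and the resulting rank probabilities, is the main technical point. Given this, the conditional expectation of the increasing function $\psi_i$ (or $\max\{r,\psi_1\}$) is increasing in $\theta$, and the first claim follows.

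For the second claim, I would compare first-price bidding functions directly. By the extended Lemma~\ref{lem:competition}, with $r'<r$, it suffices to show that
\[
b^{{\rm 1st},r'}_\theta\big((b^{{\rm 1st},r}_\theta)^{-1}(b)\big)
\]
is increasing in $\theta$ for $b\ge r$. Write $v=(b^{{\rm 1st},r}_\theta)^{-1}(b)$, which solves $\mathbb E_\theta[\max\{v^{(2)},r\}\mid v^{(1)}=v]=b$. The identity $\max\{w,r\}=\max\{w,r'\}+(r-\max\{w,r'\})^+$ gives
\[
b^{{\rm 1st},r'}_\theta(v)=b-\mathbb E_\theta\big[(r-\max\{v^{(2)},r'\})^+\mid v^{(1)}=v\big].
\]
It then remains to show that the subtracted term is decreasing in $\theta$, and here there are two effects. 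First, $v$ itself is decreasing in $\theta$, since $b^{{\rm 1st},r}_\theta(v)$ increases in $\theta$ by MLRP. Second, the conditional law of $v^{(2)}$ given $v^{(1)}=v$ is $F_\theta^{n-1}/F_\theta^{n-1}(v)$ on $[0,v]$. This law FOSD-increases in $\theta$ but FOSD-decreases as $v$ falls, so the two effects compete. To resolve this, I would try the route of substituting the constraint $b$ fixed, i.e.\ show that $\mathbb E_\theta[(r-\max\{v^{(2)},r'\})^+]$ decreases along the level set, by expressing both expectations as integrals of $F_\theta^{n-1}(w)/F_\theta^{n-1}(v)$ and using monotonicity of the ratio in $\theta$ from MLRP (reverse hazard rate). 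I expect this to be the hardest step.
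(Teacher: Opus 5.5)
Your argument for the first claim is essentially the paper's: revenue equivalence on quantiles $x\ge F_\theta(r)$ with $\psi_1$ replaced by $\max\{r,\psi_1\}$, then the blue/red-term argument of Theorem~\ref{thm:main}. The atom at $0$ is harmless: read the MLRP with respect to a dominating measure with a point mass at $0$, and likelihood-ratio ordering survives truncation. One correction: for $b\le r$, the value $b^{{\rm 1st},r}_\theta(F_\theta(r))$ equals $r$ only when $r\in[\underline v_\theta,\overline v_\theta)$. Otherwise it is $\underline v_\theta$ or $0$, so you also need the support endpoints to increase in $\theta$.

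The genuine gap is in the second claim, at the step you flag as hardest. Your reduction is correct. Your identity for $\max\{w,r\}$ gives, with $x_\theta=G^{{\rm 1st},r}_\theta(b)$, the decomposition $b^{{\rm 1st},r'}_\theta(x_\theta)=b^{{\rm 1st},r}_\theta(x_\theta)-\int_{r'}^r (F_\theta(t)/x_\theta)^{n-1}\,dt$. But reverse-hazard-rate monotonicity at a \emph{fixed} conditioning type cannot settle the two competing effects you identify. What is missing is a single-crossing argument that transfers the constraint on $[r,\infty)$ to $[r',r]$.

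Write $b^{{\rm 1st},r}_\theta(x)=r+\int_r^\infty[1-\min\{(F_\theta(t)/x)^{n-1},1\}]\,dt$. Since the projection of $b$ onto the bid support increases in $\theta$, we get
\[
\int_r^\infty\Bigl[\min\Bigl\{\Bigl(\frac{F_\theta(t)}{x_\theta}\Bigr)^{n-1},1\Bigr\}-\min\Bigl\{\Bigl(\frac{F_{\theta'}(t)}{x_{\theta'}}\Bigr)^{n-1},1\Bigr\}\Bigr]\,dt\;\le\;0 .
\]
By the MLRP, $F_\theta(t)/F_{\theta'}(t)$ is increasing in $t$, so the integrand changes sign at most once, from negative to positive. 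Hence $F_\theta(r)/x_\theta\le F_{\theta'}(r)/x_{\theta'}$; otherwise the integrand would be nonnegative on $[r,\infty)$ and strictly positive near $r$. The monotone ratio then gives $F_\theta(t)/x_\theta\le F_{\theta'}(t)/x_{\theta'}$ for all $t\le r$. This is a pointwise comparison of the normalized cdfs at \emph{different} conditioning types, and it immediately makes the subtracted integral smaller at $\theta$. Stating the constraint as this inequality, rather than an exact level set, also covers bids lying outside one state's support.

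Your level-set formulation also misses a case. When $r\ge\overline v_{\theta'}$, $G^{{\rm 1st},r}_{\theta'}$ is degenerate at $0$, so no type at $\theta'$ solves your equation and the relevant $\theta'$-quantile is $1$. You then need the separate comparison $b^{{\rm 1st},r'}_\theta(x)\ge b^{{\rm 1st},r'}_{\theta'}(1)$. It follows from $F_\theta(t)/x\le F_\theta(t)/F_\theta(r)\le F_{\theta'}(t)$ for $t\le r$ (MLRP and $F_{\theta'}(r)=1$).
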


We now apply Proposition~\ref{prop:reserve} to a simple two-period setting, where the auctioneer sets a reserve price each period and uses period-1 bid observations to optimally choose the period-2 reserve price. Suppose that each period $t = 1, 2$, the auctioneer chooses an auction format $\psi^t$ and reserve price $r^t \in [0, \overline r]$ to sell one non-storable good to $n^t$ short-lived buyers. As in Section~\ref{sec:model}, each buyer's value is i.i.d.\ with distribution $F_\theta$. Here, $\theta$ is drawn once and for all at the start of period $1$ and is observed by all buyers. Crucially, the auctioneer observes period-1 bids prior to choosing $(\psi^2, r^2)$. The auctioneer's objective is to maximize his total expected revenue across the two periods.

Assume that each $F_\theta$ satisfies Myerson regularity, i.e., $v-\frac{1-F_\theta(v)}{f_\theta(v)}$ is strictly increasing in $v$.  Then conditional on any $\theta$, revenue-equivalence arguments imply that the auctioneer's expected one-shot revenue under (a symmetric monotone equilibrium of) any auction $\psi$ with reserve price $r \geq 0$ is 
\[u(r, \theta)= \int^{\infty}_r  \left(v-\frac{1-F_\theta(v)}{f_\theta(v)} \right)  dF^n_\theta (v),
\]
which belongs to the class of preferences in \cite{karlin1956}.\footnote{The function is single-peaked in $r$ since $v-\frac{1-F_\theta(v)}{f_\theta(v)}$ is strictly increasing. It is maximized by $r^*$ that solves $r^*-\frac{1-F_\theta(r^*)}{f_\theta(r^*)}=0$. 
By MLRP of $(F_\theta)$,  $\frac{1-F_\theta(r)}{f_\theta(r)}$ is increasing in $\theta$ \citep[][Theorem 1.C.1]{shaked2007}, which implies that the maximizer $r^*$ is also increasing in $\theta$.}
 Thus, the optimal choice of a period-2 reserve price $r^2$ is a monotone decision problem. By Corollary~\ref{cor:lehmann}, period-2 revenue is higher the more accurate the period-1 bid observations $G^{\psi^1, r^1}$. 
 
As a result, Proposition~\ref{prop:reserve} yields that (among all well-behaved standard auctions $\psi^1$), the auctioneer achieves the highest total revenue by using a first-price auction in period 1: Under any period-1 reserve price $r^1$, this yields the same period-1 revenue as any other auction $\psi^1$, but generates more accurate information about $\theta$ and hence a higher period-2 revenue. Having reduced the period-1 format choice to a first-price auction, one can also determine the optimal period-1 reserve price $r^1$. The latter trades off the information revealed by $G^{{\rm 1st}, r^1}$ (which, by Proposition~\ref{prop:reserve}, is less accurate the higher $r^1$) against maximizing period-1 revenue. Due to this exploration-exploitation tradeoff, $r^1$ depends on $(F_\theta)$ and the number of bidders in each period, but is always lower than the statically optimal reserve price. 

Finally, in assuming that buyers are short-lived, this stylized example isolates the channel of learning about $F_\theta$, while abstracting away from other forces that arise with long-lived buyers who bid in both periods. For example, if all buyers are long-lived but myopic, with values drawn once and for all from $F_\theta$ in period 1, the auctioneer would want to learn buyers' realized values rather than the value distribution $F_\theta$ and hence would optimally choose a second-price auction in period 1.\footnote{If buyers are forward-looking, they may additionally have incentives to distort period-1 bids to influence the auctioneer's and other bidders' inferences about their values. Some papers study such incentives in the context of reserve-price choice in second-price auctions \cite[e.g.,][]{hummel2018, kanoria2021}. Comparing how these incentives differ across auction formats is a worthwhile direction for future work.} We view the current example as an approximation of settings with a large enough inflow of new buyers in period 2, in which case learning about $F_\theta$ remains a relevant consideration for the auctioneer and provides a force in favor of using a first-price auction in period 1.

\section{Extensions}\label{sec:extensions}

\subsection{Interdependent Values}\label{sec:interdependent}

So far, we assumed that conditional on each state $\theta$, buyers face an independent private value environment. However, the analysis extends to the following interdependent value setting. Suppose each buyer $i$'s value for the good is given by $u(v_i, v_{-i})$, where $u: \mathbb{R}_+ \times \mathbb{R}^{n-1}_+ \to \mathbb{R}_+$ is continuous, strictly increasing in the first argument, and increasing and symmetric in the remaining arguments. We continue to assume that conditional on each state $\theta$, each $v_i$ is drawn i.i.d.\ from $F_\theta$, with the same assumptions on $(F_\theta)$ as in Section~\ref{sec:model}, and that each buyer $i$ observes $v_i$ and $\theta$. As in Section~\ref{sec:model}, we consider standard auctions $\psi$ that admit some symmetric and strictly monotone equilibrium with bid distributions and bidding functions $(G^{\psi}_\theta)$ and $(b^{\psi}_\theta)$. In particular, letting
\begin{equation}\label{eq:u-theta}
u_\theta(v_i):={\mathbb E}_\theta[u(v_i, v_{-i})| v_i=v^{(1)}=v^{(2)}],
\end{equation} 
the unique (symmetric monotone) equilibrium bidding functions under the first- and second-price auction are, respectively, given by \citep[][]{milgrom1982,mcadams2007}:
\[
b^{\rm 1st}_\theta(v)={\mathbb E}_\theta[u_\theta(v^{(2)})| v^{(1)}=v] \quad \text{ and } \quad 
b^{\rm 2nd}_\theta(v)= u_\theta(v).
\]
Note that, unlike under private values, even second-price bids $b^{\rm 2nd}_\theta(v)$ are increasing in $\theta$ for each $v$. 
This contrasts with the VCG mechanism \citep{dasgupta2000}, which admits a truthful equilibrium with bids $b^{\rm vcg}_\theta(v)=v$ whenever $u$ satisfies the single-crossing condition.\footnote{That is, $u$ is continuously differentiable with $\partial_{v_i} u(v_i, v_{-i})>\partial_{v_j} u(v_i, v_{-i})$ for all $(v_i, v_{-i})$ and $i\not=j$.}

The following result shows that the first-price auction remains most accurate under interdependent values. Moreover, in contrast to the private value setting, the fact that second-price bids $b^{\rm 2nd}_\theta (v)$ are increasing in $\theta$ makes them more accurate than truthful VCG bids.\footnote{As in Remark~3, the Dutch auction again reveals only
$b_\theta^{\rm 1st}(v^{(1)})$, which is a garbling of observing all
first-price auction bids, while observing the dropout prices in the Milgrom--Weber equilibrium of the English
auction reveals $(v^{(2)},\ldots,v^{(n)})$, which is a garbling of observing all truthful
VCG bids. The latter holds because after every equilibrium dropout history, $i$'s stopping threshold is a $\theta$-independent, strictly increasing function of $v_i$, even though it need not equal $v_i$ \citep{milgrom1982}.}

\begin{prop}\label{prop:interdependent} \
\begin{enumerate}
\item For any standard auction $\psi$, $G^{\rm 1st}$ is more accurate than $G^{\psi}$, provided that $(G^{\psi}_\theta)$ satisfies the MLRP.
\item $G^{\rm 2nd}$ is more accurate than $G^{\rm vcg}$.
\end{enumerate}
\end{prop}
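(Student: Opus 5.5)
Both parts go through Lemma~\ref{lem:competition}. In each case we show that $b_\theta(\hat b_\theta^{-1}(b))$ is increasing in $\theta$ for every $b$.

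\textbf{Part 2.} This is immediate. Since $b^{\rm vcg}_\theta(v)=v$, we have $(b^{\rm vcg}_\theta)^{-1}(b)=b$. The condition reduces to $u_\theta(v)$ being increasing in $\theta$ for each fixed $v$. By definition,
\[
u_\theta(v)={\mathbb E}_\theta[u(v,v_{-i})\mid v_i=v^{(1)}=v^{(2)}=v].
\]
Conditional on $v_i=v$ and the highest of the others equalling $v$, the remaining $n-2$ opponents' values are i.i.d.\ from $F_\theta$ truncated above at $v$. By the MLRP of $(F_\theta)$, this truncated distribution is FOSD-increasing in $\theta$, and the joint law of the $n-2$ i.i.d.\ draws is FOSD-increasing as well. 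Since $u$ is increasing in $v_{-i}$, $u_\theta(v)$ is increasing in $\theta$. (The same fact is reused in Part 1.)

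\textbf{Part 1.} I would redo the argument of Theorem~\ref{thm:main} in quantile space, using revenue equivalence in its interdependent form.

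\emph{Step 1: revenue equivalence.} At a fixed state, set $w_\theta(x,y)={\mathbb E}_\theta[u(v_i,v_{-i})\mid x_i=x,\ \max_{j\neq i}x_j=y]$. The interim allocation is $p_1(x)=x^{n-1}$ in every standard auction. The envelope theorem, applied to deviations in reported quantile $\hat x$ with payoff $\int_0^{\hat x}w_\theta(x,y)\,dy^{n-1}-m(\hat x)$, pins down the interim expected payment $m_\theta(x)$ with $m_\theta(0)=0$. It is the same for all symmetric monotone equilibria of all standard auctions. Evaluating it in the first-price auction gives $m_\theta(x)=p_1(x)\,b^{\rm 1st}_\theta(x)$. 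This yields identity (\ref{eq:proof sketch}) verbatim:
\[
b^{\rm 1st}_\theta\big((b^\psi_\theta)^{-1}(b)\big)=\sum_i \frac{p_i(x^b_\theta)}{p_1(x^b_\theta)}\,{\mathbb E}\big[\psi_i(\cdot)\mid b^\psi_\theta(x^{(i)})=b\big].
\]
The right-hand side involves only the bid distribution $G^\psi_\theta$ and rank probabilities, so the interdependence enters only through the equilibrium $b^\psi_\theta$.

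\emph{Step 2: monotonicity in $\theta$.} The blue terms are increasing in $\theta$ exactly as before. $G^\psi_\theta$ is FOSD-increasing by its MLRP, so $x^b_\theta$ is decreasing in $\theta$, and $p_i/p_1$ is decreasing in $x$. The red terms are increasing by the Markov property of order statistics together with the MLRP of $(G^\psi_\theta)$: the truncations above and below $b$ are FOSD-increasing in $\theta$, and $\psi_i$ is increasing. Lemma~\ref{lem:competition} then concludes.

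\emph{Main obstacle.} The main work is making Step 1 rigorous in the interdependent setting. The classical payoff-equivalence argument needs the interim value of winning, $\int_0^x w_\theta(x,y)\,dy^{n-1}$, to be well behaved, meaning differentiable in $x$ with the envelope formula valid. It also needs the first-price interim payment to equal $x^{n-1}b^{\rm 1st}_\theta(x)$ with $b^{\rm 1st}_\theta$ the Milgrom--Weber function. I would handle this by the standard Milgrom--Segal envelope theorem, using continuity and monotonicity of $u$ and the strict monotonicity and continuity of $b^\psi_\theta$, which makes the mapping from bids to quantiles a bijection. The remaining steps are then a literal repetition of the proof of Theorem~\ref{thm:main}.
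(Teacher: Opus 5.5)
Your proposal is correct and follows the paper's proof. Revenue equivalence, valid because types are i.i.d.\ conditional on $\theta$, recovers identity (\ref{eq:main}), after which the argument of Theorem~\ref{thm:main} goes through unchanged; Part~2 is Lemma~\ref{lem:competition} applied to the $\theta$-increasing $u_\theta$ against the state-independent VCG bids. You usefully supply what the paper only cites or asserts (the envelope derivation of payment equivalence, and the truncation argument for why $u_\theta(v)$ rises with $\theta$), with one small caveat: Milgrom--Segal needs absolute continuity in the type, which continuity of $u$ alone does not give, but the two-sided incentive-compatibility sandwich bound on $U^*(x')-U^*(x)$ pins down interim payments using only continuity of $u$.
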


To prove the first part of Proposition~\ref{prop:interdependent}, we exploit analogous revenue-equivalence arguments as in Theorem~\ref{thm:main}. These arguments extend to the current setting as we maintain the assumption that types $v_i$ are i.i.d.\ conditional on each state $\theta$.

\subsection{Imperfect State Observation}\label{sec:imperfect}

We can further generalize Theorem~\ref{thm:main} to a setting where, instead of perfectly observing the state $\theta$, buyers observe noisy signals about $\theta$. 
Suppose that all buyers $i$ observe both a common public signal $\xi \in \mathbb{R}$ (that is unobserved by the DM) and a private signal $v_i \in \mathbb{R}_+$, and $i$'s value for the good is given by $u(\xi, v_i, v_{-i})$, where $u: \mathbb{R} \times \mathbb{R}_+ \times \mathbb{R}^{n-1}_+ \to \mathbb{R}_+$ is continuous, strictly increasing in the second argument, and increasing and symmetric in the last $n-1$ arguments. We assume that the joint distribution of $(\theta, \xi, v_1, \ldots, v_n)$ admits a density that is symmetric in $(v_1,\ldots, v_n)$ and that conditional on $\xi$, the private signals $v_i$ are i.i.d.\ across buyers and admit a continuous and strictly positive density supported on an interval $I_\xi = [\underline v_\xi, \overline v_\xi)$. This setting nests the interdependent value environment in Section~\ref{sec:interdependent} when $\xi = \theta$ and $u$ does not depend on $\xi$. It also nests a pure common value setting where $u(\xi, v_i, v_{-i}) = \mathbb{E} \left[\theta \mid \xi, v_i, v_{-i} \right]$.\footnote{Assuming that $\xi$ is public and $v_i$ are i.i.d.\ conditional on $\xi$ is useful as this allows us to exploit revenue-equivalence arguments. If buyers observe two-dimensional private signals $(v_i, \xi_i)$, equilibrium is difficult to characterize even under simple formats (e.g., first-/second-price) and need not exist \citep[e.g.,][]{pesendorfer2000, jackson2009}.}

Focusing again on standard auctions $\psi$ that admit symmetric and monotone (in $v_i$) equilibria, each buyer $i$'s bid can be written as $b_i=b_\xi^\psi(v_i)$. Note that, in contrast to the analysis so far, the $n$ buyers' bids $(b_1,\ldots, b_n)$ are exchangeable but need not be independent conditional on $\theta$.  Let $G^\psi=(G^\psi_\theta)$ denote the distribution of an individual bid $b_i$ conditional on each state $\theta$.

The following result extends Theorem~\ref{thm:main} and Proposition~\ref{prop:interdependent} under an affiliation condition:

\begin{prop}\label{prop:imperfect}
Suppose $(\theta, \xi, b_1,\ldots, b_n)$ is affiliated under both the first-price auction and some auction $\psi$.
 Then $G^{\rm 1st}$ is more accurate than $G^{\psi}.$
\end{prop}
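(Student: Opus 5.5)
The plan is to compare the two auctions signal by signal, conditioning on the public signal $\xi$. I will first establish an analogue of Lemma~\ref{lem:competition} in which the index is $\xi$ rather than $\theta$. Then I will use affiliation to pass from monotonicity in $\xi$ to accuracy in $\theta$.

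\emph{Step 1: monotonicity in $\xi$.} Conditional on $\xi$, buyers face the interdependent-value environment of Section~\ref{sec:interdependent}, with types $v_i$ that are i.i.d.\ given $\xi$. Both bidding functions $b^{\rm 1st}_\xi$ and $b^\psi_\xi$ are strictly increasing in $v_i$. Affiliation of $(\xi,b_1,\ldots,b_n)$ under $\psi$ implies that the conditional bid distributions $(G^\psi_\xi)$ satisfy the MLRP in $\xi$. This is exactly the hypothesis used in Theorem~\ref{thm:main} and Proposition~\ref{prop:interdependent}.

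Rerunning the revenue-equivalence argument behind \eqref{eq:proof sketch} with $\xi$ in place of $\theta$, I would show that
\[
b^{\rm 1st}_\xi\bigl((b^\psi_\xi)^{-1}(b)\bigr) \text{ is increasing in } \xi \text{ for every } b .
\]
The argument uses $u_\xi(v)=\mathbb E[u(\xi,v_i,v_{-i})\mid \xi, v_i=v^{(1)}=v^{(2)}]$ in place of \eqref{eq:u-theta}. Quantile $x$'s expected payment must equal the expected surplus integral, and this integral is the same across auctions. The blue terms $p_i/p_1$ rise in $\xi$ because the quantile $x^b_\xi$ falls. The red terms rise because the truncated bid distributions are FOSD-increasing in $\xi$. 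Since Proposition~\ref{prop:interdependent} is already available, this step should amount to citing it for the family indexed by $\xi$.

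\emph{Step 2: from $\xi$ to $\theta$.} Let $\phi_\xi(b)=b^{\rm 1st}_\xi((b^\psi_\xi)^{-1}(b))$. Then a first-price bid equals $\phi_\xi(b^\psi)$, where $b^\psi$ is the $\psi$-bid of the same buyer. Here $\phi$ is increasing in both $b$ and $\xi$ by Step 1.

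Fix $\theta>\theta'$ and a quantile level. I want to show that the $G^{\rm 1st}_\theta$-quantile of $\phi_\xi(b^\psi)$ corresponding to $G^\psi_\theta(b)$ is increasing in $\theta$. The natural route is to write
\[
G^{\rm 1st}_\theta(\phi)=\int \Pr(b^\psi\le \phi^{-1}_\xi(\phi)\mid \xi,\theta)\,dH(\xi\mid\theta).
\]
Then I would compare this with $G^\psi_\theta(b)=\int \Pr(b^\psi\le b\mid\xi,\theta)\,dH(\xi\mid\theta)$.

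Affiliation of $(\theta,\xi,b_i)$ gives two facts. First, $\xi$ given $\theta$ is MLRP-increasing in $\theta$. Second, $b_i$ given $(\xi,\theta)$ is MLRP-increasing. I would then try the following argument. Take $b,\hat b$ with $G^{\rm 1st}_{\theta'}(b)=G^\psi_{\theta'}(\hat b)$. The event $\{b^{\rm 1st}_i\le b\}$ equals $\{b^\psi_i\le \phi_\xi^{-1}(b)\}$, and its threshold is decreasing in $\xi$. Comparing it with the fixed-threshold event $\{b^\psi_i\le \hat b\}$, the difference of the two indicators is single-crossing in $\xi$: it is positive for low $\xi$ and negative for high $\xi$. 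This difference has zero mean under $\theta'$, so an MLRP shift to $\theta$ makes its mean nonpositive. That gives $G^{\rm 1st}_\theta(b)\le G^\psi_\theta(\hat b)$, which is the required accuracy inequality.

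\emph{Main obstacle.} The hard step is making the single-crossing claim rigorous. The indicators also depend on $b_i$, not only on $\xi$, so I would first take expectations over $b_i$ given $(\xi,\theta)$. I would then need to show that the resulting difference in conditional probabilities is single-crossing in $\xi$, uniformly in $\theta$. This is where the joint affiliation under both auctions is needed. Affiliation makes the conditional-on-$\xi$ probabilities behave monotonically, so a Karlin-style variation-diminishing argument applies to the $\theta$-mixture.
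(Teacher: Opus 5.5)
Your Step 1 is the paper's argument: revenue equivalence conditional on $\xi$, with the MLRP of $\psi$-bids in $\xi$ coming from the $\psi$-affiliation. Step 2, however, has a genuine gap.

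The single-crossing you flag as the obstacle is actually trivial. The sign of $D(\xi,\theta):=\Pr(b^\psi\le\phi_\xi^{-1}(b)\mid\xi,\theta)-\Pr(b^\psi\le\hat b\mid\xi,\theta)$ is fixed by whether $\phi_\xi^{-1}(b)\gtrless\hat b$, which does not depend on $\theta$. The real problem is that $D$ itself depends on $\theta$. A Karlin argument on the $\xi$-mixture controls only the shift of the $\xi$-marginal, i.e.\ $\int D(\xi,\theta')\,[dH(\xi\mid\theta)-dH(\xi\mid\theta')]\le 0$. The remaining term $\int [D(\xi,\theta)-D(\xi,\theta')]\,dH(\xi\mid\theta)$ is a change in the conditional mass of intervals, and an MLRP shift of $b_i$ given $(\xi,\theta)$ can move it in either direction. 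So your argument closes only when $(v_i)$ is independent of $\theta$ given $\xi$, which Proposition~\ref{prop:imperfect} does not assume.

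Joint affiliation does not rescue the argument in your coordinates either. Let $\xi^*$ be the point where $\phi_\xi^{-1}(b)$ crosses $\hat b$. The positive region $\{\xi<\xi^*,\ \hat b<b^\psi\le\phi_\xi^{-1}(b)\}$ lies lower in $\xi$ but higher in $b^\psi$ than the negative region $\{\xi\ge\xi^*,\ \phi_\xi^{-1}(b)<b^\psi\le\hat b\}$. These sets are not ordered in the product order, so affiliation of $(\theta,\xi,b^\psi)$ says nothing about their relative likelihood ratios.

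The missing idea, which is Lemma~\ref{lem:gencompetition} via Lemmas~\ref{lem:sc1}--\ref{lem:sc2}, is to express both events in first-price coordinates $(\xi,b^{\rm 1st}_\xi(v))$. There $G^{\rm 1st}_\theta(b)-G^\psi_\theta(\hat b)=\mu_\theta(P)-\mu_\theta(N)$, with $P=\{\xi<\xi^*,\ \phi_\xi(\hat b)<b^{\rm 1st}\le b\}$ and $N=\{\xi\ge\xi^*,\ b<b^{\rm 1st}\le\phi_\xi(\hat b)\}$. Now every point of $N$ dominates every point of $P$ coordinatewise. Apply affiliation of $(\theta,\xi,b^{\rm 1st})$ to the joint density directly, rather than splitting it into a $\xi$-marginal and a conditional. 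This gives $\mu_\theta(N)\mu_{\theta'}(P)\ge\mu_\theta(P)\mu_{\theta'}(N)$ for $\theta>\theta'$, which yields your desired inequality whenever the two masses agree and are positive at $\theta'$. So it is the first-price affiliation that drives Step 2; the $\psi$-affiliation is used only in Step 1.

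The ratio inequality is vacuous when those masses vanish. The paper therefore states it in strict form ($>0$ at one state implies $\ge 0$ at the higher state). It then uses continuity of $G^{\rm 1st}_\theta$ and a right-inverse argument to recover the accuracy order, and you would need the same care.
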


The affiliation condition for auction $\psi$ generalizes the MLRP assumption on $(G^{\psi}_\theta)$ in our previous results to accommodate correlated bids. 
To prove Proposition~\ref{prop:imperfect} (Appendix~\ref{app:imperfect}), we extend Lemma~\ref{lem:competition} to incorporate imperfect observation of the state. To do so, we additionally impose an affiliation assumption on the first-price auction, which had no counterpart in our previous results.\footnote{A simple example where affiliation holds is if (i) $(v_i)$ and $\theta$ are independent conditional on $\xi$ and $(\theta,\xi)$ is affiliated,  and (ii) the bid distribution conditional on $\xi$ satisfies the MLRP. Note that, in this case, (ii) is equivalent to the MLRP of bids in an (interdependent or private) i.i.d.\ value environment where $\xi$ is viewed as the state. See Appendix~\ref{app:MLRP} for natural conditions that ensure the MLRP of bids in the private-value setting. 
}

While it still follows from Proposition~\ref{prop:imperfect} and \cite{lehmann1988} that the first-price auction induces higher payoffs than $\psi$ in all monotone decision problems if a single bid is observed, this is not immediate under multiple bid observations because bids may be correlated conditional on $\theta$. However, Online Appendix~\ref{app:general lehmann} shows that the multiple-bid payoff comparison extends under concave utility functions if $(v_i)$ and $\theta$ are independent conditional on $\xi$.

\subsection{Correlated Values and Multi-Unit Auctions}\label{sec:other-extensions}

Finally, we consider two extensions that are further removed from the baseline model. Accordingly, we focus on extending the comparison between first- and second-price auctions.

First, we revisit the setting in Section~\ref{sec:interdependent} but allow $(v_1,\ldots, v_n)$ to be correlated conditional on $\theta$.  We additionally assume that $u$ is continuously differentiable and $\partial_1 u \geq \kappa>0$ for some $\kappa$. Generalizing the MLRP of $(F_\theta)$, we assume that the joint distribution of $(\theta, v_1,\ldots,v_n)$ admits a  density and is affiliated. Moreover, conditional on $\theta$, $(v_1,\ldots,v_n)$ is exchangeable and admits a continuously differentiable joint density that is strictly positive on its support $I_\theta^n$, where $I_\theta=[\underline v_\theta, \overline v_\theta)\subseteq \mathbb{R}_+$. Let ${F}_\theta(v' \mid v)$ denote the cdf of $v' = \max_{j \neq i} {v}_j$ conditional on ${v}_i = v$ and $\theta$, with corresponding density ${f}_\theta(v' \mid v)$.  The first- and second-price auctions admit unique symmetric monotone equilibria:
\[
b_\theta^{\rm 2nd}(v)=u_\theta(v)  \quad \text{ and } \quad   b_{\theta}^{\rm 1st}({v}) = \int_0^{v} u_\theta(v') d L_\theta (v' \mid v),
\]
where $u_\theta$ is as defined in (\ref{eq:u-theta}) and $L_\theta( v' \mid v)$ is a cdf on $[0,v]$ defined by $L_\theta( v' \mid v) =e^{- \intop_{v'}^{{v}} \frac{{f}_\theta({t} \mid {t})}{{F}_\theta({t} \mid {t})} d {t}}$ \citep[][]{milgrom1982,mcadams2007}. Thus, these bidding functions are linked by the relationship
\[
b_{\theta}^{\rm 1st}({v}) = \int_0^{v} b_\theta^{\rm 2nd}(v') d L_\theta (v' \mid v).
\]
While revenue equivalence fails in this setting, Appendix~\ref{app:correlated} exploits this relationship to extend the comparison in Proposition~\ref{prop:example}:

\begin{prop}\label{prop:correlated} \
In the above correlated value setting, suppose that $(\theta,b^{\rm 2nd}_\theta(v_1),\ldots,b^{\rm 2nd}_\theta(v_n))$ is affiliated. Then $G^{\rm 1st}$ is more accurate than $G^{\rm 2nd}$.
\end{prop}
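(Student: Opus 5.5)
Proposition~\ref{prop:correlated} can be proved by applying Lemma~\ref{lem:competition}, whose proof only uses the definition of accuracy and the fact that bids are strictly increasing in $v$ at each $\theta$, so it carries over to the correlated setting. It then suffices to show that for all $\theta>\theta'$ and bids $b$,
\[
b^{\rm 1st}_\theta\bigl((b^{\rm 2nd}_\theta)^{-1}(b)\bigr)\ \ge\ b^{\rm 1st}_{\theta'}\bigl((b^{\rm 2nd}_{\theta'})^{-1}(b)\bigr).
\]
Write $v_\theta^b=(b^{\rm 2nd}_\theta)^{-1}(b)=u_\theta^{-1}(b)$. Since $b^{\rm 2nd}_\theta=u_\theta$ is strictly increasing in $v$, I will change variables in the linking relation. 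Let $w=u_\theta(v')$ be a second-price bid. Then
\[
b^{\rm 1st}_\theta(v^b_\theta)=\int w\, dH_\theta(w\mid b),
\]
where $H_\theta(\cdot\mid b)$ is the cdf of $w=b^{\rm 2nd}_\theta(v')$ when $v'\sim L_\theta(\cdot\mid v^b_\theta)$. Its support lies in second-price bids at most $b$. The proof therefore reduces to showing that $H_\theta(\cdot\mid b)$ is FOSD-increasing in $\theta$ for each fixed $b$.

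To get this, I will express $L_\theta$ in bid space. By construction, $L_\theta(v'\mid v)=\exp\bigl(-\int_{v'}^v \rho_\theta(t)\,dt\bigr)$, where $\rho_\theta(t)=f_\theta(t\mid t)/F_\theta(t\mid t)$ is the reverse hazard rate of the highest opponent value, evaluated on the diagonal. Substituting $t=u_\theta^{-1}(s)$ gives
\[
H_\theta(w\mid b)=\exp\Bigl(-\int_w^b \tilde\rho_\theta(s)\,ds\Bigr),
\]
where $\tilde\rho_\theta(s)$ is the reverse hazard rate, in \emph{bid} space, of the highest opponent second-price bid conditional on own second-price bid equal to $s$. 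Formally, $\tilde\rho_\theta(s)=\tilde g_\theta(s\mid s)/\tilde G_\theta(s\mid s)$, with $\tilde G_\theta(\cdot\mid s)$ the conditional cdf of $\max_{j\ne i} b^{\rm 2nd}_\theta(v_j)$ given $b^{\rm 2nd}_\theta(v_i)=s$ and state $\theta$. This works because the reverse hazard rate transforms by the Jacobian under the monotone map $u_\theta$.

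FOSD-monotonicity of $H_\theta(\cdot\mid b)$ in $\theta$ then follows if $\tilde\rho_\theta(s)$ is increasing in $\theta$ for each $s$. A larger $\tilde\rho$ gives a smaller $H_\theta(w\mid b)$ at every $w<b$, hence a larger mean. This is where the assumption that $(\theta,b^{\rm 2nd}_\theta(v_1),\ldots,b^{\rm 2nd}_\theta(v_n))$ is affiliated enters. By affiliation and its preservation under marginalization and under taking maxima (Milgrom--Weber, Theorem~5), the triple $(\theta, b_i, \max_{j\ne i}b_j)$ is affiliated. Hence the conditional density of $\max_{j\ne i}b_j$ given $(b_i=s,\theta)$ satisfies the MLRP in $\theta$. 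MLRP implies reverse-hazard-rate dominance, so $\tilde g_\theta(s\mid s)/\tilde G_\theta(s\mid s)$ is increasing in $\theta$. The regularity assumptions ($u$ continuously differentiable with $\partial_1u\ge\kappa$, and smooth positive densities) ensure that the bid-space densities exist and that the change of variables is legitimate.

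I expect the main obstacle to be this change-of-variables step, in particular making $\tilde\rho_\theta$ well defined. The map $u_\theta$ itself depends on $\theta$, so I must check that the bid-space object entering $H_\theta$ is exactly the conditional reverse hazard rate of the affiliated bid vector, and not one evaluated under a mismatched transformation. I also need to handle boundary issues. When $b$ lies outside the support of $G^{\rm 2nd}_\theta$, the left- and right-inverse conventions of footnotes~\ref{fn:left-inverse} and \ref{fn:right-inverse} must be applied. When the lower support endpoints differ across $\theta$, I will extend $H_\theta(\cdot\mid b)$ by placing an atom at the lower endpoint, and verify that FOSD still holds there using $b^{\rm 1st}_\theta(\underline v_\theta)=u_\theta(\underline v_\theta)$.
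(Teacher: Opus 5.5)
Your proposal is correct and matches the paper's proof essentially step for step. Both arguments use Lemma~\ref{lem:competition}, rewrite $b^{\rm 1st}_\theta((b^{\rm 2nd}_\theta)^{-1}(b))$ as the mean of the bid-space cdf $\hat L_\theta(b'\mid b)=\exp\bigl(-\int_{b'}^{b} g^{\rm 2nd}_\theta(x\mid x)/G^{\rm 2nd}_\theta(x\mid x)\,dx\bigr)$, get reverse-hazard-rate monotonicity in $\theta$ from affiliation of $(\theta,b_i,\max_{j\neq i}b_j)$, and handle the boundaries via ordered support endpoints and $b^{\rm 1st}\le b^{\rm 2nd}$, with equality at the lower endpoint. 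One point to tighten: passing affiliation to $\max_{j\neq i}b_j$ requires the bids to be exchangeable, since the Milgrom--Weber order-statistics result needs a symmetric density; the setting supplies this, but your write-up should invoke it.
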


Proposition~\ref{prop:correlated} imposes affiliation on the second-price auction bids, which is satisfied in natural settings (e.g., the private-value setting where $u(v_i, v_{-i}) =v_i$). We also note that, although bids are correlated conditional on each state, Corollary~\ref{cor:correlated} in Online Appendix~\ref{app:general lehmann} extends the payoff comparison in Corollary~\ref{cor:lehmann} to the current setting.

Second, suppose that the auctioneer has $m > 1$ identical units of the good for sale. The $n > m$ buyers each have unit demand, and their values $v_i$ for one unit are drawn i.i.d.\ from $F_\theta$ conditional on $\theta$, with the same assumptions on $(F_\theta)$ as in Section~\ref{sec:model}. 

We compare the following multi-unit generalizations of second- and first-price auctions, where buyers simultaneously submit a single bid and the $m$ highest bidders receive one unit each. Under the  
\textit{\textbf{VCG (uniform-price) auction}}, the $m$ highest bidders all pay the $(m+1)$th-highest bid (i.e., $\psi_\ell (b_1, \ldots, b_n) = b^{(m+1)}$ for all $\ell \leq m$), and other bidders pay nothing. This reduces to the second-price auction when $m = 1$. The unique symmetric monotone equilibrium remains $b^{\rm vcg}_\theta(v)=v$, with corresponding bid distributions $G^{\rm vcg}=(G_\theta^{\rm vcg})$.
Under the \textit{\textbf{discriminatory auction}}, the $m$ highest bidders all pay their own bids (i.e., $\psi_\ell (b_1, \ldots, b_n) = b^{(\ell)}$ for all $\ell \leq m$), and other bidders pay nothing. This reduces to the first-price auction when $m = 1$. Let $v_{-i}^{(m)}$ denote the $m$th-highest value among the $n-1$ bidders other than $i$. The unique symmetric monotone equilibrium \citep[see Section 13.5.2 in][]{krishna2009} is
\[
b^{\rm disc}_\theta(v)={\mathbb E}_\theta[v_{-i}^{(m)}\mid v_{-i}^{(m)}\leq v],
\]
with bid distributions $G^{\rm disc}=(G_\theta^{\rm disc})$. The MLRP of $(F_\theta)$ implies that $b^{\rm disc}_\theta(v)$ is increasing in $\theta$ for every $v$, whereas $b^{\rm vcg}_\theta(v)=v$ is state-independent. This yields the following generalization of Proposition~\ref{prop:example}:

\begin{prop}\label{prop:multiple} \
$G^{\rm disc}$ is more accurate than $G^{\rm vcg}$. 
\end{prop}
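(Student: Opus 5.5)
The plan is to apply Lemma~\ref{lem:competition}, whose proof uses only the definition of accuracy and the fact that each experiment is generated by a strictly increasing bidding function applied to a value drawn from $F_\theta$; it therefore applies verbatim to the multi-unit setting. Since $b^{\rm vcg}_\theta(v)=v$, we have $(b^{\rm vcg}_\theta)^{-1}(b)=b$ for $b$ in the support, truncated at the endpoints of $I_\theta$ otherwise. It thus suffices to show that $b^{\rm disc}_\theta(\min\{\max\{b,\underline v_\theta\},\overline v_\theta\})$ is increasing in $\theta$ for every $b\in\mathbb R_+$, where the clipping reflects the left-inverse convention of footnote~\ref{fn:left-inverse}.

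The key step is the pointwise monotonicity of $b^{\rm disc}_\theta(v)$ in $\theta$ for fixed $v$. The $m$th-highest of $n-1$ i.i.d.\ draws from $F_\theta$ has density proportional to $F_\theta(w)^{n-1-m}(1-F_\theta(w))^{m-1}f_\theta(w)$. This is the composition of the order-statistic transform with $F_\theta$, so by the MLRP of $(F_\theta)$ and preservation of likelihood-ratio order under order statistics (\citealp[Theorem~1.C.33]{shaked2007}, as already invoked in Section~\ref{sec:payoff}), the family of distributions of $v^{(m)}_{-i}$ satisfies the MLRP in $\theta$. Likelihood-ratio dominance is preserved by truncation to $\{w\le v\}$, and it implies first-order stochastic dominance. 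Hence
\[
\mathbb E_\theta[v^{(m)}_{-i}\mid v^{(m)}_{-i}\le v]
\]
is increasing in $\theta$ for every $v$ in the common support.

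Then I handle the boundary and support issues, which I expect to be the only delicate part. Take $\theta>\theta'$. By the MLRP, $\underline v_\theta\ge\underline v_{\theta'}$ and $\overline v_\theta\ge\overline v_{\theta'}$. Let $v=\operatorname{clip}_\theta(b)$ and $v'=\operatorname{clip}_{\theta'}(b)$, so that $v\ge v'$. Then
\[
b^{\rm disc}_\theta(v)\ge b^{\rm disc}_\theta(v')\ge b^{\rm disc}_{\theta'}(v'),
\]
where the first inequality uses that $b^{\rm disc}_\theta$ is increasing and the second uses the pointwise comparison. When $v'<\underline v_\theta$, the second step requires an extension: for $v\le\underline v_\theta$ the bid is $\underline v_\theta$ (the conditional expectation collapses to the lower endpoint), which is at least $b^{\rm disc}_{\theta'}(v')\le v'$. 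I would verify this explicitly. Combining the two inequalities with Lemma~\ref{lem:competition} yields that $G^{\rm disc}$ is more accurate than $G^{\rm vcg}$.
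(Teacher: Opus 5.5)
Your proposal is correct and follows the paper's proof essentially step for step. It uses Lemma~\ref{lem:competition}, the MLRP of $v^{(m)}_{-i}$ via preservation of the likelihood-ratio order under order statistics, preservation under upper truncation to get $b^{\rm disc}_\theta(v)$ increasing in $\theta$, and a boundary argument based on ordered support endpoints and $b^{\rm disc}_\theta(v)\le v$ with equality at the lower endpoint; your chain $b^{\rm disc}_\theta(v)\ge b^{\rm disc}_\theta(v')\ge b^{\rm disc}_{\theta'}(v')$ just spells out the paper's one-line boundary remark.
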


\appendix
\section{Appendix: Proofs and Other Omitted Details}

Throughout, we denote by $b^{-1}_\theta(b)$ (resp.\ $G^{-1}_\theta(x)$) the generalized left (resp.\ right) inverse of an equilibrium bidding function $b_\theta$ (resp.\ bid cdf $G_\theta$), as defined in footnotes~\ref{fn:left-inverse} and \ref{fn:right-inverse}.

\subsection{Proof of Lemma~\ref{lem:competition}}\label{app:competition}

Note that, for every $b\in\mathbb R_+$, 
\begin{align*}
G^{-1}_\theta\!\left(\hat G_\theta(b)\right)
&=\sup\left\{b'\in {\rm supp}\,G_\theta:
F_\theta\!\left(b_\theta^{-1}(b')\right)
\leq F_\theta\!\left(\hat b_\theta^{-1}(b)\right)\right\}\\
&=\sup\left\{b'\in {\rm supp}\,G_\theta:
b_\theta^{-1}(b')\leq \hat b_\theta^{-1}(b)\right\} =b_\theta\!\left(\hat b_\theta^{-1}(b)\right).
\end{align*}
The second equality uses strict monotonicity of $F_\theta$ on $I_\theta$, and the last equality uses continuity and strict monotonicity of $b_\theta$. The claimed equivalence now follows directly from the definition of the accuracy order. \qed

\subsection{Properties of Uniform Order Statistics}\label{app:uniform-order}

Let $x^{(1)}\geq\dots \geq x^{(n)}$ denote the order statistics of $n$ i.i.d.\ draws from the uniform distribution on $[0, 1]$. The probability density function of $x^{(i)}$ is given by
\[
\phi_i(x)=\frac{n!}{(n-i)!(i-1)!}x^{n-i}(1-x)^{i-1}, \quad \forall x \in [0, 1].
\]
For any realized draw $x$, denote by $p_i(x)=\frac{\phi_i(x)}{\sum_j\phi_j(x)}$ 
the probability that $x$ is ranked $i$th-highest. Then, for all $i =1,\ldots, n-1$, 
\[
\frac{p_{i+1}(x)}{p_i(x)}=\frac{\phi_{i+1}(x)}{\phi_i(x)}=\frac{n-i}{i}\frac{1-x}{x},
\]
which is decreasing in $x$.  This also implies that for all $i=1,\ldots, n$,  $\frac{p_i(x)}{p_1(x)}$ is decreasing and $\frac{p_i(x)}{p_n(x)}$ is increasing in $x$. More generally, we have the following result:

\begin{lem}\label{lem:alpha} Take non-zero vectors $\alpha,\alpha'\in\mathbb R_+^n$ such that $\alpha_j'\alpha_i\geq\alpha_i'\alpha_j$ for all $i,j$ with $i<j$. Then $\frac{\sum_i\alpha'_ip_i(x)}{\sum_i\alpha_ip_i(x)}$ is decreasing in $x\in (0,1)$.
\end{lem}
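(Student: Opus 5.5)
The plan is to reduce the claim to a total-positivity (variation-diminishing) property of the rank probabilities. The ratio in question is
\[
R(x)=\frac{\sum_i\alpha'_ip_i(x)}{\sum_i\alpha_ip_i(x)}=\frac{\sum_i\alpha'_i\phi_i(x)}{\sum_i\alpha_i\phi_i(x)},
\]
since the normalizing factor $\sum_j\phi_j(x)$ cancels. I will first write the condition $\alpha'_j\alpha_i\ge\alpha'_i\alpha_j$ for $i<j$ as the statement that the pair of rows $(\alpha,\alpha')$, viewed as a $2\times n$ matrix indexed by $k\in\{\alpha,\alpha'\}$ and $i$, is TP$_2$ in $(k,i)$. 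In words, ``$\alpha'$ puts relatively more weight on higher indices $i$ than $\alpha$ does.''

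The second ingredient is that the kernel $\phi_i(x)$ is TP$_2$ in $(i,-x)$: higher ranks $i$ (lower order statistics) go with lower $x$. This is exactly the computation already recorded above, that $\phi_{i+1}(x)/\phi_i(x)=\frac{n-i}{i}\frac{1-x}{x}$ is decreasing in $x$. It implies that $\phi_j(x)/\phi_i(x)$ is decreasing in $x$ for every $i<j$.

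The main step is then a direct cross-multiplication check, which is just the basic composition formula for TP$_2$ kernels. Take $x<y$. Showing $R(x)\ge R(y)$ is equivalent to
\[
\sum_{i,j}\alpha'_i\alpha_j\bigl[\phi_i(x)\phi_j(y)-\phi_i(y)\phi_j(x)\bigr]\ \ge 0 .
\]
I will group the terms by unordered pairs $i<j$, with the diagonal terms vanishing. Each pair contributes
\[
(\alpha'_i\alpha_j-\alpha'_j\alpha_i)\bigl[\phi_i(x)\phi_j(y)-\phi_i(y)\phi_j(x)\bigr].
\]
For $i<j$ the first factor is $\le 0$ by hypothesis. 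The second factor is also $\le 0$, because $\phi_j(y)/\phi_i(y)\le\phi_j(x)/\phi_i(x)$, all densities being positive on $(0,1)$. Each product is therefore nonnegative, which gives the claim.

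The only care needed is well-definedness. The denominator $\sum_i\alpha_i\phi_i(x)$ is strictly positive on $(0,1)$ because $\alpha$ is nonzero and every $\phi_i>0$ there, and the same holds for the numerator. I also need to confirm the sign bookkeeping in the pairing step, that is, which of the two cross terms picks up which sign. I expect that to be the only real obstacle. Once it is done, the special cases stated before the lemma ($\alpha=e_1$ or $\alpha'=e_n$) follow as a consistency check.
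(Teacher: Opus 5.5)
Your proposal is correct and is essentially the paper's proof: cross-multiply, cancel the diagonal terms, pair each $i<j$, and factor each pair as $(\alpha'_i\alpha_j-\alpha'_j\alpha_i)$ times a cross-difference whose sign follows from $\phi_j/\phi_i$ (equivalently $p_j/p_i$) decreasing. Working with $\phi_i$ rather than $p_i$ and using TP$_2$ language are only cosmetic, and the sign bookkeeping you flagged as the remaining obstacle is already correct as written (both factors are nonpositive).
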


\begin{proof}
For all $x, x' \in (0, 1)$ with $x<x'$, we have
\begin{eqnarray*}
&&\frac{\sum_i\alpha_i'p_i(x)}{\sum_i\alpha_ip_i(x)}\geq\frac{\sum_i\alpha_i'p_i(x')}{\sum_i\alpha_ip_i(x')}\\
&\iff&\left(\sum_i\alpha_i'p_i(x) \right)\left(\sum_i\alpha_ip_i(x')\right)\geq \left(\sum_i\alpha_i'p_i(x') \right) \left(\sum_i\alpha_ip_i(x) \right)\\
&\iff& \sum_i \sum_{j\not=i} \alpha_i'p_i(x)\alpha_jp_j(x')\geq \sum_i \sum_{j\not=i} \alpha'_ip_i(x')\alpha_jp_j(x)\\
&\iff& \sum_i \sum_{j\not=i} \alpha_i'\alpha_j(p_i(x)p_j(x')-p_i(x')p_j(x))\geq 0\\
&\iff& \sum_i \sum_{j>i} \alpha_i'\alpha_j(p_i(x)p_j(x')-p_i(x')p_j(x)) +  \alpha_j'\alpha_i(p_j(x)p_i(x')-p_j(x')p_i(x))\geq 0\\
&\iff& \sum_i \sum_{j>i} (\alpha_j'\alpha_i-\alpha_i'\alpha_j)p_i(x)p_i(x')\left(\frac{p_j(x)}{p_i(x)}-\frac{p_j(x')}{p_i(x')}\right)\geq 0.
\end{eqnarray*}
The final inequality holds by the assumption that $\alpha_j'\alpha_i\geq\alpha_i'\alpha_j$ for all $i,j$ with $i<j$.
\end{proof}

\subsection{Proof of Theorem~\ref{thm:main}}\label{app:main}

Denote by $x^{(1)}\geq x^{(2)}\geq \ldots \geq x^{(n)}$ the order statistics corresponding to the $n$ value quantiles $x_i = F_\theta (v_i)$, which in each state $\theta$ are i.i.d.\ uniform on $[0, 1]$. As in Appendix~\ref{app:uniform-order}, let 
$p_i(x) = \frac{\phi_i(x)}{\sum_j\phi_j(x)}$ denote the probability that a realized quantile $x$ is $i$th-highest. 

Let $b^{\rm 1st}_\theta$ and $b^{\psi}_\theta$ denote the equilibrium bidding functions under the first-price and $\psi$-auction as  functions of value quantiles $x$. Since $(G^\psi_\theta)$ is FOSD-increasing in $\theta$ (by the MLRP of $(G^\psi_\theta)$), $b^{\psi}_\theta (x)$ is increasing in $\theta$ for all $x$. To prove that $G^{\rm 1st}$ is more accurate than $G^{\psi}$, it suffices, by Lemma~\ref{lem:competition}, to show that $b^{\rm 1st}_\theta((b_\theta^\psi)^{-1}(b))\geq b^{\rm 1st}_{\theta'}((b_{\theta'}^\psi)^{-1}(b))$ for all $b\in\mathbb R_+$ and $\theta>\theta'$.

We first derive a tractable expression for $b^{\rm 1st}_\theta((b_\theta^\psi)^{-1}(b))$ that holds whenever $b \in (b_\theta^\psi (0), b_\theta^\psi (1))$. Since $\psi_\ell (b^{(1)}, \ldots, b^{(n)})=0$ whenever $b^{(\ell)}=0$, every quantile $x$ receives a non-negative interim payoff in equilibrium. Thus, since $p_n(0)=1$,  the interim equilibrium payoff of $x=0$ under $\psi$ is 0. Hence, by revenue equivalence, the expected payment of each quantile $x$ is the same across the first-price and $\psi$-auction. That is, for each quantile $x\in (0,1)$, 
\begin{equation}\label{eq:rev-equivalence}
p_{1}(x) b^{\rm 1st}_\theta(x)=  \sum_{i=1}^n p_{i}(x){\mathbb E}[ \psi_i(b_\theta^\psi(x^{(1)}),\ldots, b_\theta^\psi(x^{(n)})) \mid x^{(i)}=x ]. 
\end{equation}
Thus, for any $b \in (b_\theta^\psi (0), b_\theta^\psi (1))$, substituting $x=(b_\theta^\psi)^{-1}(b)$ into (\ref{eq:rev-equivalence}) and dividing by $p_1(x) > 0$ yields
\begin{equation}\label{eq:main}
b^{\rm 1st}_\theta((b_\theta^\psi)^{-1}(b))=  \sum_{i=1}^n \frac{p_{i}((b_\theta^\psi)^{-1}(b))}{p_1((b_\theta^\psi)^{-1}(b))} {\mathbb E}[ \psi_i(b_\theta^\psi(x^{(1)}),\ldots, b_\theta^\psi(x^{(n)})) \mid x^{(i)}=(b_\theta^\psi)^{-1}(b)].
\end{equation}
 
Next, we fix any $\theta > \theta'$ and prove that $b^{\rm 1st}_\theta((b_\theta^\psi)^{-1}(b))\geq b^{\rm 1st}_{\theta'}((b_{\theta'}^\psi)^{-1}(b))$ for all $b \in \mathbb{R}_+$. 
We first assume that $b \in(b_\theta^\psi(0),b_{\theta'}^\psi(1))$, which in particular implies that $b \in (b_\theta^\psi(0),b_{\theta}^\psi(1))$ and $b \in (b_{\theta'}^\psi(0),b_{\theta'}^\psi(1))$. Thus, (\ref{eq:main}) applies to both $b^{\rm 1st}_\theta((b_\theta^\psi)^{-1}(b))$ and $b^{\rm 1st}_{\theta'} ((b_{\theta'}^\psi)^{-1}(b))$, so it suffices to establish the following two claims:

\medskip

\noindent{\bf Claim 1:} For each $i$, $\frac{p_{i}((b_\theta^\psi)^{-1}(b))}{p_1((b_\theta^\psi)^{-1}(b))} \geq \frac{p_{i}((b_{\theta'}^\psi)^{-1}(b))}{p_1((b_{\theta'}^\psi)^{-1}(b))}$.


\noindent\textit{Proof of Claim 1.} This is immediate from the fact that $(b_\theta^\psi)^{-1}(b) \leq (b_{\theta'}^\psi)^{-1}(b)$, and that $\frac{p_{i}(x)}{p_1(x)}$ is decreasing in $x$ by Lemma~\ref{lem:alpha}. \qed

\medskip
\noindent{\bf Claim 2:} For each $i$, we have
\[{\mathbb E}[ \psi_i(b_\theta^\psi(x^{(1)}),\ldots, b_\theta^\psi(x^{(n)})) \mid x^{(i)}=(b_\theta^\psi)^{-1}(b) ] \geq {\mathbb E}[ \psi_i(b_{\theta'}^\psi(x^{(1)}),\ldots, b_{\theta'}^\psi(x^{(n)})) \mid x^{(i)}=(b_{\theta'}^\psi)^{-1}(b) ].\]

\noindent\textit{Proof of Claim 2.} For each $\tilde \theta \in \{ \theta, \theta' \}$, the fact that $b \in (b_{\tilde \theta}^\psi(0),b_{\tilde \theta}^\psi(1))$ implies that
\[
{\mathbb E}[ \psi_i(b_{\tilde \theta}^\psi(x^{(1)}),\ldots, b_{\tilde \theta}^\psi(x^{(n)})) \mid x^{(i)}=(b_{\tilde \theta}^\psi)^{-1}(b) ]={\mathbb E}_{\tilde \theta} [ \psi_i(b^{(1)},\ldots, b^{(n)}) \mid b^{(i)}=b],
\]
where $b^{(1)},\ldots, b^{(n)}$ denote the order statistics of $n$ i.i.d.\ draws from distribution $G_{\tilde \theta}^{\psi}$. Moreover, we can analyze ${\mathbb E}_{\tilde \theta} [ \psi_i(b^{(1)},\ldots, b^{(n)}) \mid b^{(i)}=b]$ using the Markov property of order statistics \citep[Chapter 5 in][]{ahsanullah2013}. Specifically, denote by $G_{\tilde \theta}^{\psi-}$ and $G_{\tilde \theta}^{\psi+}$ the conditional distributions of $G_{\tilde \theta}^\psi$  on $[0, b]$ and $[b, \infty)$, respectively. Then the distribution of $(b^{(1)},\ldots, b^{(i-1)})$ (resp.\ $(b^{(i+1)},\ldots, b^{(n)})$) conditional on the realization of $b^{(i)}=b$ coincides with the distribution of order statistics of $i-1$ i.i.d.\ draws from  $G_{\tilde \theta}^{\psi+}$ (resp.\ $n-i$ i.i.d.\ draws from $G_{\tilde \theta}^{\psi-}$). Moreover, $(b^{(1)},\ldots, b^{(i-1)})$  and $(b^{(i+1)},\ldots, b^{(n)})$  are independently distributed conditional on $b^{(i)}=b$.  

Since both $(G_{\tilde \theta}^{\psi-})$ and $(G_{\tilde \theta}^{\psi+})$ are FOSD-increasing in $\tilde \theta$ by the MLRP of $(G^{\psi}_{\tilde \theta})$ \citep[][Theorem 1.C.6]{shaked2007} and since $\psi_i$ is increasing, this implies that
\[
{\mathbb E}_\theta[ \psi_i(b^{(1)},\ldots, b^{(n)}) \mid b^{(i)}=b]\geq {\mathbb E}_{\theta'}[ \psi_i(b^{(1)},\ldots, b^{(n)}) \mid b^{(i)}=b],
\]
proving Claim 2. \qed

This establishes $b^{\rm 1st}_\theta((b_\theta^\psi)^{-1}(b))\geq b^{\rm 1st}_{\theta'}((b_{\theta'}^\psi)^{-1}(b))$ for all $b \in(b_\theta^\psi(0),b_{\theta'}^\psi(1))$. It remains to show that this inequality in fact holds for all $b \in \mathbb{R}_+$. 
To show this, first suppose that $b_\theta^\psi(0)<b_{\theta'}^\psi(1)$. Then, by continuity, $b^{\rm 1st}_\theta((b_\theta^\psi)^{-1}(b))\geq b^{\rm 1st}_{\theta'}((b_{\theta'}^\psi)^{-1}(b))$ also holds at $b=b_\theta^\psi(0)$.  Thus, for $b\leq b_\theta^\psi(0)$,
\begin{eqnarray*}
b_\theta^{\rm 1st}((b_\theta^\psi)^{-1}(b))
=b_\theta^{\rm 1st}(0)&=&\lim_{b'\searrow b^\psi_\theta(0)}b_{\theta}^{\rm 1st}((b_{\theta}^\psi)^{-1}(b'))\\
&\geq& \lim_{b'\searrow b^\psi_\theta(0)}b_{\theta'}^{\rm 1st}((b_{\theta'}^\psi)^{-1}(b'))=b_{\theta'}^{\rm 1st}
   ((b_{\theta'}^\psi)^{-1}(b_\theta^\psi(0)))
\geq b_{\theta'}^{\rm 1st}((b_{\theta'}^\psi)^{-1}(b)).
\end{eqnarray*}
An analogous argument accommodates any $b\geq b_{\theta'}^\psi(1)$. 
Next suppose that $b_\theta^\psi(0)\geq b_{\theta'}^\psi(1)$. Using the revenue-equivalence identity
(\ref{eq:rev-equivalence}), one can show $b_\theta^{\rm 1st}(\varepsilon)
\geq b_{\theta'}^{\rm 1st}(1-\varepsilon)$ for all sufficiently small $\varepsilon>0$, which then ensures  $b_\theta^{\rm 1st}(0)\geq b_{\theta'}^{\rm 1st}(1)$.
Consequently, for all $b \in \mathbb{R}_+$,
\[
b_\theta^{\rm 1st}((b_\theta^\psi)^{-1}(b))
\geq b_\theta^{\rm 1st}(0)
\geq b_{\theta'}^{\rm 1st}(1)
\geq b_{\theta'}^{\rm 1st}((b_{\theta'}^\psi)^{-1}(b)),
\]
as required. \qed

\subsection{Proof of Proposition~\ref{prop:k}}\label{app:k}

We prove the following more general claim: For any $k=1,\ldots, n-1$ and $\psi$ given by $\psi_1(b^{(1)},\ldots, b^{(n)})=\sum_{i\geq k}\beta_i b^{(i)}$ for some non-zero vector $\beta\in\mathbb R_+^{n-k+1}$ and $\psi_\ell \equiv 0$ for all $\ell \geq 2$, $G^{k{\rm th}}$ is more accurate than $G^\psi$ whenever the latter satisfies the MLRP.

We first show that for each $\theta$, the equilibrium bidding function under the $k$th-price auction (viewed as a function of quantiles) can be written as 
\begin{equation}\label{eq:k recursion}
b^{k{\rm th}}_\theta(x)={\mathbb E} \left[ \sum_{i\geq k}\beta_ib^\psi_\theta(x^{(i)}) \mid x^{(k)}=x \right].
\end{equation}
To see this, note that
\begin{eqnarray*}
b^{{\rm 1st}}_\theta(x)&=&{\mathbb E} \left[\sum_{i\geq k}\beta_ib^\psi_\theta(x^{(i)})  \mid x^{(1)}=x \right] 
= {\mathbb E}\left[{\mathbb E}\left[\sum_{i\geq k}\beta_ib^\psi_\theta(x^{(i)}) \mid x^{(1)}=x, x^{(k)} \right]  \mid x^{(1)}=x \right] \\
&=& {\mathbb E}\left[{\mathbb E} \left[\sum_{i\geq k}\beta_ib^\psi_\theta(x^{(i)}) \mid x^{(k)} \right] \mid x^{(1)}=x \right],
\end{eqnarray*}
where the first equality uses (\ref{eq:rev-equivalence}), the second equality uses the law of iterated expectations, and the third equality uses the Markov property of order statistics.  Thus, the function 
 $b_\theta(x):={\mathbb E}\left[ \sum_{i\geq k}\beta_ib^\psi_\theta(x^{(i)}) \mid x^{(k)}=x \right]$ satisfies $
b^{{\rm 1st}}_\theta(x)= {\mathbb E} \left[ b_\theta(x^{(k)}) \mid x^{(1)}=x \right]$.

This implies that the expected equilibrium payment of quantile $x$ under the first-price auction is the same as the expected payment of quantile $x$ under the $k$th-price auction if bidders follow strategy $b_\theta$. Note also that $b_\theta(x)$ is strictly increasing and continuous in $x$, because the distribution of $(x^{(k+1)},\ldots, x^{(n)})$ conditional on $x^{(k)}=x$ coincides with the  distribution of order statistics of $n-k$ i.i.d.\ draws from the uniform distribution on $[0, x]$. Hence, revenue-equivalence arguments imply that $b_\theta (x)$ is a (symmetric monotone) equilibrium of the $k$th-price auction. 
  Since the equilibrium of the $k$th-price auction is unique \citep[see][Theorem A]{monderer}, this yields (\ref{eq:k recursion}).

For each bid $b$, (\ref{eq:k recursion}) implies
\[
b^{k{\rm th}}_\theta((b_\theta^\psi)^{-1}(b))=  {\mathbb E} \left[\sum_{i\geq k}\beta_ib^\psi_\theta(x^{(i)}) \mid x^{(k)}=(b_\theta^\psi)^{-1}(b) \right]. 
\] 
The same argument as in the proof of Theorem~\ref{thm:main} shows that the right-hand side is increasing in $\theta$, using the MLRP of $G^\psi$. Hence, $G^{k{\rm th}}$ is more accurate than $G^\psi$ by Lemma~\ref{lem:competition}. 
\qed

\subsection{Proof of Proposition~\ref{prop:own bid}}\label{app:own bid}

Note that $\sum_{i=1}^n\alpha_i p_i(x)$ is strictly positive for every $x\in(0,1)$. By the revenue-equivalence argument \citep[e.g., Chapter 3 in ][]{krishna2009},  there is a unique equilibrium, which satisfies 
\begin{equation}\label{eq:alpha bid}
b^\alpha_\theta(x)=b^{\rm 1st}_\theta(x)
\frac{p_1(x)}{\sum_{i=1}^n\alpha_i p_i(x)}, \qquad x\in(0,1).
\end{equation}
For the loser-pay auction, (\ref{eq:alpha bid}) becomes $b^{\rm loser}_\theta(x)=b^{\rm 1st}_\theta(x)\left(\frac{x}{1-x}\right)^{n-1}$.

Applying (\ref{eq:alpha bid}) to both $b^{\alpha}_\theta$ and $b^{\alpha'}_\theta$ yields that for all $x\in(0,1)$,
\[
b^\alpha_\theta(x)=\frac{\sum_i\alpha'_ip_i(x)}{\sum_i\alpha_ip_i(x)}b^{\alpha'}_\theta(x).
\]
For $b$ whose induced quantiles are interior, substituting $x=(b_\theta^{\alpha'})^{-1}(b)$ gives
\[
b^{\alpha}_\theta\left( \left(b_\theta^{\alpha'}\right)^{-1}(b) \right)=  \frac{ \sum_{i}  \alpha'_i p_{i} \left( \left(b_\theta^{\alpha'} \right)^{-1}(b)\right)}{ \sum_{i} \alpha_i p_{i}\left( \left(b_\theta^{\alpha'}\right)^{-1}(b) \right)} \, b_\theta^{\alpha'} \left( \left(b_\theta^{\alpha'}\right)^{-1}(b) \right).
\]
The ratio on the right is increasing in $\theta$, because $(b_\theta^{\alpha'})^{-1}(b)$ is decreasing in $\theta$ and $\frac{\sum_i\alpha'_ip_i(x)}{\sum_i\alpha_ip_i(x)}$ is decreasing in $x$ by Lemma~\ref{lem:alpha}. The last factor is also increasing in $\theta$, by the generalized-inverse convention and the fact that $b^{\alpha'}_\theta$ is increasing in $\theta$. The case of boundary quantiles  follows by a continuity argument analogous to the proof of Theorem~\ref{thm:main}. Hence, $b^{\alpha}_\theta((b_\theta^{\alpha'})^{-1}(b))$ is increasing in $\theta$, which implies that $G^{\alpha}$ is more accurate than $G^{\alpha'}$ by Lemma~\ref{lem:competition}. \qed

\subsection{Proof of Proposition~\ref{prop:WOA}}

The equilibrium bidding function under the war of attrition is 
\[
b^{\mathrm{woa}}_\theta(v)
=\int_0^v w\frac{h_\theta(w)}{1-H_\theta(w)}\,dw,
\]
where $H_\theta(v)=F_\theta(v)^{n-1}$ and $h_\theta$ denotes its density. 
Since $h_\theta(w)/(1-H_\theta(w))$ is decreasing in $\theta$ by the MLRP of $(F_\theta)$, it follows that
$b^{\mathrm{woa}}_\theta(v)$ is decreasing in $\theta$ for every fixed $v$. Thus,
$G^{\mathrm{woa}}$ is less accurate than $G^{2\mathrm{nd}}$ by Lemma~\ref{lem:competition}.

We write the all-pay and war-of-attrition bids as functions of
quantiles:
\[
b^{\mathrm{all}}_\theta(x)=\int_0^x F_\theta^{-1}(y)(n-1)y^{n-2}\,dy, \;\;  \quad b^{\mathrm{woa}}_\theta(x)=\int_0^x F^{-1}_\theta(y)
\frac{(n-1)y^{n-2}}{1-y^{n-1}}\,dy.
\]

Since $F_\theta$ is FOSD-increasing in $\theta$, $b^{\mathrm{woa}}_\theta(x)$ is increasing in $\theta$. 
Hence, for each bid $b$, the corresponding quantile
$
x_\theta(b)=(b^{\mathrm{woa}}_\theta)^{-1}(b)
$
is decreasing in $\theta$. For each $b>0$,
\begin{align*}
\frac{d}{db}b^{\mathrm{all}}_\theta\!\left(x_\theta(b)\right)
&=\frac{(b^{\mathrm{all}}_\theta)'(x_\theta(b))}{(b^{\mathrm{woa}}_\theta)'(x_\theta(b))} =1-x_\theta(b)^{n-1}.
\end{align*}
Since $x_\theta(b)$ is decreasing in $\theta$, the derivative above is increasing in
$\theta$. Moreover, $b^{\mathrm{all}}_\theta\!\left(x_\theta(0)\right)=0$ for every $\theta$. Therefore,
\[
b^{\mathrm{all}}_\theta\!\left(x_\theta(b)\right)
=\int_0^b\left(1-x_\theta(s)^{n-1}\right)ds
\]
is increasing in $\theta$ for every $b$, which implies $G^{\mathrm{all}}$ is more accurate than $G^{\mathrm{woa}}$  by Lemma~\ref{lem:competition}.
\qed

\subsection{Proof of Corollary~\ref{cor:prediction}}\label{app:cor-prediction}

We assume without loss that $\Theta=[0,1]$, and let $q$ denote the prior density on $\Theta$. 
We construct value distributions $(F^{K, \eta}_\theta)$ indexed by $K, \eta>0$, whose densities at each $v\in [0,1]$ are
\[
f_\theta^{K,\eta}(v)
=
\frac{e^{K\theta v}h_\eta(v)}
{\int_0^1 e^{K\theta w}h_\eta(w)\,dw}, \quad \quad \text{ where } 
h_\eta(v)
=
\frac{e^{-v/\eta}+e^{-(1-v)/\eta}}
     {2\eta(1-e^{-1/\eta})}.
\]
The MLRP holds because, for $\theta'>\theta$,
$
\frac{f_{\theta'}^{K,\eta}(v)}{f_\theta^{K,\eta}(v)}
=C_{\theta,\theta'}e^{K(\theta'-\theta)v},
$ for some constant $C_{\theta,\theta'}$. Note that $h_\eta$ is the density function of the mixture distribution $\tfrac12(H_{0,\eta}+H_{1,\eta})$, where $H_{0,\eta}$ and $H_{1,\eta}$ have respective densities
$
\frac{e^{-v/\eta}}{\eta(1-e^{-1/\eta})}
$ and $
\frac{e^{-(1-v)/\eta}}{\eta(1-e^{-1/\eta})}$.

Below, we consider the expected minimal losses $L(G_{m}^{{\rm 1st}, K, \eta})$ and $L(G_{n}^{{\rm 2nd}, K, \eta})$ under the first- and second-price auction with $m$ and $n$ buyers, respectively.


\smallskip

\noindent{\bf Second-price auction:}
Let $(\hat G^{{\rm 2nd}, K}_\theta)$ be an auxiliary profile of distributions with binary support $\{0,1 \}$ under which  the probability of observing 0 in state $\theta$ is $p_K(\theta)=\frac{1}{1+e^{K\theta}}$.
Given $n$ i.i.d.\ draws $(b_1, \ldots, b_n)$ from $(\hat G^{{\rm 2nd}, K}_\theta)$, consider drawing $b'_i\sim H_{b_i,\eta}$, independently across $i$. Thus,  conditional on $\theta$, each $b'_i$ is drawn i.i.d. from $p_K(\theta)H_{0,\eta}
+(1-p_K(\theta))H_{1,\eta}$. 

Let $\sigma_\eta$ be an optimal strategy based on the $n$ second-price auction bids drawn from $(G_\theta^{{\rm 2nd}, K,\eta})^{\otimes n}$. Suppose the DM instead observes $n$ i.i.d.\ draws from $(\hat G^{{\rm 2nd}, K}_\theta)^{\otimes n}$, and uses these to draw $(b'_1,\ldots, b'_n)$ as described above and then chooses $\sigma_\eta(b_1',\ldots, b'_n)$.  Observe 
\begin{eqnarray*}
&&\left\|
(G_\theta^{{\rm 2nd}, K,\eta})^{\otimes n}
-(p_K(\theta)H_{0,\eta}
+(1-p_K(\theta))H_{1,\eta})^{\otimes n}
\right\|_{\mathrm{TV}}  \\
&\leq& n \sup_{\theta\in[0,1]}
\left\|
F_\theta^{K,\eta}
-\left[p_K(\theta)H_{0,\eta}
+(1-p_K(\theta))H_{1,\eta}\right]
\right\|_{\mathrm{TV}}.
\end{eqnarray*}
The right-hand side goes to zero as $\eta\to 0$. Since losses are bounded by $\ell(1)$, by using the strategy described above under $(\hat G^{{\rm 2nd}, K}_\theta)^{\otimes n}$,  the DM can secure 
\[
L(\hat G^{{\rm 2nd}, K}_n)\leq \liminf_{\eta\to 0}L(G_{n}^{{\rm 2nd}, K, \eta}).  
\]

Under $(\hat G^{{\rm 2nd}, K}_\theta)^{\otimes n}$, the prior probability that at least one zero draw is observed is
at most $n\int_0^1 p_K(\theta)q(\theta)\,d\theta$, which goes to 0 as $K\to\infty$ by dominated convergence. 
Hence, for all large enough $K$, $
L(\hat G^{{\rm 2nd}, K}_n)\geq L_\emptyset- \varepsilon/2
$.
Given any such $K$, we can choose $\eta$ small enough that $L(G_{n}^{{\rm 2nd}, K, \eta})\geq L_\emptyset-\varepsilon$.

\smallskip

\noindent{\bf First-price auction:} The first-price bid of value quantile $x$ is
\[
(G_{\theta}^{{\rm 1st}, K,\eta})^{-1}(x)
=
\frac{m-1}{x^{m-1}}
\int_0^x (F_\theta^{K,\eta})^{-1}(t) t^{m-2}\,dt.
\]
Note that,  as $\eta\to 0$,
$
(F_\theta^{K,\eta})^{-1}(x)\to 
\mathbf{1}\{x>p_K(\theta)\}
$ for every $\theta$ and almost every $x$. 
Thus, dominated convergence gives
\[
(G_{\theta}^{{\rm 1st}, K,\eta})^{-1}(x)
\longrightarrow
(G_{\theta}^{{\rm 1st}, K, 0})^{-1}(x):
=
\begin{cases}
0
& \text{ if } x\leq p_K(\theta),\\[1mm]
1-\left(\dfrac{p_K(\theta)}{x}\right)^{m-1} 
& \text{ if } x>p_K(\theta).
\end{cases}
\]

Consider a strategy $\sigma$ that only uses the first bid $b_1$ among the observations $b=(b_1,\ldots, b_m)$:
\[
\sigma(b)=\max\left\{0, \min\left\{1, \left(
-\frac{\log(1-b_1)}{(m-1)K}
\right)\right\} \right\}.
\]
For $x>p_K(\theta)$,
\[
\left|
\sigma
\bigl((G_{\theta}^{{\rm 1st}, K, 0})^{-1}(x),b_2, \ldots, b_m \bigr)-\theta
\right|
\leq
\frac{\log2+|\log x|}{K}.
\]
Under $\eta=0$, the probability of $G_{\theta}^{{\rm 1st}, K, \eta}(b_1)\leq p_K(\theta)$ at $\theta$ is $p_K(\theta)\leq \frac{1}{2+\theta K}$. 
Hence,
\[
\limsup_{\eta\to 0}
L(G_{m}^{\mathrm{1st}, K, \eta})
\leq  \ell(1)\int \frac{1}{2+\theta K} q(\theta)d\theta 
+\int_0^1
\ell\left(
\min\left\{1,\frac{\log2+|\log x|}{K}\right\}
\right)dx.
\]
The right-hand side goes to 0 as $K\to\infty$ by dominated convergence. Pick $K$ large enough that the right-hand side is strictly less than $\varepsilon$. 
 Then for all small enough $\eta$, we have $L(G_{m}^{\mathrm{1st}, K, \eta})\leq \varepsilon$. \qed

\subsection{Proof of Proposition~\ref{prop:reserve}}

\noindent{\bf First part:} Writing bids as functions of quantiles,  for every $b\in\mathbb R_+$, we obtain 
$
(G^{{\rm 1st}, r}_\theta)^{-1}\!\left(G^{\psi,r}_\theta(b)\right)
=b_\theta^{{\rm 1st}, r}\!\left(G^{\psi,r}_\theta(b)\right)
$
as in the proof of Lemma~\ref{lem:competition}. For $0\leq b\leq r$, we have 
\[
b_\theta^{{\rm 1st}, r}\!\left(G^{\psi,r}_\theta(b)\right)=b_\theta^{{\rm 1st}, r}\!\left(F_\theta(r)\right)=
\begin{cases}
0 \text{ for } r \geq \overline v_\theta \\
r \text{ for } r\in [\underline v_\theta, \overline v_\theta) \\
\underline v_\theta \text{ for } r<\underline v_\theta, 
\end{cases}
\]
which is increasing in $\theta$ by the MLRP of $(F_\theta)$.

For $b>r$, $G^{\psi,r}_\theta(b)=(b^{\psi, r}_\theta)^{-1}(b)\geq F_\theta(r)$ for each $\theta$. Observe that revenue-equivalence arguments give the analog of (\ref{eq:rev-equivalence}) for every $x\in [F_\theta(r), 1)$. Based on this, the  same  argument as in the proof of Theorem~\ref{thm:main} shows that
$
b^{{\rm 1st},r}_\theta\!\left((b^{\psi, r}_\theta)^{-1}(b)\right)
$  is increasing in $\theta$.

\smallskip 
\noindent{\bf Second part:} Suppose $0\leq r'<r$. Applying integration by parts to the closed-form bidding function $b^{\rm 1st}_\theta$ yields that for any $x\geq F_\theta(r)$ with $x\in (0,1)$, 
\begin{equation}\label{eq:reserve-difference}
b^{{\rm 1st}, r'}_{\theta}(x)
=b^{{\rm 1st}, r}_{\theta}(x)-\int_{r'}^r\left(\frac{F_\theta(t)}{x}\right)^{n-1}dt.
\end{equation}
Moreover, for every bid $b$,
\[
(G^{{\rm 1st},r'}_\theta)^{-1}\!\left(G^{{\rm 1st},r}_\theta(b)\right)
=b^{{\rm 1st},r'}_\theta\!\left(G^{{\rm 1st},r}_\theta(b)\right),
\]
where bids at quantiles $0$ and $1$ are interpreted as the corresponding endpoint limits. 

Fix $\theta>\theta'$. We need to show that for every $b$, $b^{{\rm 1st},r'}_\theta\!\left(G^{{\rm 1st},r}_\theta(b)\right)
\geq
b^{{\rm 1st},r'}_{\theta'} \!\left(G^{{\rm 1st},r}_{\theta'} (b)\right)$. We consider two cases:

\noindent{\bf Case (i):} At both states $\theta$ and $\theta'$,  there are participating types when the reserve price is $r$.

For a fixed bid $b$, 
we focus on the case $G^{{\rm 1st},r}_\theta(b),G^{{\rm 1st},r}_{\theta'}(b)>0$; the zero-endpoint cases then follow by right limits. The endpoints of the positive-bid support at reserve $r$ are increasing in $\theta$: Indeed, the lower endpoint is $\max\{r,\underline v_\theta\}$, while the upper endpoint is $\int \max\{w, r\}  dF_\theta^{n-1}(w)$, which is increasing by the MLRP of $(F_\theta)$. It follows that
\begin{equation}\label{eq:reserve-projected-bids}
b^{{\rm 1st},r}_\theta(G^{{\rm 1st},r}_\theta(b))
\geq b^{{\rm 1st},r}_{\theta'}(G^{{\rm 1st},r}_{\theta'}(b)),
\end{equation}
which can be written as
\[
\int_r^\infty\left[
\min\left\{\left(\frac{F_\theta(t)}{G^{{\rm 1st},r}_{\theta}(b)}\right)^{n-1},1\right\}
-\min\left\{\left(\frac{F_{\theta'}(t)}{G^{{\rm 1st},r}_{\theta'}(b)}\right)^{n-1},1\right\}
\right]dt\leq0.
\]
Observe that $\left(\frac{F_\theta(t)}{F_{\theta'}(t)}\frac{G^{{\rm 1st},r}_{\theta'}(b)}{G^{{\rm 1st},r}_\theta(b)}\right)^{n-1}$ is increasing in $t$ by the MLRP of $(F_\theta)$. Thus, the integrand $
\min\left\{\left(\frac{F_\theta(t)}{G^{{\rm 1st},r}_{\theta}(b)}\right)^{n-1},1\right\}
-\min\left\{\left(\frac{F_{\theta'}(t)}{G^{{\rm 1st},r}_{\theta'}(b)}\right)^{n-1},1\right\}$ crosses zero at most once, from negative to positive.  Hence, the above inequality implies 
$
\left(\frac{F_\theta(t)}{G^{{\rm 1st},r}_\theta(b)}\right)^{n-1}
\leq
\left(\frac{F_{\theta'}(t)}{G^{{\rm 1st},r}_{\theta'}(b)}\right)^{n-1}$
for every $t\leq r$.

Using (\ref{eq:reserve-difference}) and (\ref{eq:reserve-projected-bids}), we conclude that
\begin{align*}
b^{{\rm 1st},r'}_\theta(G^{{\rm 1st},r}_\theta(b))-b^{{\rm 1st},r'}_{\theta'}(G^{{\rm 1st},r}_{\theta'}(b))
&=b^{{\rm 1st},r}_\theta(G^{{\rm 1st},r}_{\theta}(b))-b^{{\rm 1st},r}_{\theta'}(G^{{\rm 1st},r}_{\theta'}(b))\\
&\quad+\int_{r'}^r\left[
\left(\frac{F_{\theta'}(t)}{G^{{\rm 1st},r}_{\theta'}(b)}\right)^{n-1}
-\left(\frac{F_\theta(t)}{G^{{\rm 1st},r}_\theta(b)}\right)^{n-1}
\right]dt\geq0,
\end{align*}
as required.

\noindent{\bf Case (ii):}  State $\theta'$ has no participating type when the reserve price is $r$. 

If neither state has a participating type at $r$, both $G^{{\rm 1st},r}_\theta$ and $G^{{\rm 1st},r}_{\theta'}$ are degenerate at zero, and the desired inequality follows because the upper endpoint of bid distributions under $r'$ is increasing in the state. Suppose instead that state $\theta$ has participating types at $r$, and take bid $b$ such that $x=G^{{\rm 1st},r}_\theta(b)\geq F_\theta(r)$ and $G^{{\rm 1st},r}_\theta(b)>0$. If $F_\theta(r)>0$ and $x>0$, then, for every $t\leq r$, the MLRP of $(F_\theta)$ and $F_{\theta'}(r)=1$ give
\[
\frac{F_\theta(t)}{G^{{\rm 1st},r}_\theta(b)}
\leq\frac{F_\theta(t)}{F_\theta(r)}
\leq F_{\theta'}(t).
\]
 If $F_\theta(r)=0$, all values in state $\theta$ are at least $r$, while $F_{\theta'}(r)=1$ by assumption.
Thus, in either case,  $\min\{\left(\frac{F_\theta(\cdot)}{G^{{\rm 1st},r}_\theta(b)}\right)^{n-1},1\}$ first-order stochastically dominates $F^{n-1}_{\theta'}(\cdot)$. Therefore,
\[
b^{{\rm 1st},r'}_\theta\!\left(G^{{\rm 1st},r}_\theta(b)\right)
\geq
b^{{\rm 1st},r'}_{\theta'}(1) \geq b^{{\rm 1st},r'}_{\theta'} \!\left(G^{{\rm 1st},r}_{\theta'} (b)\right),
\]
as required. For any $b$ with $G^{{\rm 1st},r}_\theta(b)=0$, the same conclusion is obtained by a right limit. \qed

\subsection{Proof of Proposition~\ref{prop:interdependent}}\label{app:interdependent}

To prove the first part, we exploit the fact that revenue equivalence continues to apply in our interdependent value setting, as types $v_i$ are assumed i.i.d.\ conditional on each $\theta$ \citep[see, e.g.,][Chapter 7]{krishna2009}. Then the same arguments as in the proof of Theorem~\ref{thm:main} yield (\ref{eq:main}), which can again be used to show that $G^{\rm 1st}$ is more accurate than $G^{\psi}$, provided that $G^{\psi}$ satisfies the MLRP.

For the second part, note that  $b_\theta^{{\rm 2nd}}$ is increasing in $\theta$ while $b_\theta^{\rm vcg}$ is constant in $\theta$. Hence, $G^{\rm 2nd}$ is more accurate than $G^{\rm vcg}$  by Lemma~\ref{lem:competition}. \qed

\subsection{Proof of Proposition~\ref{prop:imperfect} }\label{app:imperfect}

We use the following extension of Lemma~\ref{lem:competition} to the current setting (see Appendix~\ref{sec:gencompproof} for the proof).  Note that, unlike Lemma~\ref{lem:competition} (where $\xi = \theta$), Lemma~\ref{lem:gencompetition} imposes an affiliation condition in order to deal with imperfect state observations.

\begin{lem}\label{lem:gencompetition}
Consider any two auctions with equilibrium bidding functions $(b_\xi)$ and $(\hat b_\xi)$ and bid distributions $G= (G_\theta)$ and $\hat{G}= (\hat G_\theta)$.  Suppose that 
\begin{enumerate}
\item $b_\xi (\hat{b}_\xi^{-1}(b))$ is increasing in $\xi$ for all $b$; and
\item $(\theta, \xi, b_\xi(v))$ are affiliated; and
\item for every $\theta$, $G_\theta$ is continuous.
\end{enumerate}
Then $G$ is more accurate than $\hat{G}$.
\end{lem}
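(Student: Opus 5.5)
The plan is to reduce accuracy to a single monotone-coupling inequality on quantiles and then use affiliation to aggregate it over $\xi$. Write $\Phi_\theta(b)=P(b_i\le b\mid\theta)$ and $\hat\Phi_\theta$ for the two individual-bid cdfs under $G$ and $\hat G$. Since $G_\theta$ is continuous (condition 3), $G$ is more accurate than $\hat G$ exactly when the following holds for all $\theta>\theta'$ and $\hat b$: if $b$ satisfies $G_{\theta'}(b)=\hat G_{\theta'}(\hat b)$, then $G_\theta(b)\le\hat G_\theta(\hat b)$. This is the equivalent form of the definition stated in Section~\ref{sec:model}. So I fix $\theta'$, $\hat b$, and a matching $b$, and aim to prove that inequality.

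\emph{Step 1: conditional on $\xi$, the two bids are a deterministic monotone coupling.} For each $\xi$, the map $T_\xi(\hat b):=b_\xi(\hat b_\xi^{-1}(\hat b))$ pushes the conditional law of $\hat b_\xi(v_i)$ onto that of $b_\xi(v_i)$, because both bids are strictly increasing functions of the same $v_i$. Hence
\[
\hat G_\theta(\hat b)=\int P(v_i\le \hat b_\xi^{-1}(\hat b)\mid\xi,\theta)\,dP(\xi\mid\theta)=\int P(b_\xi(v_i)\le T_\xi(\hat b)\mid \xi,\theta)\,dP(\xi\mid\theta).
\]
By condition 1, $T_\xi(\hat b)$ is increasing in $\xi$. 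The difference $\hat G_\theta(\hat b)-G_\theta(b)$ can therefore be written as
\[
\int \Big[P\big(b_i\le T_\xi(\hat b)\mid\xi,\theta\big)-P\big(b_i\le b\mid\xi,\theta\big)\Big]\,dP(\xi\mid\theta).
\]
The bracketed integrand is nonpositive for $\xi<\xi^*$ and nonnegative for $\xi>\xi^*$, where $\xi^*$ is the crossing point $T_{\xi^*}(\hat b)=b$. So it is single-crossing in $\xi$, from below to above.

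\emph{Step 2: reweight using affiliation.} By condition 2, $(\theta,\xi,b_i)$ is affiliated, so the conditional law of $\xi$ given $\theta$ is MLRP-increasing in $\theta$. Moreover, $P(b_i\le c\mid\xi,\theta)$ is decreasing in $\theta$ given $\xi$. I would first try to handle the second effect by showing that it actually vanishes or has the right sign. In the intended setting where $v_i\perp\theta\mid\xi$, the conditional cdf does not depend on $\theta$; in general, I would use affiliation to show that the ratio $P(b_i\le c\mid\xi,\theta)/P(b_i\le c'\mid \xi,\theta)$ is monotone in $\theta$ for $c<c'$, which is the reverse-hazard consequence of affiliation. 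Granting this, the integrand at $\theta$ is, pointwise in $\xi$, a reweighting of the integrand at $\theta'$ that preserves the sign pattern. Combined with the MLRP shift of $P(\xi\mid\theta)$, this is the standard Karlin single-crossing lemma: if $\int h(\xi)\,d\mu_{\theta'}=0$, $h$ crosses from $-$ to $+$, and $\mu_\theta$ is MLRP-larger, then $\int h\,d\mu_\theta\ge0$. Applying it gives $\hat G_\theta(\hat b)-G_\theta(b)\ge 0$, which is the required inequality.

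\emph{Main obstacle.} The delicate point is that the integrand $h$ itself depends on $\theta$ when $b_i$ and $\theta$ are not conditionally independent given $\xi$. Normalizing the integrand to vanish at $\theta'$ and transporting the sign pattern to $\theta$ requires the full joint affiliation of $(\theta,\xi,b_\xi(v))$, not merely MLRP of $\xi$ given $\theta$. My approach would be to write everything in terms of the joint density of $(\xi,b_i)$ given $\theta$. I would then apply the Karlin--Rubin/Ahlswede--Daykin type inequality on the two-dimensional lattice $\{(\xi,c)\}$, using the two lower sets $\{c\le b\}$ and $\{c\le T_\xi(\hat b)\}$. These sets are nested in the lattice sense because $T_\xi$ is monotone. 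Boundary cases, where $b$ is at an endpoint of the support or $\xi^*$ does not exist, I would handle by continuity using condition 3, as in the proof of Theorem~\ref{thm:main}.
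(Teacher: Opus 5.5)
Your Step 1 decomposition over $\xi$ with a crossing point $\xi^*$, and the lattice idea in your last paragraph, are the paper's route (Lemmas~\ref{lem:sc1}--\ref{lem:sc2}). As written, however, the decisive step and the reduction both have gaps.

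\textbf{The decisive step.} As you suspect, the two-stage argument in Step~2 does not suffice. The argument combines MLRP of $\xi\mid\theta$ with reverse-hazard ordering within each $\xi$, then invokes Karlin. Pointwise sign preservation plus an MLRP shift in $\xi$ does not control $\int h_\theta\,d\mu_\theta$ when $h$ itself moves with $\theta$. What is needed is that the $\theta$-likelihood ratio on the positive region dominates that on the negative region. That is a comparison at \emph{different} $\xi$'s, and those separate properties do not deliver it.

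Your proposed fix is also mis-specified. The lower sets $\{c\le b\}$ and $\{c\le T_\xi(\hat b)\}$ are not nested, since their boundaries cross at $\xi^*$; define $\xi^*$ as a supremum, because $T_\xi$ may jump. The right objects are the set differences
$H=\{(\xi,c):\xi\ge\xi^*,\ b<c\le T_\xi(\hat b)\}$ and $L=\{(\xi,c):\xi<\xi^*,\ T_\xi(\hat b)<c\le b\}$. With these, $\hat G_\theta(\hat b)-G_\theta(b)=\mu_\theta(H)-\mu_\theta(L)$, where $\mu_\theta$ is the law of $(\xi,b_i)$ given $\theta$. Every point of $H$ dominates every point of $L$ coordinatewise. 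So joint affiliation gives $g(\theta,z)g(\theta',z')\ge g(\theta',z)g(\theta,z')$ for $z\in H$, $z'\in L$. Integrating yields $\mu_\theta(H)\mu_{\theta'}(L)\ge\mu_{\theta'}(H)\mu_\theta(L)$, with no need for Ahlswede--Daykin.

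\textbf{The reduction.} Under your hypothesis $\mu_{\theta'}(H)=\mu_{\theta'}(L)$, this inequality yields $\mu_\theta(H)\ge\mu_\theta(L)$ only when the common value is positive. If both vanish it is vacuous, and this is not a mere technicality. Your ``equivalent form'' is false unless the matched quantile is restricted to $(0,1)$. Take the uniform example of Figure~\ref{fig:uniform} with $G=G^{\rm 1st}$, $\hat G=G^{\rm 2nd}$, $\theta'=0$, $\theta=2$, $\hat b=3$, $b=4$. Then $G_0(4)=\hat G_0(3)=1$, but $G_2(4)=1>\hat G_2(3)=1/3$, even though $G^{\rm 1st}$ is more accurate.

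The paper instead proves the strict version: $\mu_{\theta'}(H)>\mu_{\theta'}(L)$ implies $\mu_\theta(H)\ge\mu_\theta(L)$. Strictness either forces $\mu_\theta(L)=0$ or makes $\mu_{\theta'}(L)>0$, which licenses the division. The paper then applies this at bids $x$ below the target with $G_{\theta'}(x)<\hat G_{\theta'}(\hat b)$ and passes to the limit by continuity. It handles quantiles $0$ and $1$ through monotone support endpoints. It handles flat stretches of $G_{\theta'}$ (gaps in $\mathrm{supp}\,G_{\theta'}$) using the MLRP of the marginal family $(G_\theta)$: a gap at the lower state must also be a gap at the higher state. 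Your appeal to ``continuity as in Theorem~1'' does not address this gap case. The case can arise here because $G_\theta$ is a mixture over $\xi$ of conditional bid laws.
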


Given Lemma~\ref{lem:gencompetition}, to prove Proposition~\ref{prop:imperfect}, observe that $G^{{\rm 1st}}_\theta$ is continuous for each $\theta$ given the stated assumptions. 
Thus, it suffices to verify that $b^{{\rm 1st}}_\xi (({b}^\psi_\xi)^{-1}(b))$ is increasing in $\xi$. 
Since $(v_1,\ldots, v_n)$ are i.i.d.\ conditional on each $\xi$, revenue equivalence can again be used to derive the following analog of (\ref{eq:main}):
\[
b^{\rm 1st}_\xi((b_\xi^\psi)^{-1}(b))=  \sum_{i} \frac{p_{i}((b_\xi^\psi)^{-1}(b))}{p_1((b_\xi^\psi)^{-1}(b))} {\mathbb E}[ \psi_i(b_\xi^\psi(x^{(1)}),\ldots, b_\xi^\psi(x^{(n)})) \mid x^{(i)}=(b_\xi^\psi)^{-1}(b) ]. 
\]
The affiliation condition on $G^\psi$ guarantees that bids under $\psi$ satisfy MLRP. Thus, the same argument as in the proof of Theorem~\ref{thm:main} shows that $b^{{\rm 1st}}_\xi (({b}^\psi_\xi)^{-1}(b))$ is increasing in $\xi$. \qed

\subsubsection{Proof of Lemma~\ref{lem:gencompetition}}\label{sec:gencompproof}

We start with two preliminary lemmas. 

\begin{lem}\label{lem:sc1}
Let $X$ be a measurable subset of $\mathbb R^d$ for some $d$, and assume that $(\theta, x) \in \Theta \times X$ are affiliated, with joint density $g(\theta, x)$ and conditional measures $\mu_\theta \in \Delta(X)$.  Suppose $H, L \subseteq X$ satisfy $H\geq L$, i.e., for all $h \in H$ and $\ell \in L$, we have $h  \geq \ell$.  Then
\[
 \mu_{\theta}(H) - \mu_{\theta} (L)>  0    \implies  \mu_{\theta'}(H) - \mu_{\theta'}(L) \geq 0 \ \forall \theta' > \theta.
\]
\end{lem}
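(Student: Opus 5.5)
The plan is to pass from the pointwise affiliation inequality on densities to an inequality between the products of the conditional measures of $H$ and $L$, and then finish with a short case split on whether $\mu_\theta(L)$ is zero.

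\textbf{Step 1 (pointwise inequality).} Write the conditional density as $g(\theta,x)/g_\Theta(\theta)$, where $g_\Theta$ is the marginal density of $\theta$. Affiliation of $(\theta,x)$ says that for $\theta'>\theta$ and any $x,y\in X$,
\[
g(\theta', x\vee y)\, g(\theta, x\wedge y)\geq g(\theta',x)\, g(\theta,y).
\]
Take $h\in H$ and $\ell\in L$. Because $H\geq L$ means $h\geq \ell$ componentwise, we have $h\vee\ell=h$ and $h\wedge\ell=\ell$. Applying affiliation to the pair $(\theta',\ell)$ and $(\theta,h)$ therefore gives
\[
g(\theta',h)\,g(\theta,\ell)\geq g(\theta,h)\,g(\theta',\ell).
\]
Dividing both sides by $g_\Theta(\theta)g_\Theta(\theta')$ turns this into the same inequality for the conditional densities.

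\textbf{Step 2 (integration).} Integrate the conditional-density inequality over $h\in H$ and $\ell\in L$ with respect to the product of the base measures. By Fubini this yields
\[
\mu_{\theta'}(H)\,\mu_\theta(L)\;\geq\;\mu_\theta(H)\,\mu_{\theta'}(L).
\]

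\textbf{Step 3 (conclusion).} Assume $\mu_\theta(H)>\mu_\theta(L)$.
\begin{itemize}
\item If $\mu_\theta(L)=0$, the left side of Step 2 is zero. Since $\mu_\theta(H)>0$, this forces $\mu_{\theta'}(L)=0$, so $\mu_{\theta'}(H)\geq 0=\mu_{\theta'}(L)$.
\item If $\mu_\theta(L)>0$, Step 2 gives $\mu_{\theta'}(H)\geq \mu_{\theta'}(L)\cdot \mu_\theta(H)/\mu_\theta(L)$. The ratio exceeds $1$ and $\mu_{\theta'}(L)\geq 0$, so $\mu_{\theta'}(H)\geq \mu_{\theta'}(L)$.
\end{itemize}

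\textbf{Expected obstacle.} The mathematics is short. The only delicate point is Step 1: reading $H\geq L$ as a componentwise order in $\mathbb R^d$, so that affiliation's lattice operations reduce to $h\vee\ell=h$ and $h\wedge\ell=\ell$. The remaining technicality is making sure conditional densities are well defined, which holds for $\theta$ with $g_\Theta(\theta)>0$; other $\theta$ are a null set and do not affect the statement.
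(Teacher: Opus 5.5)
Your proposal is correct and follows essentially the same argument as the paper. You apply affiliation to the pair $(\theta',\ell),(\theta,h)$, divide by the marginals, and integrate to obtain $\mu_{\theta'}(H)\mu_\theta(L)\geq\mu_\theta(H)\mu_{\theta'}(L)$, then finish with a case split. The only difference is cosmetic: the paper splits on whether $\mu_{\theta'}(L)=0$ (deducing $\mu_\theta(L)>0$ otherwise), whereas you split on $\mu_\theta(L)$, which works equally well.
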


\begin{proof}
Consider $\theta' > \theta$ and suppose $\mu_\theta(H) - \mu_\theta(L) > 0$. By affiliation, for all $h \in H, \ell \in L$, 
$
g(\theta', h) g(\theta, \ell) \geq  g(\theta, h)g(\theta', \ell) 
$.  Dividing both sides by the product of the marginal densities $g(\theta') g(\theta)$, we obtain 
$
g_{\theta'}(h) g_\theta(\ell) \geq g_\theta(h) g_{\theta'} (\ell).
$
This implies
\begin{align}
\mu_{\theta'}(H) \mu_\theta(L) \geq \mu_\theta(H)\mu_{\theta'} (L) .\label{eqn:1}
\end{align}

If $\mu_{\theta'}(L) = 0$, the desired conclusion is immediate.  Suppose $\mu_{\theta'}(L) > 0$.  Since $\mu_\theta(H) > \mu_\theta(L) \geq 0$, (\ref{eqn:1}) implies $\mu_\theta(L) > 0$.  Hence, by (\ref{eqn:1}),
$
\mu_{\theta'}(H) \geq \frac{\mu_{\theta}(H)}{\mu_{\theta}(L)} \mu_{\theta'}(L) \geq \mu_{\theta'}(L)
$. \end{proof}

\begin{lem}\label{lem:sc2}
Consider any two auctions with equilibrium bidding functions $(b_\xi)$ and $(\hat b_\xi)$ and bid distributions $G= (G_\theta)$ and $\hat{G}= (\hat G_\theta)$.  Suppose that 
\begin{enumerate}
\item $b_\xi (\hat{b}_\xi^{-1}(b))$ is increasing in $\xi$ for all $b$; and
\item $(\theta, \xi, b_\xi(v))$ are affiliated.
\end{enumerate}
Then whenever $\hat{G}_\theta(b) - G_\theta(h) > 0$, we have $\hat{G}_{\theta'}(b) - G_{\theta'}(h) \geq 0$ for all $\theta' > \theta$.
\end{lem}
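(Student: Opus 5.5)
We fix $\theta'>\theta$, bids $b,h$ with $\hat G_\theta(b)>G_\theta(h)$, and aim to show $\hat G_{\theta'}(b)\geq G_{\theta'}(h)$. The idea is to write both cdfs as probabilities of events in the space of $(\xi,v)$, or equivalently of $(\xi, b_\xi(v))$, and then apply Lemma~\ref{lem:sc1} to the affiliated vector $(\theta,\xi,b_\xi(v))$.

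\textbf{Step 1: rewrite both cdfs.} Let $y=b_\xi(v)$ denote an individual bid under the first auction. Then
\[
G_\theta(h)=\Pr_\theta(y\leq h).
\]
For the other auction, $\hat b_\xi(v)\leq b$ holds iff $v\leq \hat b_\xi^{-1}(b)$, up to boundary and tie conventions. Since $b_\xi$ is strictly increasing, this holds iff $y\leq c_\xi(b)$, where $c_\xi(b):=b_\xi(\hat b_\xi^{-1}(b))$. Hence
\[
\hat G_\theta(b)=\Pr_\theta\bigl(y\leq c_\xi(b)\bigr).
\]
Let $A=\{(\xi,y): y\leq c_\xi(b)\}$ and $B=\{(\xi,y): y\leq h\}$. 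The difference of the two cdfs is then
\[
\hat G_\theta(b)-G_\theta(h)=\mu_\theta(A\setminus B)-\mu_\theta(B\setminus A).
\]
Here $\mu_\theta$ is the conditional law of $(\xi,y)$ given $\theta$.

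\textbf{Step 2: order the two sets.} Set $H:=A\setminus B=\{y\in(h,c_\xi(b)]\}$ and $L:=B\setminus A=\{y\in(c_\xi(b),h]\}$. By hypothesis 1, $c_\xi(b)$ is increasing in $\xi$. So a point of $H$ has $c_\xi(b)>h$, and a point of $L$ has $c_{\xi}(b)<h$. Hence every $\xi$ appearing in $H$ strictly exceeds every $\xi$ appearing in $L$: if $\xi_H\leq\xi_L$, monotonicity would give $h<c_{\xi_H}(b)\leq c_{\xi_L}(b)<h$, a contradiction. Moreover, $y>h$ on $H$ and $y\leq h$ on $L$. Thus $H\geq L$ componentwise in $(\xi,y)$.

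\textbf{Step 3: apply Lemma~\ref{lem:sc1}.} With $X$ the range of $(\xi,y)$ and affiliation from hypothesis 2, Lemma~\ref{lem:sc1} shows that $\mu_\theta(H)-\mu_\theta(L)>0$ implies $\mu_{\theta'}(H)-\mu_{\theta'}(L)\geq 0$. This is exactly the claimed implication.

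\textbf{Main obstacle.} The delicate step is Step 1, the identification $\{\hat b_\xi(v)\leq b\}=\{b_\xi(v)\leq c_\xi(b)\}$ almost surely. This requires care with the left-inverse convention when $b$ lies outside the support of $\hat b_\xi$, where $c_\xi(b)$ equals an endpoint bid. It also requires care with possible ties on sets of measure zero. I would handle these as follows:
\begin{itemize}
\item use strict monotonicity and continuity of $b_\xi$ and $\hat b_\xi$ in $v$;
\item use that $v$ has a density conditional on $\xi$, so that boundary events have probability zero.
\end{itemize}
If needed, I would replace $H$ and $L$ by versions differing only on null sets; these still satisfy $H\geq L$.
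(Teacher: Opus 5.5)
Your proposal is correct and is essentially the paper's argument: you rewrite both cdfs as $\mu_\theta$-probabilities of the events $\{b_\xi(v)\le b_\xi(\hat b_\xi^{-1}(b))\}$ and $\{b_\xi(v)\le h\}$, express the difference as $\mu_\theta(H)-\mu_\theta(L)$ for two componentwise-ordered sets, and invoke Lemma~\ref{lem:sc1}. The only cosmetic difference is in how $H$ and $L$ are built. You take them directly as set differences and derive the ordering in $\xi$ from monotonicity of $c_\xi(b)$, whereas the paper splits the $\xi$-axis at the cutoff $\xi^*=\sup\{\xi: h_{\xi'}(b)\le h \ \forall \xi'\le \xi\}$; your route also avoids the paper's separate trivial case in which $h_\xi(b)\ge h$ for all $\xi$.
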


\begin{proof} Write $h_\xi(b) = b_\xi (\hat{b}_\xi^{-1}(b))$ and consider any $\theta' > \theta$.  Letting $\mu_\theta$ denote the conditional measure over $(\xi, b_\xi (v))$, we have
\begin{align*}
\hat{G}_{\theta'}(b) - G_{\theta'}(h) = \mu_{\theta'} \left( b_\xi(v) \leq h_\xi(b)  \right) - \mu_{\theta'} \left( b_\xi(v) \leq h \right).
\end{align*}

If $h_\xi(b) \geq h$ for all $\xi$, then trivially $\hat{G}_{\theta'}(b) - G_{\theta'}(h) \geq 0$.  So suppose there exists some $\xi$ for which $h_\xi(b) \leq h$ so that 
$
\xi^* = \sup \left\{ \xi : h_{\xi'}(b) \leq h \ \forall \xi' \leq \xi \right\}
$ is well-defined. Letting\[
H = \left\{ (\xi, b') :  \xi \geq \xi^*, b' \in [h, h_\xi(b)) \right\}, \quad  L= \left\{ (\xi, b') : \xi < \xi^* , b' \in (h_\xi(b), h] \right\},
\]
note that 
\begin{equation*}
\hat{G}_{\theta'}(b)  - G_{\theta'}(h)  = \mu_{\theta'}\left( b_\xi(v) \leq h_\xi(b) \right) - \mu_{\theta'} \left(b_\xi(v) \leq h \right) = \mu_{\theta'} \left( (\xi, b_\xi(v)) \in H \right) - \mu_{\theta'} \left( (\xi, b_\xi(v)) \in L  \right).
\end{equation*}
Similarly, $0 < \hat{G}_\theta(b) - G_\theta (h) = \mu_{\theta} \left( (\xi, b_\xi(v)) \in H \right) - \mu_{\theta} \left( (\xi, b_\xi(v)) \in L  \right)$.
Since $H \geq L$,  Lemma~\ref{lem:sc1}  implies that $\hat{G}_{\theta'}(b) - G_{\theta'}(h) \geq 0$.
\end{proof}

To complete the proof of Lemma~\ref{lem:gencompetition}, we fix $\theta'>\theta$ and a bid $b$, and show that $h:=G_\theta^{-1}(\hat G_\theta(b)) \leq  h':=G_{\theta'}^{-1}(\hat G_{\theta'}(b))$. If $\hat G_\theta(b)=0$, then $h$ is the lower endpoint of the support of $G_\theta$. Affiliation of $(\theta,\xi,b_\xi(v))$ implies that the marginal family $(G_\theta)$ satisfies the MLRP, so its lower support endpoint is nondecreasing and $h\leq h'$.

Suppose next that $\hat G_\theta(b)>0$. If $h=+\infty$, then $\hat G_\theta(b)=1$. Applying Lemma~\ref{lem:sc2} along finite bids tending to infinity gives $\hat G_{\theta'}(b)=1$, while the FOSD ordering of the upper support endpoints implies that the support of $G_{\theta'}$ is also unbounded. Thus, $h'=+\infty$. Next, suppose that $h$ is finite. At every $x<h$ such that $G_\theta(x)<\hat G_\theta(b)$, Lemma~\ref{lem:sc2} gives $G_{\theta'}(x)\leq \hat G_{\theta'}(b)$. Hence, by continuity,
\[
G_{\theta'}\left(\sup\{x<h:G_\theta(x)<\hat G_\theta(b)\}\right)\leq \hat G_{\theta'}(b).
\]
If the supremum in this display equals $h$, it follows immediately that $G_{\theta'}(h)\leq \hat G_{\theta'}(b)$. Otherwise, $G_\theta$ is constant between that supremum and $h$, and the support of $G_\theta$ resumes at $h$. In this case, $G_{\theta'}$ is constant on the same interval: positive density under $\theta'$ inside this gap, compared with positive density under $\theta$ at a higher bid in its support, would violate the MLRP. Thus $G_{\theta'}(h)\leq \hat G_{\theta'}(b)$ in this case as well.

If $h$ belongs to the support of $G_{\theta'}$, the definition of the right inverse now gives $h'\geq h$. If it does not, since the upper support endpoints increase in the state,  the support of $G_{\theta'}$ resumes above $h$; continuity implies that its cdf at the first such support point is still at most $\hat G_{\theta'}(b)$, and again $h'\geq h$. \qed

\subsection{Proof of Proposition~\ref{prop:correlated}}\label{app:correlated}

Denote by ${G}^{\rm 2nd}_\theta (b' \mid b) = F_\theta \left( \left(b_\theta^{\rm 2nd} \right)^{-1}(b') \mid \left(b^{\rm 2nd}_\theta \right)^{-1}(b) \right) $
the cdf of $b'=\max_{j\not=i}b^{\rm 2nd}_\theta(v_j)$ conditional on $b^{\rm 2nd}_\theta(v_i) = b$, and by $g^{\rm 2nd}_\theta( \cdot \mid b)$ its density. For each $b'\leq b$, define
\[
\hat{L}_\theta (b' \mid b) = L_\theta ((b^{\rm 2nd}_\theta)^{-1}(b') \mid (b^{\rm 2nd}_\theta)^{-1}(b)) = e^{- \intop_{b'}^{b} \frac{g^{\rm 2nd}_\theta(x \mid x)}{{G}^{\rm 2nd}_\theta(x \mid x)} dx}.
\]
  Given this, we can write $
b_\theta^{\rm 1st} \left( \left(b_\theta^{\rm 2nd} \right)^{-1}(b) \right) = \intop_{0}^{b} b' \, d\hat{L}_\theta (b' \mid  b) \label{eqn:2}$.

By assumption, $(\theta,b_1^{\rm 2nd},\ldots,b_n^{\rm 2nd})$ is
affiliated, and its bid coordinates are symmetric. Hence,
\citet[Theorem 2]{milgrom1982} implies that, for every $i$,
$
(\theta,b_i^{\rm 2nd},\max_{j\neq i}b_j^{\rm 2nd})
$
is affiliated. It follows that, for any $b$, $G^{\rm 2nd}_\theta(\cdot\mid b)$ is increasing in $\theta$ in the reverse-hazard-rate order. Therefore $\hat L_\theta(\cdot\mid b)$ is FOSD-increasing in $\theta$, and $b_\theta^{\rm 1st}((b_\theta^{\rm 2nd})^{-1}(b))$ is increasing in $\theta$  when $(b_\theta^{\rm 2nd})^{-1}(b)\in (\underline v_\theta,\overline v_\theta)$. The boundary cases follow by observing that affiliation orders the support endpoints,  and $b_\theta^{\rm 1st}((b_\theta^{\rm 2nd})^{-1}(b))\leq b$, with equality at the lower endpoint.
 Lemma~\ref{lem:competition} then implies that $G^{\rm 1st}$ is more accurate than $G^{\rm 2nd}$. \qed

\subsection{Proof of Proposition~\ref{prop:multiple}}
Since $(F_\theta)$ satisfies the MLRP, the distribution of $v_{-i}^{(m)}$ also satisfies the MLRP \citep[Theorem 1.C.33]{shaked2007}. For every fixed $v$, upper truncation to $(-\infty,v]$ preserves the MLRP, and therefore
$
b^{\rm disc}_\theta(v)={\mathbb E}_\theta[v_{-i}^{(m)}\mid v_{-i}^{(m)}\leq v]
$
is increasing in $\theta$. Since $b^{\rm vcg}_\theta(v)=v$ is state-independent, $b_\theta^{\rm disc}((b_\theta^{\rm vcg})^{-1}(b))$ is increasing in $\theta$ when $(b_\theta^{\rm vcg})^{-1}(b)\in (\underline v_\theta, \overline v_\theta)$. The boundary cases follow by observing that the MLRP orders the support endpoints,  and $b_\theta^{\rm disc}(v)\leq v$, with equality at the lower endpoint.
By Lemma~\ref{lem:competition}, $G^{\rm disc}$ is more accurate than $G^{\rm vcg}$. \qed

\subsection{Details for Remark~\ref{rem:MLRP} on the MLRP of $(G^{\psi}_\theta)$}\label{app:MLRP} 

We present two natural settings in which $(G^{\psi}_\theta)$ satisfies the MLRP: $k$th-price and all-pay auctions with power-distribution values; and $k$th-price auctions with a sufficiently large number of bidders under all value distributions $(F_\theta)$ satisfying some technical conditions.

\subsubsection{Parametric Environment}\label{app:MLRP-parametric}

Suppose each value distribution $F_\theta$ is a power distribution, i.e., $F_\theta(v)=(v/\alpha_\theta)^{\beta_\theta}$, for some $\alpha_\theta, \beta_\theta>0$. Then the support of $F_\theta$ is $[0, \alpha_\theta]$.  We assume that both $\alpha_\theta$ and $\beta_\theta$ are increasing in $\theta$, which is equivalent to the MLRP of $(F_\theta)$. 

As the calculations below verify, $(G^{k{\rm th}}_\theta)$ with $k\leq 2$ and $(G^{\rm all}_\theta)$ satisfy the MLRP.  
Moreover, for any $k \geq 3$, $(G^{k{\rm th}}_\theta)$ satisfies the MLRP if either (i) $\beta_\theta$ is independent of $\theta$, or (ii) for each pair of states $\theta'>\theta$, $\frac{\alpha_{\theta'}}{\alpha_{\theta}}$ is sufficiently large given $\beta_\theta,\beta_{\theta'}$. 

To illustrate the role of increasing $\alpha_\theta$, recall that 3rd-price bids $b^{\rm 3rd}_\theta (v)$ are decreasing in $\theta$. If $\alpha_\theta$ is constant while $\beta_\theta$ is strictly increasing, then the highest possible bid $b^{\rm 3rd}_\theta(\alpha_\theta)$ is strictly decreasing in $\theta$, which leads to an MLRP violation (however, even in this case, $G^{\rm 1st}$ dominates $G^{\rm 3rd}$ by Proposition~\ref{prop:example}).  Note that it is natural in the current setting to assume that the highest possible value $\alpha_\theta$ is strictly increasing in $\theta$. Indeed, under the interpretation of our model where values are functions $v_i=u(\theta, \varepsilon_i)$ of a common fundamental $\theta$ and state-independent idiosyncratic taste shock $\varepsilon_i$, this is the case as long as $u$ is strictly increasing in $\theta$.

\medskip

\noindent{\bf Calculations.} As shown by \cite{mihelich2020}, the $k$th-price equilibrium bidding function is linear in types, i.e., $b^{k{\rm th}}_\theta(v)=\delta_\theta v$, where 
\[
\delta_\theta=\frac{\Gamma(n-k+1)\Gamma(n-1+1/\beta_\theta)}{\Gamma(n-1)\Gamma(n-k+1+1/\beta_\theta)}
\]
and $\Gamma$ is the Gamma function. 
The corresponding bid distribution is given by $G^{k{\rm th}}_\theta(b)=\left(\frac{b}{\delta_\theta\alpha_\theta}\right)^{\beta_\theta}$, which is supported on $[0, \delta_\theta\alpha_\theta]$ and admits the density $g_\theta(b)=(\delta_\theta\alpha_\theta)^{-\beta_\theta}\beta_\theta b^{\beta_\theta-1}$. 
Thus, for all states $\theta'>\theta$ and bids $b'>b>0$ that belong to $[0, \delta_\theta\alpha_\theta]\cap [0, \delta_{\theta'}\alpha_{\theta'}]$,
\[
\frac{g^{k{\rm th}}_\theta(b')}{g^{k{\rm th}}_\theta(b)}=(b'/b)^{\beta_\theta-1} \leq (b'/b)^{\beta_{\theta'}-1}=\frac{g^{k{\rm th}}_{\theta'}(b')}{g^{k{\rm th}}_{\theta'}(b)},
\]
consistent with the MLRP. Thus, the MLRP holds if and only if the highest bid in the support $\delta_\theta\alpha_\theta$ is increasing in $\theta$. 
This is the case for $k=1$ (where $\delta_\theta=\frac{(n-1)\beta_\theta}{(n-1)\beta_\theta+1}$) and $k = 2$ (where $\delta_\theta=1$). 
For $k\geq 3$ and for all $\theta'>\theta$, we have $\delta_{\theta'}\alpha_{\theta'}\geq\delta_\theta\alpha_\theta$ whenever (i) $\beta_\theta$ is independent of $\theta$ or (ii) $\frac{\alpha_{\theta'}}{\alpha_{\theta}}$ is sufficiently large.  

Finally, under the all-pay auction, the equilibrium bidding function is
\[
b^{\rm all}_\theta(v)=(v/\alpha_\theta)^{\beta_\theta(n-1)} b_\theta^{\rm 1st}(v)=\gamma_\theta v^{\beta_\theta(n-1)+1},
\quad \text{ where } \quad
\gamma_\theta=\frac{(n-1)\beta_\theta}{(n-1)\beta_\theta+1}\alpha_\theta^{-\beta_\theta(n-1)}.
\]
The corresponding bid cdf and density are
\begin{equation*}
G^{\rm all}_\theta(b)
=\alpha_\theta^{-\beta_\theta}
\left(\frac{b}{\gamma_\theta}\right)^{\frac{\beta_\theta}{\beta_\theta(n-1)+1}}, \quad \quad
g^{\rm all}_\theta(b)
=\frac{\alpha_\theta^{-\beta_\theta}\beta_\theta}{\beta_\theta(n-1)+1}
\left(\frac{b}{\gamma_\theta}\right)^{\frac{\beta_\theta}{\beta_\theta(n-1)+1}-1}
\gamma_\theta^{-1},
\end{equation*}
with support $[0,\gamma_\theta\alpha_\theta^{\beta_\theta(n-1)+1}]$. Thus, for $\theta'>\theta$ and $b'>b>0$ in the common support,
\begin{equation*}
\frac{g^{\rm all}_\theta(b')}{g^{\rm all}_\theta(b)}
=(b'/b)^{\frac{\beta_\theta}{\beta_\theta(n-1)+1}-1}
\leq (b'/b)^{\frac{\beta_{\theta'}}{\beta_{\theta'}(n-1)+1}-1}
=\frac{g^{\rm all}_{\theta'}(b')}{g^{\rm all}_{\theta'}(b)}.
\end{equation*}
The highest bid in the support of $g_\theta$ is  $\gamma_\theta\alpha_\theta^{\beta_\theta(n-1)+1}=\frac{(n-1)\beta_\theta}{(n-1)\beta_\theta+1}\alpha_\theta$, which is increasing in $\theta$. Thus, the MLRP holds.

\subsubsection{Large Number of Bidders}\label{app:MLRP-large-n} 
We show that the bid distributions in the $k$th-price auction satisfy the MLRP when $n$ is sufficiently large, under additional technical conditions on $(F_\theta)$. This is immediate for $k=2$, so we focus on $k\geq 3$. 

\begin{lem}\label{lem:MLRP-large-n}
Fix $k\geq 3$. Suppose that:
\begin{enumerate}
\item $\Theta$ is finite;
\item the support of $F_\theta$ is $I_\theta=[\underline v,\overline v_\theta]$, with a common lower endpoint $\underline v$ and $\overline v_{\theta'}>\overline v_\theta$ whenever $\theta'>\theta$;
\item each $F^{-1}_\theta$ is $k$-times continuously differentiable, with $\min_{x\in [0,1]}(F^{-1}_\theta)'(x)>0$;
\item for any pair $\theta'>\theta$, 
$
\inf_{v\in I_\theta}
\frac{d}{dv}\log\left(\frac{f_{\theta'}(v)}{f_\theta(v)}\right)>0.
$
\end{enumerate}
Then there exists $N>k$ such that, for every $n\geq N$, the $k$th-price auction admits a unique equilibrium and its equilibrium bid distributions $(G^{k{\rm th}}_{\theta})$ satisfy the MLRP.
\end{lem}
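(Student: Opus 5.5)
The plan is to work in quantile space and to exploit the representation from the proof of Proposition~\ref{prop:k}. With $\beta=e_k$ and $\psi$ equal to the $n$th-price auction, the recursion there expresses the $k$th-price bid as a conditional expectation of lower order statistics. Instead, I would use the standard closed form obtained by iterating revenue equivalence. In quantile space, the $k$th-price bid $\beta_\theta(x)=b^{k{\rm th}}_\theta(x)$ solves a linear ODE of order $k-2$ in $v=F_\theta^{-1}(x)$. For large $n$ its solution admits an expansion
\[
\beta_{\theta,n}(x)=F_\theta^{-1}(x)+\sum_{j=1}^{k-2}\frac{c_j(x)}{n^j}\,D_\theta^{j}(x)+R_{\theta,n}(x),
\]
where $D_\theta^j$ involves derivatives of $F_\theta^{-1}$ of order at most $j$ multiplied by powers of $x$, and $R_{\theta,n}=O(n^{-(k-1)})$ uniformly. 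Assumption~3 is what allows this Taylor-type expansion. Existence and uniqueness of a strictly increasing solution for large $n$ should follow from the same expansion. Indeed, $\beta_{\theta,n}'\to(F_\theta^{-1})'>0$ uniformly, so the candidate is strictly increasing and Theorem~A of \cite{monderer} applies. Concretely, I would derive the expansion by writing $\beta(x)=\mathbb E[F^{-1}_\theta(\cdot)\mid\cdot]$ as an integral against the order-statistics kernel, whose mass concentrates within $O(1/n)$ of $x$ as $n\to\infty$, and Taylor-expanding $F_\theta^{-1}$ around $x$.

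Next I would reduce the MLRP of $(G^{k{\rm th}}_\theta)$ to a comparison of bid densities. Since $\Theta$ is finite, it suffices to check each pair $\theta'>\theta$. The bid density is $g_\theta(b)=1/\beta_\theta'(\beta_\theta^{-1}(b))$, so the MLRP on the common support amounts to
\[
\log g_{\theta'}(b)-\log g_\theta(b)\ \text{ nondecreasing in } b.
\]
At $n=\infty$ (i.e., $\beta=F^{-1}$), this difference is $\log f_{\theta'}-\log f_\theta$. By assumption~4 its derivative is bounded below by a strictly positive constant, and this holds uniformly on compact $\Theta$-pairs because $\Theta$ is finite. Hence it suffices to show that $\frac{d}{db}\log g_{\theta,n}(b)\to\frac{d}{db}\log f_\theta(b)$ uniformly on the relevant support. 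By the expansion this requires $\beta_{\theta,n}''\to (F^{-1}_\theta)''$ and $\beta_{\theta,n}'\to(F^{-1}_\theta)'$ uniformly, which the $C^k$ assumption supplies with one derivative to spare. The strict positivity in assumption~4 then absorbs the $O(1/n)$ error.

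Support endpoints need separate handling, since the MLRP also requires the supports to be ordered appropriately. The lower endpoint is $\beta_{\theta,n}(0)=\underline v$ for every $\theta$: the lowest type bids her value, because the expansion corrections vanish at $x=0$, or directly because $\beta(0)=F^{-1}(0)$. For the upper endpoint, $\beta_{\theta,n}(1)\to\overline v_\theta$, and assumption~2 gives strict ordering $\overline v_{\theta'}>\overline v_\theta$. Hence $\beta_{\theta',n}(1)>\beta_{\theta,n}(1)$ for large $n$, which combined with the density-ratio monotonicity on the common support yields the MLRP. Take $N$ as the maximum over the finitely many pairs.

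The main obstacle is uniformity of the derivative convergence near the boundary quantiles $x\to0$ and $x\to1$. There the order-statistics kernel is not symmetric around $x$, so the corrections $c_j(x)$ may blow up, for example through factors like $1/x$. I would first try to verify directly from the integral representation that $\beta_{\theta,n}'$ and $\beta_{\theta,n}''$ converge uniformly on $[0,1]$. If this fails near $x=0$, I would instead treat a neighborhood of the common lower endpoint separately: a local expansion using $F_\theta^{-1}(0)=\underline v$ and the positive lower bound in assumption~3 should still show that the log-density ratio is increasing there.
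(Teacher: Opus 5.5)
Your outline matches the paper's argument. It works in quantile coordinates and shows $C^2$-closeness of $b^{k\rm th}_\theta$ to $F^{-1}_\theta$ at rate $O(1/n)$. Strict monotonicity for large $n$ gives the unique equilibrium. There is a common lower endpoint, and the upper endpoints $b^{k\rm th}_\theta(1)=\overline v_\theta+O(1/n)$ are strictly ordered. The strict bound in assumption~4 absorbs the error in the log-likelihood-ratio slope.

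The gap is in the step that carries the analytic weight: uniform $C^2$ control of $b^{k\rm th}_\theta-F^{-1}_\theta$ on all of $[0,1]$. You leave this open as your ``main obstacle'', and the concrete derivation you offer for it cannot work. For $k\geq 3$ the $k$th-price bid is not a conditional expectation of $F^{-1}_\theta$ against an order-statistics kernel; the averaging runs the other way. By the recursion in the proof of Proposition~\ref{prop:k}, $b^{(m-1){\rm th}}_\theta(x)=\frac{n-m+1}{x^{n-m+1}}\int_0^x b^{m{\rm th}}_\theta(t)\,t^{n-m}\,dt$. So values $F^{-1}_\theta=b^{\rm 2nd}_\theta$ are averages of third-price bids, and so on up the chain. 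No fixed kernel can produce $b^{\rm 3rd}_\theta(x)=F^{-1}_\theta(x)+\frac{x(F^{-1}_\theta)'(x)}{n-2}$, which depends on the local slope of $F^{-1}_\theta$.

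The missing ingredient is the exact formula. Differentiating the recursion gives $b^{m{\rm th}}_\theta=b^{(m-1){\rm th}}_\theta+\frac{x}{n-m+1}\bigl(b^{(m-1){\rm th}}_\theta\bigr)'$. Iterating from $b^{\rm 2nd}_\theta=F^{-1}_\theta$ yields the closed form the paper takes from \cite{mihelich2020},
\[
b^{k{\rm th}}_\theta(x)=F^{-1}_\theta(x)+\sum_{j=1}^{k-2}\binom{k-2}{j}\frac{(n-k)!}{(n-k+j)!}\,x^j\,(F^{-1}_\theta)^{(j)}(x),
\]
which is the unique equilibrium whenever it is strictly increasing. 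Several things follow directly from this formula:
\begin{itemize}
\item There is no remainder term, and the coefficients are $O(n^{-j})$.
\item The correction terms carry factors $x^j$ that vanish at $x=0$ rather than blowing up, so the boundary issue you worry about does not arise.
\item The $j=k-2$ term has a second derivative involving $(F^{-1}_\theta)^{(k)}$. So $C^2$ convergence uses the $C^k$ hypothesis in full; contrary to your remark, there is no derivative to spare.
\end{itemize}

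With this formula the rest of your plan goes through, but you should run the last step in quantile coordinates. The bid support $[\underline v,b^{k\rm th}_\theta(1)]$ differs from $I_\theta$ (e.g., $b^{\rm 3rd}_\theta(1)>\overline v_\theta$). So comparing $\frac{d}{db}\log g_\theta$ with $\frac{d}{db}\log f_\theta$ at the same $b$ is not well defined near the top. The paper instead shows that $R_{\theta',\theta}=(b^{k\rm th}_{\theta'})^{-1}\circ b^{k\rm th}_\theta\to F_{\theta'}\circ F^{-1}_\theta$ in $C^2$. By assumptions~1--4, the limit has second derivative bounded below by some $\eta>0$. The derivative of $R_{\theta',\theta}$ is the bid likelihood ratio $g_{\theta'}/g_\theta$ evaluated at $b^{k\rm th}_\theta(x)$, so convexity of $R_{\theta',\theta}$ is exactly the MLRP on the common support.
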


\begin{proof} Write bidding functions as functions of value quantiles $x \in [0, 1]$. By \cite{mihelich2020},  a symmetric monotone $k$th-price equilibrium exists and is uniquely given by
\begin{equation}\label{eq:k-large-n-quantile}
b_{\theta}^{k{\rm th}}(x)
=F^{-1}_\theta(x)+\sum_{j=1}^{k-2}
\binom{k-2}{j}\frac{(n-k)!}{(n-k+j)!}
 x^j (F^{-1}_\theta)^{(j)}(x)
\end{equation}
if and only if the expression in (\ref{eq:k-large-n-quantile}) is strictly increasing in $x$.
The coefficient multiplying the $j$th term is $O(n^{-j})$.  Thus, by Assumptions 1 and 3,
\begin{equation}\label{eq:k-large-n-c2}
\max_{\theta\in\Theta}
\left\|b_{\theta}^{k{\rm th}}-F^{-1}_\theta\right\|_{C^2}
=O(n^{-1}),
\end{equation}
where $\| \cdot \|_{C^2}$ denotes the $C^2$-norm.
Since $\min_{\theta,x}(F^{-1}_\theta)'(x)>0$, \eqref{eq:k-large-n-c2} implies that $b_{\theta}^{k{\rm th}}$ is indeed strictly increasing for every $\theta$ once $n$ is sufficiently large. Hence, \eqref{eq:k-large-n-quantile} defines the unique symmetric monotone equilibrium.

The lower endpoint of the bid support is $b_{\theta}^{k{\rm th}}(0)=\underline v$, while
$
b_{\theta}^{k{\rm th}}(1)
=\overline v_\theta+O(n^{-1}).
$
Because $\Theta$ is finite and $\overline v_\theta$ is strictly increasing in $\theta$, $b_{\theta}^{k{\rm th}}(1)$ is also strictly increasing in $\theta$ for all sufficiently large $n$. Fix $\theta'>\theta$.  For large enough $n$, the bid support in state $\theta$ is contained in that in state $\theta'$, so we can define
\[
R_{\theta',\theta}(x)
=G^{k{\rm th}}_{\theta'}(b^{k{\rm th}}_{\theta}(x))
=(b^{k{\rm th}}_{\theta'})^{-1}(b^{k{\rm th}}_{\theta}(x)),
\qquad \forall \, x\in[0,1].
\]
Equation~\eqref{eq:k-large-n-c2}, together with $\min_{\theta,x}(F^{-1}_\theta)'(x)>0$, implies $\lim_{n\to\infty}\|R_{\theta',\theta}-F_{\theta'}\circ F^{-1}_{\theta}\|_{C^2}=0$.

Observe that 
\[
\frac{d}{dx} (F_{\theta'}\circ F^{-1}_{\theta})(x)
=\frac{f_{\theta'}(F^{-1}_\theta(x))}{f_\theta(F^{-1}_\theta(x))},
\]
\[
\frac{d^2}{dx^2} (F_{\theta'}\circ F^{-1}_{\theta})(x)
=(F^{-1}_\theta)'(x)
\frac{f_{\theta'}(F^{-1}_\theta(x))}{f_\theta(F^{-1}_\theta(x))}
\left.
\frac{d}{dv}\log\left(\frac{f_{\theta'}(v)}{f_\theta(v)}\right)
\right|_{v=F^{-1}_\theta(x)}.
\]
By Assumptions~1--4, there is $\eta>0$ with
$\frac{d^2}{dx^2} (F_{\theta'}\circ F^{-1}_{\theta})(x)\geq\eta$ for every $x$ and all pairs $\theta'>\theta$.  Thus, the $C^2$-convergence of $R_{\theta',\theta}$ to $F_{\theta'}\circ F^{-1}_{\theta}$ implies that
$R_{\theta',\theta}''(x)\geq\eta/2$ for all $\theta' > \theta$ and all $x$ once $n$ is sufficiently large. 

Finally, if $g^{k{\rm th}}_{\theta}$ denotes the equilibrium bid density, then $R_{\theta',\theta}'(x)
=g^{k{\rm th}}_{\theta'}(b^{k{\rm th}}_{\theta}(x))(b^{k{\rm th}}_{\theta})'(x)
=\frac{g^{k{\rm th}}_{\theta'}(b^{k{\rm th}}_{\theta}(x))}
{g^{k{\rm th}}_{\theta}(b^{k{\rm th}}_{\theta}(x))}$. Thus, convexity of $R_{\theta',\theta}$ yields monotonicity of $\frac{g^{k{\rm th}}_{\theta'}(b^{k{\rm th}}_{\theta}(x))}
{g^{k{\rm th}}_{\theta}(b^{k{\rm th}}_{\theta}(x))}$ on the common support. By the monotonicity of the bid-support endpoints $b_{\theta}^{k{\rm th}}(1)$ established above, this proves the MLRP. 
\end{proof}

\footnotesize
\begin{spacing}{0.01}
\bibliographystyle{econometrica}
\bibliography{auction}
\end{spacing}

\newpage
\setcounter{page}{1}

\singlespacing

\begin{center}
 {\Large\textbf{Online Appendix to ``Auctions as Experiments''\\[0.7cm]}}
 {\large Mira Frick, Ryota Iijima, Yuhta Ishii, and Nicholas Wu \\[1cm]}
\end{center}

\normalsize

\section{Payoff Comparison under Non-i.i.d.\ Observations}

\subsection{Extension of Lehmann to Multi-dimensional Signals}\label{app:general lehmann}

This section provides an extension of \cites{lehmann1988} payoff comparison result to multi-dimensional signals.   This allows us to extend the payoff comparison in Corollary 1 to certain settings where observed bids are not i.i.d.\ conditional on $\theta$; for example, if only the $k$ highest bids are observed, or buyers' types are correlated conditional on $\theta$.

We take a compact state space $\Theta\subseteq \mathbb{R}$, and compare a pair of experiments $H^i=(H^i_\theta)$,   $i=1,2$, where $H^i_\theta\in\Delta(X^i)$ is the signal distribution conditional on each state $\theta$, which admits a corresponding density $h^i_\theta$ with respect to some underlying  measure on $X^i$. 
Each signal space $X^i$ is a measurable lattice endowed with partial order $\succsim^i$. 

As in Section~\ref{sec:payoff}, the DM faces a monotone decision problem ${\cal D}=(A, u)$, where  $A\subseteq\mathbb R$ is an action set that is either finite or a compact interval, and $u: A\times\Theta\to\mathbb R$ is a continuous utility function that is single-peaked in $a$ at all $\theta$, with a maximizer $a^* (\theta) = \argmax_{a \in A} u(a, \theta)$ that is increasing in $\theta$.
Let $U({\cal D}, H^i)$ denote the DM's expected payoff at ${\cal D}$ when he observes a signal from $H^i$:  
\[
U({\cal D}, H^i)=\max_{\sigma^i: X^i \to A}{\mathbb E}[u(\sigma^i(x),\theta)]
\] 
where the expectation is with respect to the state (which is drawn from the prior $q_0 \in \Delta(\Theta)$) and the signal (which is drawn from $H^i_\theta$ at each $\theta$).

We say that a collection of measurable functions $(\phi_\theta)$, where $ \phi_\theta : X^2 \rightarrow  X^1$ for each $\theta$, is a \textbf{\textit{monotone family}} if the following conditions hold:

\begin{enumerate}
\item  Each $\phi_\theta$ is an order embedding, i.e., there exists a measurable set $S_\theta\subseteq X^2$ with $H_\theta^2(S_\theta)=1$ such that 
$
\phi_\theta(x) \succsim^1 \phi_\theta(z) \Rightarrow x\succsim^2 z
$ for every $z\in X^2$ and $x\in S_\theta$.  

\item $\phi_\theta$ is monotone in $\theta$, i.e., 
$
\theta \geq \theta' \Rightarrow \phi_{\theta}(x) \succsim^1 \phi_{\theta'}(x)
$ for every $x\in X^2$.

\item For each measurable $E\subseteq X^2$ and $\theta$, the upper closure $\mathord{\uparrow}\phi_\theta(E)$ is measurable.\footnote{Recall that $E \subseteq X^i$ is an \textit{\textbf{upper set}} if $x \in E$ and $x' \succsim^i  x$ imply $x' \in E$. For $E \subseteq X^i$, let
$
\mathord{\uparrow}E=\{z\in X^i:z\succsim^i d\text{ for some }d\in E\}
$ denote its \textit{\textbf{upper closure}}, which is an upper set by definition. } 
\end{enumerate}

The following result provides conditions under which  $H^1$ yields a higher value of information than $H^2$ under all monotone decision problems. 

\begin{thm}\label{thm:multidimension}
Suppose that (i) there is a monotone family $(\phi_\theta)$ such that $H^1_\theta(E)=H^2_\theta(\phi_\theta^{-1}(E))$ for each $\theta$ and measurable $E\subseteq X^1$, and 
(ii) $H^2$ satisfies affiliation.\footnote{That is, $
 h^2( x \wedge x' \mid  \theta') h^2( x \vee x' \mid  \theta) \geq h^2(x  \mid  \theta)h^2(x' \mid  \theta')$ for every $x,x'\in X^2$ and $\theta>\theta'$.} 
 Then $U({\cal D}, H^1)\geq U({\cal D}, H^2)$
for any monotone decision problem $\cal D$.
\end{thm}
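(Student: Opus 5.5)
We follow the classical Lehmann argument: take an optimal strategy under $H^2$ and build a strategy under $H^1$ that does at least as well state by state. The first step is a reduction. Because $H^2$ is affiliated and $u$ is single-peaked with $a^*(\theta)$ increasing, the posterior $q(\cdot\mid x)$ under $H^2$ is increasing in $x$ in the MLR order. By the Karlin--Rubin/Milgrom--Shannon single-crossing argument, we may then restrict attention to optimal strategies $\sigma^2:X^2\to A$ that are monotone, i.e.\ $x\succsim^2 z\Rightarrow \sigma^2(x)\geq\sigma^2(z)$. For finite $A$ we would choose the largest maximizer to resolve ties; for a compact interval, continuity of $u$ gives existence.

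The second step constructs a strategy under $H^1$ by "pulling back'' through the family. For each action threshold $a$, let $E_a=\{x\in X^2:\sigma^2(x)\geq a\}$, which is an upper set of $X^2$. We then intend to define
\[
\sigma^1(y)=\sup\{a: y\in \mathord{\uparrow}\phi_{\theta^*(a)}(E_a)\}
\]
for a suitably chosen cutoff state $\theta^*(a)$, namely the state at which $u(a,\cdot)-u(a',\cdot)$ switches sign for $a'<a$. This follows Lehmann's construction with a one-dimensional $\theta$ threshold. Condition 3 makes these sets measurable, and the sets are nested in $a$, so $\sigma^1$ is well defined and measurable.

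The third step proves the key comparison: for every $a$, $H^1_\theta(\sigma^1\geq a)\geq H^2_\theta(\sigma^2\geq a)$ for $\theta\geq\theta^*(a)$, and the reverse inequality for $\theta\leq\theta^*(a)$. Two properties of the family drive this. Monotonicity in $\theta$ (property 2) gives $\phi_\theta(x)\succsim^1\phi_{\theta^*}(x)$ for $\theta\geq\theta^*$. Hence $\phi_\theta^{-1}(\mathord{\uparrow}\phi_{\theta^*}(E_a))\supseteq E_a$, which by condition (i) gives the first inequality. For $\theta\leq\theta^*$ we use the order-embedding property (property 1). If $\phi_\theta(x)\succsim^1\phi_{\theta^*}(e)$ with $e\in E_a$, then $\phi_{\theta^*}(x)\succsim^1\phi_\theta(x)\succsim^1\phi_{\theta^*}(e)$. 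Embedding at $\theta^*$ then gives $x\succsim^2 e$ for almost every $x$, so $x\in E_a$ because $E_a$ is an upper set. This yields $\phi_\theta^{-1}(\mathord{\uparrow}\phi_{\theta^*}(E_a))\subseteq E_a$ up to null sets, and hence the reverse inequality.

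The final step converts this into payoffs. We write
\[
\mathbb E[u(\sigma(x),\theta)\mid\theta]=u(\underline a,\theta)+\int \partial\text{-increments of }u\ \text{over}\ \{\sigma\geq a\},
\]
that is, a telescoping sum over actions when $A$ is finite, or a layer-cake integral over thresholds otherwise. Single-peakedness together with monotone $a^*$ makes each increment $u(a,\theta)-u(a^-,\theta)$ single-crossing in $\theta$ at $\theta^*(a)$. Combining this with the inequalities from the third step, integrated against $q_0$, yields $U(\mathcal D,H^1)\geq U(\mathcal D,H^2)$.

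The main obstacle is the third step at $\theta<\theta^*$, since the embedding holds only on the full-measure set $S_{\theta^*}$ rather than $S_\theta$. The fix we plan is to take the preimage statement at $\theta$ and apply the embedding at $\theta$ itself. This requires expressing $\mathord{\uparrow}\phi_{\theta^*}(E_a)$ through the chain $\phi_\theta(x)\succsim^1\phi_{\theta^*}(e)\succsim^1\phi_\theta(e)$, after which the embedding at $\theta$ applies to $x\in S_\theta$. A second concern is that for continuous $A$ the single-crossing points $\theta^*(a)$ may fail to be unique. We would handle this by taking extremal choices and passing to a limit.
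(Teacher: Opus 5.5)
Your proposal is correct and follows the paper's proof essentially step for step. The shared steps are: a monotone optimal rule under $H^2$ (from affiliation); the pulled-back upper sets $\mathord{\uparrow}\phi_{\theta_j}(E_{j+1})$, which are nested by monotonicity of the family in $\theta$ and of the thresholds $\theta_j$; the single-crossing comparison of $H^1_\theta(\mathord{\uparrow}\phi_{\theta^*}(E))$ against $H^2_\theta(E)$; and the telescoping decomposition $u(a_\ell,\theta)=u(a_1,\theta)+\sum_j \mathbf{1}\{a_\ell\geq a_{j+1}\}d_j(\theta)$.

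Two differences are worth noting. For the compact-interval case the paper reduces to finite action sets via Lehmann's approximation argument rather than a layer-cake integral; since $u$ is only continuous, your version would have to be written as a Stieltjes integral against $u(\cdot,\theta)$. For $\theta\leq\theta^*$, the paper uses exactly your corrected argument, applying the embedding at $\theta$ to $x\in S_\theta$ via $\phi_\theta(x)\succsim^1\phi_{\theta^*}(e)\succsim^1\phi_\theta(e)$. You should simply discard the version in your third paragraph that invokes the embedding at $\theta^*$.
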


Condition (i) requires that a signal observation from $H^1_\theta$ is equivalent to observing $\phi_\theta(x)$ where $x$ is drawn from $H^2_\theta$ and $(\phi_\theta)$ is a monotone family. This requirement can be seen as a generalization of the condition used in Lemma~\ref{lem:competition}. 
Under the baseline model with i.i.d.\ values (Section~\ref{sec:model}), each experiment corresponds to the bid distributions in each state $\theta$.  To see the connection, take two auctions and their corresponding equilibrium bidding functions $b$ and $\hat b$ such that $b_\theta ((\hat b_\theta^{-1}) (\cdot))$ is increasing in $\theta$ as required in Lemma~\ref{lem:competition}. Then, assuming that all $n$ bids are observed by the DM, the corresponding monotone family takes the form $\phi_\theta(b_1,\ldots, b_n)=(b_\theta (\hat b_\theta^{-1} (b_1)),\ldots, b_\theta(\hat b_\theta^{-1} (b_n)))$ for possible bids $(b_1,\ldots, b_n)$ under $\hat b_\theta$ .

\begin{rem}[Comparison with \cite{di2021}]
The result is complementary to \cite{di2021}, who provide an alternative  multi-dimensional signal extension. They consider a pair of experiments $H^i$, $i=1,2$, defined on signal space $X^1=X^2=\mathbb R^n$, both of which satisfy affiliation and admit continuous densities with respect to the Lebesgue measure.  
Their Theorem 0 shows that  $U({\cal D}, H^1)\geq U({\cal D}, H^2)$ holds for any monotone decision problem $\cal D$ if each $H^2_\theta$ can be obtained from $H^1_\theta$ by a so-called Knothe-Rosenblatt rearrangement $\zeta_\theta: X^1\to X^2$ that satisfies $\zeta_\theta(x)\leq\zeta_{\theta'}(x)$ for each $x\in X^1$ and $\theta>\theta'$.\footnote{Formally, $\zeta_\theta$ is defined by $\zeta_\theta(x_1,\ldots, x_n)=(\hat x_1,\ldots, \hat x_n)$,
where $ \hat x_1= (H^2_{\theta,1})^{-1}(H^1_{\theta, 1}(x_1))$ and $\hat x_k= (H^2_{\theta,k, \hat x_{k-1},\ldots, \hat x_1})^{-1}(H^1_{\theta, k, x_{k-1},\ldots, x_1}(x_k))
$ for each $k=2,\ldots, n$. Here, $H^1_{\theta, 1}$ (resp. $H^2_{\theta, 1}$) denotes $H^1_\theta$'s (resp. $H^2_\theta$'s) marginal cdf on the first coordinate, and $H^1_{\theta, k, x_{k-1},\ldots, x_1}$ (resp. $H^2_{\theta,k, \hat x_{k-1},\ldots, \hat x_1}$) denotes $H^1_\theta$'s (resp. $H^2_\theta$'s) marginal cdf on the $k$th coordinate conditional on $(x_{k-1},\ldots, x_1)$ (resp. conditional on $(\hat x_{k-1},\ldots, \hat x_1)$). }
Thus, the result in \cite{di2021} uses a specific way $(\zeta_\theta)$ of relating $H^1_\theta$ and $H^2_\theta$ on a common finite-dimensional signal space, while Theorem~\ref{thm:multidimension} allows for abstract mappings $(\phi_\theta)$ defined on general signal spaces that are required to be order embeddings.  Neither condition nests the other in general (even if we focus on the case $X^1=X^2=\mathbb R^n$). \finex
\end{rem}

\subsection{Applications to Auction Bid Observations}

Theorem~\ref{thm:multidimension} can be applied to settings where Corollary~\ref{cor:lehmann} does not apply but it is straightforward to construct corresponding monotone families.

\

\noindent{\bf Partial bid observations:} Consider an i.i.d.-type setting as in Section~\ref{sec:model}, Section~\ref{sec:interdependent}, or the multi-unit setting in Section~\ref{sec:other-extensions}. 
Suppose that the DM observes a subset of order statistics of bids, where the selection does not depend on the auction format $\psi$. That is, we fix $K\subseteq\{1,\ldots,n\}$ and $H^\psi_\theta$ is the distribution of $(b^{(k)})_{k\in K}$ at $\theta$, where $(b^{(1)},...,b^{(n)})$ are the order statistics of the bids in auction $\psi$.  For example,  the DM observes the $k$ highest bids if $K=\{1,\ldots, k\}$.

\begin{cor}\label{cor:partial} Consider an i.i.d. type setting. Take two auction formats $\psi,\hat\psi$ such that $G^{\psi}$ is more accurate than $G^{\hat\psi}$ and the latter satisfies the MLRP. Then for any $K\subseteq \{1,\ldots, n\}$,   $U({\cal D}, H^\psi)\geq U({\cal D}, H^{\hat\psi})$ holds for any monotone decision problem $\cal D$.
\end{cor}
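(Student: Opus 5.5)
The plan is to apply Theorem~\ref{thm:multidimension} with $H^1=H^\psi$ and $H^2=H^{\hat\psi}$. The signal spaces are $X^1=X^2=\{y\in\mathbb R_+^{|K|}: y \text{ weakly decreasing in } k\in K\}$, endowed with the coordinatewise order. This set is a lattice, since coordinatewise max and min preserve the ordering constraints. By Lemma~\ref{lem:competition}, the map $\tau_\theta:=b^\psi_\theta\circ(b^{\hat\psi}_\theta)^{-1}$ is increasing in $\theta$ pointwise, and it is also increasing in $b$. I define
\[
\phi_\theta\big((y_k)_{k\in K}\big)=\big(\tau_\theta(y_k)\big)_{k\in K}.
\]
Because $\tau_\theta$ is monotone, it maps order statistics of $\hat\psi$-bids to the corresponding order statistics of $\psi$-bids. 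Indeed, for each realized value profile, $b^\psi_\theta(v)=\tau_\theta(b^{\hat\psi}_\theta(v))$ on the support, since $b^{\hat\psi}_\theta$ is strictly increasing and continuous. Hence $H^\psi_\theta(E)=H^{\hat\psi}_\theta(\phi_\theta^{-1}(E))$, which is condition (i) of the pushforward requirement.

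Next I check that $(\phi_\theta)$ is a monotone family. Monotonicity in $\theta$ holds coordinatewise by Lemma~\ref{lem:competition}. For the order-embedding property, I take $S_\theta$ to be the set of profiles whose coordinates lie in the support of $G^{\hat\psi}_\theta$. On that set $\tau_\theta$ is strictly increasing, because it is a composition of strictly increasing $b^\psi_\theta$ with the inverse of strictly increasing $b^{\hat\psi}_\theta$. So if $\phi_\theta(x)\ge\phi_\theta(z)$ with $x\in S_\theta$, then $x_k\ge z_k$ for every $k$. This needs care when $z$ lies off the support. There $\tau_\theta$ is constant at the endpoint values, so it suffices that $\tau_\theta(x_k)\ge\tau_\theta(z_k)$ forces $x_k\ge z_k$. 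The only failure would be $z_k$ above the upper support endpoint with $x_k$ at or below it. But $x_k$ in the support and $\tau_\theta(x_k)\ge \tau_\theta(z_k)=b^\psi_\theta(\overline v_\theta)$ force $x_k$ to be the upper endpoint. This case is resolved by the left-inverse convention, or it is a null boundary issue handled by shrinking $S_\theta$ to the interior, which still has full measure since bid distributions are atomless. Measurability of upper closures follows because $\phi_\theta$ is a coordinatewise monotone Borel map, so $\mathord{\uparrow}\phi_\theta(E)$ is a union of orthant translates that can be written with countable approximations; I expect this to be routine.

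The remaining hypothesis, and the step I expect to be the main obstacle, is affiliation of $H^{\hat\psi}$ in $(\theta,y)$. The joint density of the selected order statistics $(b^{(k)})_{k\in K}$ under i.i.d.\ draws from $G^{\hat\psi}_\theta$ with density $g_\theta$ has the form
\[
c\prod_{k\in K}g_\theta(y_k)\prod_{\text{gaps}}\big(G_\theta(y_{k})-G_\theta(y_{k'})\big)^{m}
\]
on the ordered set, with the appropriate tail powers $(1-G_\theta(y_{\max K}))$ and $G_\theta(y_{\min K})$ included. I would verify the lattice inequality factor by factor, using log-supermodularity in $(\theta,y)$. Each $g_\theta(y_k)$ is log-supermodular by the MLRP. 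The term $G_\theta(y)$ is log-supermodular via reverse-hazard-rate dominance, and $1-G_\theta(y)$ via hazard-rate dominance, both implied by the MLRP. The interval probabilities $G_\theta(y_k)-G_\theta(y_{k'})$ are TP2 in $(\theta,(y_k,-y_{k'}))$, a standard consequence of the MLRP. I would cite \citet{shaked2007}, Theorem~1.C.33, for order statistics of MLRP families and a Karlin total positivity argument for the multivariate case. Since products of affiliated factors on a sublattice are affiliated, condition (ii) follows, and Theorem~\ref{thm:multidimension} gives $U(\mathcal D,H^\psi)\ge U(\mathcal D,H^{\hat\psi})$.
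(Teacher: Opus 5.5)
Your proposal follows the paper's proof. You apply Theorem~\ref{thm:multidimension} with $\phi_\theta$ acting coordinatewise by $b^\psi_\theta\circ(b^{\hat\psi}_\theta)^{-1}$, get the order embedding from strict monotonicity of the bidding functions, get monotonicity in $\theta$ from Lemma~\ref{lem:competition}, and get the pushforward identity by construction. Your handling of off-support coordinates via $S_\theta$ is, if anything, more careful than the paper's one-line assertion.

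The paper simply asserts that the MLRP of $G^{\hat\psi}$ makes $H^{\hat\psi}$ affiliated. Your factor-by-factor verification is a reasonable way to supply this, but one claim in it has the wrong orientation.

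For $y_k\geq y_{k'}$, the interval factor $G_\theta(y_k)-G_\theta(y_{k'})$ is \emph{not} TP2 in $(\theta,(y_k,-y_{k'}))$. It is TP2 in $(\theta,y_k,y_{k'})$ with the usual order on both coordinates.
\begin{itemize}
\item \emph{Across states.} Take $G_\theta(t)=t^2$ and $G_{\theta'}(t)=t$ on $[0,1]$, an MLRP pair with $\theta>\theta'$. The ratio of interval probabilities is $(y_k^2-y_{k'}^2)/(y_k-y_{k'})=y_k+y_{k'}$, which is increasing in $y_{k'}$. In general this is hazard-rate dominance of the distributions truncated to $[0,y_k]$.
\item \emph{Within a state.} For $a\geq a'$ and $b\geq b'$ one has $[G(a)-G(b)][G(a')-G(b')]-[G(a)-G(b')][G(a')-G(b)]=(G(a)-G(a'))(G(b)-G(b'))\geq 0$. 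This is log-supermodularity in $(a,b)$, not in $(a,-b)$.
\end{itemize}

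Taken literally, your orientation makes this factor log-supermodular in a different order from the $g_\theta$, $G_\theta$ and $1-G_\theta$ factors, so the product step would fail. The correct orientation is what you get by integrating the TP2 kernel $\mathbf 1\{y_{k'}<t\leq y_k\}$ against $g_\theta(t)$. With it, every factor is log-supermodular in the coordinatewise order on $\Theta\times X$, and the rest of your argument goes through.
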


\begin{proof}
Since $G^{\hat\psi}$ satisfies the MLRP,   $H^{\hat\psi}$ satisfies affiliation.  
Let $b\in\mathbb R_+^{|K|}$ and $\hat b\in\mathbb R_+^{|K|}$ denote signal realizations under $H^{\psi}$ and $H^{\hat\psi}$, respectively.  
Take $\phi_\theta$ so that for each $k\in K$, the corresponding coordinate is $\phi_{\theta,k}(b)=b^\psi_\theta((b_\theta^{\hat\psi})^{-1}(b_k))$  for possible bids $(b_1,\ldots, b_n)$ under $b^{\hat\psi}_\theta$. 
Then each $\phi_\theta$ is an order embedding, since the bidding function is increasing. It also satisfies monotonicity by Lemma~\ref{lem:competition}. 
By construction $H^\psi_\theta(E)=H^{\hat\psi}_\theta(\phi^{-1}_\theta(E))$ for each measurable $E$, so that the conclusion follows from Theorem~\ref{thm:multidimension}. 
\end{proof}

\noindent{\bf Imperfect state observation:}
Consider the setting in Section~\ref{sec:imperfect}, where buyers observe a common imperfect signal $\xi$ about $\theta$. 
We compare the first-price auction against a general auction $\psi$. Allowing  for partial bid observations as above, fix $K\subseteq \{1,\ldots, n\}$, and let $H^{\rm 1st}_\theta$ (resp. $H^{\psi}_\theta$) denote the distribution of $(b^{(k)})_{k\in K}$ at $\theta$, where $(b^{(1)},...,b^{(n)})$ are the order statistics of the bids in the first-price auction (resp. auction $\psi$). 
Theorem~\ref{thm:multidimension} does not directly apply to this setting; this is because we may not be able to construct a monotone family, since bids are functions of $\xi$ not $\theta$. Therefore, in addition to the affiliation condition in Proposition~\ref{prop:imperfect}, we require that $(v_i)$ and $\theta$ are independent conditional on $\xi$. We also focus on \textit{\textbf{concave}} monotone decision problems, where $A$ is an interval and each $u(\cdot, \theta)$ is strictly concave.  

\begin{cor}\label{cor:xi comparison} Consider the setting with imperfect state observation in Section~\ref{sec:imperfect}. Suppose that (i) $(\theta, \xi, (b_i))$ is affiliated under both the first-price auction and some auction $\psi$, and (ii) $(v_i)$ and $\theta$ are independent conditional on $\xi$.  Then for any $K\subseteq \{1,\ldots, n\}$,   $U({\cal D}, H^{\rm 1st})\geq U({\cal D}, H^{\psi})$ holds for any concave monotone decision problem $\cal D$.
\end{cor}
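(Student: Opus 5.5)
The plan is to deduce Corollary~\ref{cor:xi comparison} from Theorem~\ref{thm:multidimension}. Bids depend on $\xi$ rather than $\theta$, so a monotone family in $\theta$ cannot be built directly. I would therefore treat $\xi$ as an intermediate state and work with the joint structure of $(\theta,\xi)$, rather than applying Theorem~\ref{thm:multidimension} to $H^{\rm 1st}_\theta$ and $H^{\psi}_\theta$ as given.

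\textbf{Step 1 (ordering the experiments about $\xi$).} Fix $K$ and let $H^{\rm 1st}_\xi$ and $H^\psi_\xi$ denote the distributions of $(b^{(k)})_{k\in K}$ conditional on $\xi$. Conditional on $\xi$, the values are i.i.d. Proposition~\ref{prop:imperfect} (its proof) gives that $b^{\rm 1st}_\xi((b^\psi_\xi)^{-1}(\cdot))$ is increasing in $\xi$, and affiliation gives that bids under $\psi$ satisfy the MLRP in $\xi$. Corollary~\ref{cor:partial}, with $\xi$ in the role of the state, then says that $H^{\rm 1st}_\xi$ is obtained from $H^\psi_\xi$ through the monotone family $\phi_\xi$ applied coordinatewise, and that $H^\psi_\xi$ is affiliated.

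\textbf{Step 2 (reduction to a problem in $\xi$).} By (ii), $\theta$ is independent of the bids given $\xi$. Hence for any strategy $\sigma$,
\[
\mathbb E[u(\sigma(b),\theta)]=\mathbb E[\tilde u(\sigma(b),\xi)],
\qquad \tilde u(a,\xi):=\mathbb E[u(a,\theta)\mid\xi].
\]
So it suffices to show that $\tilde{\cal D}=(A,\tilde u)$ is a monotone decision problem in $\xi$, and then apply Theorem~\ref{thm:multidimension} with $\xi$ as the state.

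\textbf{Step 3 (the main obstacle: single-peakedness of $\tilde u$).} I need $\tilde u(\cdot,\xi)$ to be single-peaked with a maximizer increasing in $\xi$.
\begin{itemize}
\item \emph{Single-peakedness.} Strict concavity of each $u(\cdot,\theta)$ passes to the average $\tilde u(\cdot,\xi)$, so $\tilde u(\cdot,\xi)$ is strictly concave and hence single-peaked. This is where concavity is used; a plain mixture of single-peaked functions need not be single-peaked.
\item \emph{Monotone maximizer.} Each $u(\cdot,\theta)$ is concave with peak $a^*(\theta)$ increasing in $\theta$. So for $a<a'$ the difference $u(a',\theta)-u(a,\theta)$ is single-crossing in $\theta$, from negative to positive. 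Affiliation of $(\theta,\xi)$ makes the conditional laws of $\theta$ given $\xi$ MLR-increasing. Karlin's variation-diminishing property (or Quah--Strulovici style aggregation of single-crossing differences under MLR shifts) then shows that $\tilde u(a',\xi)-\tilde u(a,\xi)$ is single-crossing in $\xi$. Hence $\arg\max_a\tilde u(a,\xi)$ is increasing in $\xi$.
\end{itemize}
Affiliation of $(\theta,\xi)$ follows from the affiliation of $(\theta,\xi,b)$ assumed in (i), by marginalization (Milgrom--Weber, Theorem~2).

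\textbf{Step 4 (conclusion).} Continuity of $\tilde u$ holds by bounded convergence. Step 1 supplies the hypotheses of Theorem~\ref{thm:multidimension} in the state $\xi$. That theorem gives $U(\tilde{\cal D},H^{\rm 1st})\ge U(\tilde{\cal D},H^\psi)$ under the prior on $\xi$. Step 2 identifies these values with $U({\cal D},H^{\rm 1st})$ and $U({\cal D},H^\psi)$, which proves the corollary.
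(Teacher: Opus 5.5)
Your architecture matches the paper's proof. You use (ii) to pass to $\tilde u(a,\xi)=\mathbb E[u(a,\theta)\mid\xi]$ with $\xi$ as the state, get single-peakedness of $\tilde u(\cdot,\xi)$ from strict concavity, and then apply Proposition~\ref{prop:imperfect} and Corollary~\ref{cor:partial} in the state $\xi$. Steps 1, 2 and 4 are fine.

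The gap is the monotone-maximizer argument in Step 3. You claim that, because each $u(\cdot,\theta)$ is concave with peak $a^*(\theta)$ increasing in $\theta$, the difference $u(a',\theta)-u(a,\theta)$ is single-crossing in $\theta$. This is false. An increasing peak signs the difference only when $a^*(\theta)\notin(a,a')$. When the peak lies inside $(a,a')$, the sign depends on how asymmetric $u(\cdot,\theta)$ is around its peak, and this can vary non-monotonically with $\theta$.

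Concretely, let $\phi_k(z)=e^{kz}-kz-1$, which is strictly convex for $k\neq0$ and minimized at $z=0$. Take $A=[0,1]$ and $u(a,\theta)=-\phi_{k_\theta}(a-m_\theta)$, with $(m_\theta,k_\theta)=(0.45,-10),(0.5,10),(0.55,-10)$ for $\theta=1,2,3$. Every $u(\cdot,\theta)$ is strictly concave with peak $m_\theta$ increasing in $\theta$. Yet $u(1,\theta)-u(0,\theta)=\phi_{k_\theta}(-m_\theta)-\phi_{k_\theta}(1-m_\theta)$ is approximately $80$, $-138$, $235$, so it changes sign twice.

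So Karlin's variation-diminishing property, or any aggregation theorem for single-crossing differences, has nothing to act on. Your intermediate conclusion that $\tilde u(a',\xi)-\tilde u(a,\xi)$ is single-crossing in $\xi$ can itself fail. Take $\xi=\theta+\varepsilon$ with small log-concave noise, so $(\theta,\xi)$ is affiliated; then $\tilde u\approx u$ near $\xi=1,2,3$.

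The missing idea is the interval dominance order (IDO) of \cite{quah2009}, which is what the paper invokes. Single-peaked functions with increasing peaks form an IDO family, even though they generally violate single-crossing differences. IDO is preserved under averaging against weights that are MLR-increasing in the parameter (Theorem~2 of \cite{quah2009}). Here the weights are the conditional laws of $\theta$ given $\xi$, which are MLR-increasing because $(\theta,\xi)$ is affiliated. Hence $\tilde u(\cdot,\xi')$ IDO-dominates $\tilde u(\cdot,\xi)$ for $\xi'>\xi$. Since each $\tilde u(\cdot,\xi)$ is strictly concave with a unique maximizer, that maximizer is increasing in $\xi$. With this replacement, $\tilde{\cal D}$ is a monotone decision problem and the rest of your argument coincides with the paper's proof.
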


\begin{proof}
The DM's problem can be recast as a modified decision problem in which $\xi$ is the state and the utility is given by $\hat u(a, \xi)={\mathbb E}[u(a,\theta)|\xi]$. By strict concavity of $u(\cdot, \theta)$,  $\hat u(\cdot, \xi)$ is also strictly concave, and thus  single-peaked.   
Since $u(\cdot, \theta)$ is single peaked with its peak increasing in $\theta$, and $(\theta, \xi)$ is affiliated, Theorem 2 of \cite{quah2009} ensures that the maximizer of $\hat u(\cdot, \xi)$ is increasing. Therefore, the modified decision problem is monotone, and the conclusion follows from Proposition~\ref{prop:imperfect} and Corollary~\ref{cor:partial}.
\end{proof}

\noindent{\bf Correlated values:} Consider the correlated-value setting in Section~\ref{sec:other-extensions}. As above, we allow for partial bid observations. 
Fix $K\subseteq \{1,\ldots, n\}$, and let $H^{\rm 1st}_\theta$ (resp. $H^{\rm 2nd}_\theta$) denote the distribution of $(b^{(k)})_{k\in K}$ at $\theta$, where $(b^{(1)},...,b^{(n)})$ are the order statistics of the bids in the first-price auction (resp. the second-price auction). 

\begin{cor}\label{cor:correlated} Consider the correlated-type setting in Section~\ref{sec:other-extensions}. Assume $(\theta, (b^{\rm 2nd}_\theta(v_i))_{i=1}^n)$ is affiliated. 
Then for any $K\subseteq \{1,\ldots, n\}$,   $U({\cal D}, H^{\rm 1st})\geq U({\cal D}, H^{\rm 2nd})$ holds for any monotone decision problem $\cal D$.
\end{cor}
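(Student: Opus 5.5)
The plan is to reduce Corollary~\ref{cor:correlated} to Theorem~\ref{thm:multidimension}. We take $H^2=H^{\rm 2nd}$ and $H^1=H^{\rm 1st}$, both defined on $\mathbb R_+^{|K|}$ with the componentwise order. There are two things to verify: that $H^{\rm 2nd}$ is affiliated, and that there is a monotone family $(\phi_\theta)$ pushing $H^{\rm 2nd}_\theta$ forward to $H^{\rm 1st}_\theta$.

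\textbf{Affiliation.} By assumption, the vector $(\theta,(b^{\rm 2nd}_\theta(v_i))_{i})$ is affiliated. Because the bid coordinates are exchangeable, the same holds for the vector of order statistics $(\theta, b^{(1)},\ldots,b^{(n)})$. This follows from \citet[Theorem 2]{milgrom1982}, or more directly from the fact that the order-statistic density is $n!$ times the symmetric joint density restricted to the ordered cone, which is a sublattice. Integrating out the coordinates not in $K$ preserves affiliation, again by \citet[Theorem 2]{milgrom1982}. Hence $H^{\rm 2nd}$ is affiliated in $(\theta,(b^{(k)})_{k\in K})$, which is condition (ii) of Theorem~\ref{thm:multidimension}.

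\textbf{Monotone family.} Set
\[
\tau_\theta(b)=b^{\rm 1st}_\theta\bigl((b^{\rm 2nd}_\theta)^{-1}(b)\bigr),
\]
and define $\phi_\theta$ coordinatewise by $\phi_{\theta,k}(b)=\tau_\theta(b_k)$ for $k\in K$.
\begin{itemize}
\item \emph{Push-forward.} Both auctions use symmetric, strictly increasing bidding functions of the same value vector. So applying $\tau_\theta$ to each second-price order statistic yields the corresponding first-price order statistic, and therefore $H^{\rm 1st}_\theta(E)=H^{\rm 2nd}_\theta(\phi_\theta^{-1}(E))$.
\item \emph{Order embedding.} On the support of $H^{\rm 2nd}_\theta$, the map $\tau_\theta$ is strictly increasing, since $b^{\rm 1st}_\theta$ and $b^{\rm 2nd}_\theta$ both are. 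Hence $\phi_\theta(x)\geq\phi_\theta(z)$ implies $x\geq z$ for $x$ in the support. For $z$ outside the support, the generalized-inverse conventions make $\tau_\theta$ constant beyond the endpoints, so the implication still holds for $x$ in the support.
\item \emph{Monotonicity in $\theta$.} We need $\tau_\theta(b)$ to be increasing in $\theta$ for every $b$. This is exactly what the proof of Proposition~\ref{prop:correlated} establishes. Affiliation gives reverse-hazard-rate dominance of $G^{\rm 2nd}_\theta(\cdot\mid b)$, hence FOSD-increasing $\hat L_\theta(\cdot\mid b)$, and the boundary cases are handled there as well.
\item \emph{Measurability.} The measurability of upper closures follows because each $\phi_\theta$ is a coordinatewise monotone map on a Euclidean space. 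The upper closure of the image of a measurable set is then a Borel upper set, or at worst universally measurable; I will argue it through monotonicity and continuity of $\tau_\theta$.
\end{itemize}

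With both conditions in place, Theorem~\ref{thm:multidimension} gives $U({\cal D},H^{\rm 1st})\geq U({\cal D},H^{\rm 2nd})$.

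\textbf{Main obstacle.} The delicate step is the affiliation of the selected order statistics. Order statistics of exchangeable affiliated variables are affiliated only on the ordered cone, so the argument must verify the lattice inequality for ordered vectors. The fix is to note that joins and meets of ordered vectors remain ordered, so the density inequality transfers directly. After that, one marginalizes over the coordinates outside $K$.
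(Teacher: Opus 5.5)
Your proposal is correct and follows essentially the same route as the paper. Like the paper, you apply Theorem~\ref{thm:multidimension} with the coordinatewise map $\phi_{\theta,k}(b)=b^{\rm 1st}_\theta((b^{\rm 2nd}_\theta)^{-1}(b_k))$, get affiliation of $H^{\rm 2nd}$ from affiliation plus exchangeability of the second-price bids, and take monotonicity in $\theta$ from the proof of Proposition~\ref{prop:correlated}. Your additional detail on the affiliation of the selected order statistics (the ordered cone is a sublattice, and marginalization preserves affiliation), the order-embedding check outside the support, and measurability only fills in steps the paper leaves implicit.
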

\begin{proof}
The affiliation and exchangeability  of the second-price auction bids ensure that  $H^{\rm 2nd}$ satisfies affiliation.  
As in the proof of Corollary~\ref{cor:partial}, we consider the mapping $\phi_\theta$ such that $\phi_{\theta,k}(b)=b^{\rm 1st}_\theta((b_\theta^{\rm 2nd})^{-1}(b_k))$ for each $k\in K$  for possible bids $(b_1,\ldots, b_n)$ under $b^{\rm 2nd}_\theta$. Then each $\phi_\theta$ is an order embedding, and it also satisfies monotonicity by the proof of  Proposition~\ref{prop:correlated}. 
By construction $H^{\rm 1st}_\theta(E)=H^{\rm 2nd}_\theta(\phi^{-1}_\theta(E))$ for each measurable $E$, so that the conclusion follows from  Proposition~\ref{prop:correlated} and Theorem~\ref{thm:multidimension}. 
\end{proof}

\subsubsection{Proof of Theorem~\ref{thm:multidimension}}

We start with two preliminary lemmas.

\begin{lem}\label{lem:monotone-phi}
For any upper set $E\subseteq X^2$ and any $\theta^* \in \Theta$, write $E^*=\mathord{\uparrow}\phi_{\theta^*}(E)$.
Then
$
\left( H^1_{\theta} (E^*) - H^2_{\theta} \left(E \right) \right)(\theta - \theta^*) \geq 0
$  for every $\theta$. Moreover, $H^1_{\theta^*}(E^*)=H^2_{\theta^*}(E)$.
\end{lem}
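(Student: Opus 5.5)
The plan is to reduce both claims to set inclusions between $E$ and the pullback $\phi_\theta^{-1}(E^*)$, and then use the pushforward identity in condition (i) of Theorem~\ref{thm:multidimension}, namely $H^1_\theta(E^*)=H^2_\theta(\phi_\theta^{-1}(E^*))$. Here $E^*=\mathord{\uparrow}\phi_{\theta^*}(E)$ is measurable by property 3 of a monotone family, so this identity applies. Everything then comes down to comparing $\phi_\theta^{-1}(E^*)$ with $E$, up to $H^2_\theta$-null sets, according to whether $\theta\ge\theta^*$ or $\theta\le\theta^*$.

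First take $\theta\ge\theta^*$. For $x\in E$, monotonicity of $\phi$ in $\theta$ (property 2) gives $\phi_\theta(x)\succsim^1\phi_{\theta^*}(x)$. Since $\phi_{\theta^*}(x)\in\phi_{\theta^*}(E)$, this means $\phi_\theta(x)\in E^*$. Hence $E\subseteq\phi_\theta^{-1}(E^*)$, and so $H^1_\theta(E^*)\ge H^2_\theta(E)$. Upper-set structure is not needed for this direction.

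Next take $\theta\le\theta^*$, and let $x\in S_{\theta^*}\cap\phi_\theta^{-1}(E^*)$. Then $\phi_\theta(x)\succsim^1\phi_{\theta^*}(z)$ for some $z\in E$, and monotonicity gives $\phi_{\theta^*}(x)\succsim^1\phi_\theta(x)$. By transitivity, $\phi_{\theta^*}(x)\succsim^1\phi_{\theta^*}(z)$. The order-embedding property at $\theta^*$, with $x\in S_{\theta^*}$, yields $x\succsim^2 z$, and since $E$ is an upper set, $x\in E$. Thus $\phi_\theta^{-1}(E^*)\cap S_{\theta^*}\subseteq E$. This is the main obstacle: $S_{\theta^*}$ has full measure under $H^2_{\theta^*}$, but we need it to be null-complemented under $H^2_\theta$ for $\theta\neq\theta^*$. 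I would handle this in one of two ways. One is to assume the measures $H^2_\theta$ are mutually absolutely continuous, i.e.\ that the densities $h^2_\theta$ share a common support; affiliation with positive densities on a common lattice support would give this. The other, if the paper's convention permits, is to take a common full-measure set $S$ for all $\theta$, or to restrict to the union of supports. Granting this, $H^2_\theta(\phi_\theta^{-1}(E^*))\le H^2_\theta(E)$, which is the desired sign for $\theta<\theta^*$.

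Finally, at $\theta=\theta^*$, both inclusions hold and there is no cross-state issue. We get $E\subseteq\phi_{\theta^*}^{-1}(E^*)$, and $\phi_{\theta^*}^{-1}(E^*)\cap S_{\theta^*}\subseteq E$ with $H^2_{\theta^*}(S_{\theta^*})=1$. Therefore $H^1_{\theta^*}(E^*)=H^2_{\theta^*}(E)$ exactly. Combining the two cases gives $\bigl(H^1_\theta(E^*)-H^2_\theta(E)\bigr)(\theta-\theta^*)\ge0$ for every $\theta$.
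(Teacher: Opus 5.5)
\textbf{There is a genuine gap in the case $\theta<\theta^*$.} Your treatment of $\theta\ge\theta^*$ and of the equality at $\theta=\theta^*$ is fine. For $\theta<\theta^*$, however, you apply monotonicity to $x$, obtaining $\phi_{\theta^*}(x)\succsim^1\phi_\theta(x)\succsim^1\phi_{\theta^*}(z)$. You then invoke the order embedding of $\phi_{\theta^*}$, which only constrains points of $S_{\theta^*}$.

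That set is only guaranteed to have full $H^2_{\theta^*}$-measure, so your inclusion $\phi_\theta^{-1}(E^*)\cap S_{\theta^*}\subseteq E$ says nothing about $H^2_\theta(\phi_\theta^{-1}(E^*))$. The lemma assumes neither mutual absolute continuity of $(H^2_\theta)$ nor a common full-measure set, and affiliation does not supply either. Both of your proposed fixes therefore add a hypothesis. Moreover, that hypothesis would exclude the paper's intended applications. In Corollary~\ref{cor:partial}, $S_\theta$ is naturally the state-dependent support of bids, which shifts with $\theta$ (e.g., second-price bid supports $[0,3]$ versus $[2,5]$ in Figure~\ref{fig:uniform}). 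So nothing prevents $H^2_\theta(S_{\theta^*})<1$.

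\textbf{The missing idea is to apply monotonicity to $z$ rather than $x$.} Take $x\in S_\theta$ with $\phi_\theta(x)\succsim^1\phi_{\theta^*}(z)$ for some $z\in E$.
\begin{itemize}
\item[1.] Since $\theta^*\ge\theta$, monotonicity gives $\phi_{\theta^*}(z)\succsim^1\phi_\theta(z)$, hence $\phi_\theta(x)\succsim^1\phi_\theta(z)$.
\item[2.] The order-embedding property of $\phi_\theta$ itself, at the point $x\in S_\theta$ and with $z$ arbitrary, yields $x\succsim^2 z$.
\item[3.] Since $E$ is an upper set, $x\in E$.
\end{itemize}
Thus $\phi_\theta^{-1}(E^*)\cap S_\theta\subseteq E$. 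Because $H^2_\theta(S_\theta)=1$, you get $H^1_\theta(E^*)\le H^2_\theta(E)$ with no cross-state null-set issue. This is exactly the paper's argument. With this one change, the rest of your proof goes through as written, including the equality at $\theta^*$ from the two inclusions.
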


\begin{proof}
Note that $E^*$ is an upper set in $X^1$ by construction.  If $\theta\geq \theta^*$ and $x\in E$, then
$
\phi_\theta(x)\succsim^1 \phi_{\theta^*}(x)
$
by monotonicity, so $\phi_\theta(x)\in E^*$.  Hence $H^1_\theta(E^*)\geq H^2_\theta(E)$.  

Conversely, suppose that $\theta\leq \theta^*$,
$x\in S_\theta$, and $\phi_\theta(x)\in E^*$.
There is then some $z\in E$ such that
\[
\phi_\theta(x)\succsim^1\phi_{\theta^*}(z)
\succsim^1\phi_\theta(z),
\]
where the second relation follows by monotonicity.
Since $\phi_\theta$ is an order embedding, $x\succsim^2 z$.
Since $E$ is an upper set, $x\in E$.
As $H_\theta^2(S_\theta)=1$, it follows that
$H^1_\theta(E^*)\leq H^2_\theta(E)$.
\end{proof}

\begin{lem}\label{lem:nested-upsets}
For 
 any pair of upper sets $E\supseteq E'$ in $X^2$ and $\theta<\theta'$, we have $\mathord{\uparrow}\phi_{\theta}(E)\supseteq 
\mathord{\uparrow}\phi_{\theta'}(E')$.  
\end{lem}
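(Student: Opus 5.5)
The plan is to fix an arbitrary element of the right-hand set and exhibit it as an element of the left-hand set, combining the upward-monotonicity of $(\phi_\theta)$ in the state with the nesting $E\supseteq E'$. Concretely, take any $y\in \mathord{\uparrow}\phi_{\theta'}(E')$. By the definition of upper closure there is some $z\in E'$ with $y\succsim^1 \phi_{\theta'}(z)$. The goal is then to find a point $w\in E$ with $y\succsim^1\phi_\theta(w)$, which places $y$ in $\mathord{\uparrow}\phi_\theta(E)$.

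The natural choice is $w=z$ itself. Since $E'\subseteq E$, we have $z\in E$. Since $\theta<\theta'$, condition~2 in the definition of a monotone family gives $\phi_{\theta'}(z)\succsim^1\phi_\theta(z)$. Transitivity of the partial order $\succsim^1$ then yields $y\succsim^1\phi_{\theta'}(z)\succsim^1\phi_\theta(z)$ with $z\in E$. Hence $y\in\mathord{\uparrow}\phi_\theta(E)$, and since $y$ was arbitrary the inclusion follows.

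I expect no real obstacle. The order-embedding property, the measurability condition~3, and even the hypothesis that $E$ and $E'$ are upper sets are not needed for this inclusion; they matter only elsewhere in the proof of Theorem~\ref{thm:multidimension}. The only point to check is that condition~2 is stated for every $x\in X^2$ and not just almost surely. It is, so the argument goes through pointwise with no null-set caveats.
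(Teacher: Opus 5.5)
Your proof is correct and matches the paper's argument: the paper likewise picks a witness $x\in E'$ with $z\succsim^1\phi_{\theta'}(x)$, uses monotonicity of $(\phi_\theta)$ in the state to get $\phi_{\theta'}(x)\succsim^1\phi_\theta(x)$, and concludes via $E'\subseteq E$. Your remark that the upper-set and order-embedding hypotheses are not needed for this inclusion is also accurate.
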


\begin{proof}
If $z\in\mathord{\uparrow}\phi_{\theta'}(E')$, there is some $x\in E'$ such that
\[
z\succsim^1 \phi_{\theta'}(x)\succsim^1 \phi_{\theta}(x),
\]
where the second relation follows by monotonicity. Since $E'\subseteq E$, this implies $z\in\mathord{\uparrow}\phi_{\theta}(E)$.
\end{proof}

\begin{proof}[Proof of Theorem~\ref{thm:multidimension}]

We first consider any decision problem $\cal D$ with a finite action set, which we write as $A=\{a_1, \cdots, a_J\}$ with $a_1<\cdots<a_J$.  Since the utility function has a single peak that is increasing in the state, there exist $\theta_1\leq\cdots\leq\theta_{J-1}$
such that, writing
\[
d_j(\theta)=u(a_{j+1},\theta)-u(a_j,\theta)
\]
for each $\theta\in\Theta$,  we have $d_j(\theta)\leq0$ for $\theta<\theta_j$ and
$d_j(\theta)\geq0$ for $\theta>\theta_j$.  Affiliation implies that an optimal strategy $\hat\sigma$ under $H^2$ can be chosen to be monotone, i.e., $\hat\sigma(x)\geq\hat\sigma(x')$ for any $x\succsim^2 x'$.   This is because $x\succsim^2 x'$ ensures that the DM's posterior at $x$ dominates the posterior at $x'$ in the likelihood-ratio order. Let $\hat\sigma$ denote such an optimal rule and define
\[
E_{j+1}=\{x\in X^2:\hat\sigma(x)\geq a_{j+1}\},\qquad j=1,\ldots,J-1.
\]
The sets $E_2\supseteq\cdots\supseteq E_J$ are upper sets.  For each $j$, set
\[
E'_{j+1}=\mathord{\uparrow}\phi_{\theta_j}(E_{j+1}).
\]
Lemma~\ref{lem:nested-upsets} implies that $E'_2\supseteq\cdots\supseteq E'_J$, so these sets define a strategy $\sigma$ under $H^1$: choose $a_1$ outside $E'_2$, choose $a_j$ on $E'_j\setminus E'_{j+1}$ for $1<j<J$, and choose $a_J$ on $E'_J$.

The payoff from any action $a_\ell$ admits a binary decomposition
\[
u(a_\ell,\theta)=u(a_1,\theta)+\sum_{j=1}^{J-1}
\mathbf{1}\{a_\ell\geq a_{j+1}\}d_j(\theta).
\]
Consequently, in every state $\theta$,
\begin{align*}
&\int u(\sigma(x),\theta)dH^1_\theta(x)
-\int u(\hat\sigma(x),\theta)dH^2_\theta(x) \\
&\qquad=\sum_{j=1}^{J-1}d_j(\theta)
\left[H^1_\theta(E'_{j+1})-H^2_\theta(E_{j+1})\right]\geq0.
\end{align*}
Here, every summand is nonnegative by Lemma~\ref{lem:monotone-phi}, applied with
$\theta^*=\theta_j$.  Integrating over the prior shows that $U({\cal D}, H^1)\geq U({\cal D}, H^2)$.

For decision problems in which $A$ is a compact interval, the conclusion follows from an approximation argument as in  \cite{lehmann1988}. 
\end{proof}

\section{Entry Fees}\label{app:fee}

Consider a standard auction $\psi$ combined with an entry fee $p\geq 0$. Each buyer must pay $p$ to participate in the auction. The entry decision and the bid are chosen simultaneously.  
The item is allocated to the highest bidder among the participants. The participant with the $i$th-highest bid pays the auction payment $\psi_i(b^{(1)},\ldots,b^{(n)})$, in addition to the entry fee, where the bids of nonparticipants are recorded as zero.   For simplicity, we focus on the case in which $p< \overline v_\theta$ for each state $\theta$.  

We consider symmetric monotone cutoff equilibria: In each state $\theta$, all types $v\geq c_\theta$ participate and use a strictly increasing and continuous bidding strategy, whereas types $v<c_\theta$ do not participate. Here, the cutoff $c_\theta<\overline v_\theta$ is independent of the auction format and uniquely determined by
\[
 c_\theta F_\theta(c_\theta)^{n-1}=p, 
\]
since $ c_\theta F_\theta(c_\theta)^{n-1}$ is the equilibrium gross payoff of type $c_\theta$ from entering the auction.
For example, the first-price, second-price, and all-pay auctions admit unique equilibria given by  
\[
 b_\theta^{{\rm 1st},p}(v)=
 \mathbb E_\theta\!\left[v^{(2)}\mathbf 1_{\{v^{(2)} \geq c_\theta\}}\mid v=v^{(1)}\right], \; \quad
 b_\theta^{{\rm 2nd},p}(v)=v, \;  \quad b_\theta^{{\rm all},p}(v)=F_\theta(v)^{n-1} b_\theta^{{\rm 1st},p}(v)
\]
for $v\geq c_\theta$.  Let $G^{\psi,p}_\theta$ denote the equilibrium bid distribution under auction format $\psi$ and entry fee $p$. The following result extends Theorem~\ref{thm:main} and provides an analog of Proposition~\ref{prop:reserve}.

\begin{prop}\label{prop:entry}
For any $p$, $G^{{\rm 1st},p}$ is more accurate than $G^{\psi,p}$, provided that $(G^{\psi,p}_\theta)$ satisfies the MLRP. 
\end{prop}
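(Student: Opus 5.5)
The plan is to mirror the proof of Theorem~\ref{thm:main} and Proposition~\ref{prop:reserve}, using the fact that the entry cutoff $c_\theta$ is common to all formats. First, extend Lemma~\ref{lem:competition} to the entry-fee setting. Nonparticipants' bids are recorded as $0$, so $G^{\psi,p}_\theta$ has an atom of mass $F_\theta(c_\theta)$ at $0$ and is continuous and strictly increasing on $[b^{\psi,p}_\theta(c_\theta), b^{\psi,p}_\theta(\overline v_\theta)]$. Writing bids as functions of quantiles $x$, we get, exactly as in the proof of Proposition~\ref{prop:reserve}, that
\[
(G^{{\rm 1st},p}_\theta)^{-1}\big(G^{\psi,p}_\theta(b)\big)=b^{{\rm 1st},p}_\theta\big(G^{\psi,p}_\theta(b)\big).
\]
So it suffices to show that this expression is increasing in $\theta$ for every $b$.

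The low region $b< b^{\psi,p}_\theta(c_\theta)$ (including $b=0$) is handled directly. There $G^{\psi,p}_\theta(b)=F_\theta(c_\theta)$, so the expression equals $b^{{\rm 1st},p}_\theta(F_\theta(c_\theta))=\mathbb E_\theta[v^{(2)}\mathbf 1\{v^{(2)}\geq c_\theta\}\mid v^{(1)}=c_\theta]=0$. Hence I must compare it against the value at the other state, and it is easier to argue monotonicity on the participating region. The key step is that region, i.e.\ quantiles $x\in[F_\theta(c_\theta),1)$. Revenue equivalence with a common cutoff gives the analog of (\ref{eq:rev-equivalence}): the type at $c_\theta$ has gross payoff $c_\theta F_\theta(c_\theta)^{n-1}$ in every format, so expected auction payments (net of the fee) coincide. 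That is,
\[
p_1(x)\,b^{{\rm 1st},p}_\theta(x)=\sum_i p_i(x)\,\mathbb E\big[\psi_i(b^{\psi,p}_\theta(x^{(1)}),\ldots)\mid x^{(i)}=x\big],
\]
where nonparticipants contribute bids $0$. Dividing by $p_1$ yields (\ref{eq:main}). Claim~1 then goes through unchanged: the quantile $(b^{\psi,p}_\theta)^{-1}(b)$ is decreasing in $\theta$ by FOSD, and $p_i/p_1$ is decreasing by Lemma~\ref{lem:alpha}. For Claim~2, condition on $b^{(i)}=b>0$. The higher bids are draws from $G^{\psi,p}_\theta$ truncated to $[b,\infty)$, and the lower bids are draws from $G^{\psi,p}_\theta$ truncated to $[0,b]$, which include the atom at $0$. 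The MLRP of $(G^{\psi,p}_\theta)$, interpreted with respect to a dominating measure that includes the point mass at $0$, still makes both truncations FOSD-increasing in $\theta$. Since $\psi_i$ is increasing, Claim~2 follows.

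Finally, the boundary bids need continuity arguments as at the end of the proof of Theorem~\ref{thm:main}. At the lower end of the positive support, the argument also uses that $b^{{\rm 1st},p}_\theta(x)\ge 0$, with equality at the cutoff quantile. This is the step I expect to be the main obstacle. Two issues arise. The lower endpoint $b^{\psi,p}_\theta(c_\theta)$ may be positive and vary with $\theta$, and the atom sizes $F_\theta(c_\theta)$ differ across states, so it must be checked that $G^{{\rm 1st},p}$'s right inverse at the atom level behaves monotonically. For this I would use that MLRP orders the atoms: $F_\theta(c_\theta)$ is decreasing in $\theta$, since $c_\theta F_\theta(c_\theta)^{n-1}=p$ and $F_\theta$ is FOSD-increasing. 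I would also use a limit argument $b\searrow b^{\psi,p}_\theta(c_\theta)$, as in Theorem~\ref{thm:main}. Disjoint-support cases would be treated via the revenue-equivalence comparison of endpoint bids.
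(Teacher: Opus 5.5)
Your route is the paper's. You use the Lemma~\ref{lem:competition} identity, then revenue equivalence with the format-independent cutoff on quantiles $x\in[F_\theta(c_\theta),1)$, then Claims~1--2 with the MLRP read relative to $\delta_0$ plus Lebesgue measure. That interior argument is fine.

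Your plan for the bottom of the support, which you rightly flag, would not work as described. Here the first-price bid of the cutoff type is $0$ in \emph{every} state. The endpoint comparison at the end of Theorem~\ref{thm:main} relied on the high state's lowest first-price bid being $\underline v_\theta$, and that device is now unavailable. Suppose that for some $\theta>\theta'$ the lowest positive $\psi$-bid $\beta_\theta:=b^{\psi,p}_\theta(c_\theta)$ exceeded $\beta_{\theta'}$. Then every $b\in(\beta_{\theta'},\beta_\theta)$ would give $0$ on the left and $b^{{\rm 1st},p}_{\theta'}(G^{\psi,p}_{\theta'}(b))>0$ on the right.

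In particular, the disjoint-support case cannot be ``treated via the revenue-equivalence comparison of endpoint bids.'' The inequality that step would have to deliver, $0\geq b^{{\rm 1st},p}_{\theta'}(1)$, is false. Ordering the atom masses $F_\theta(c_\theta)$ does not help either, because the right inverse of $G^{{\rm 1st},p}_\theta$ at its atom level is $0$ whatever that mass is.

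This is not a technicality. Take the second-price auction with an entry fee, $n=2$, $F_0=U[0,1]$, $F_1=U[0,2]$, $p=1/4$, $b=0.6$. There $\beta_\theta=c_\theta$ is strictly increasing, and the comparison gives $0<b^{{\rm 1st},p}_0(0.6)\approx 0.09$, so accuracy really fails. Yet these bid distributions satisfy the MLRP on their continuous parts.

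The missing step is to let the hypothesis rule this case out. Use the atom-inclusive MLRP you already invoke in Claim~2, and let $h_\theta$ be the density of $G^{\psi,p}_\theta$ with respect to $\delta_0$ plus Lebesgue measure. For $p>0$ both atoms $h_\theta(0)=F_\theta(c_\theta)$ and $h_{\theta'}(0)$ are positive.
\begin{itemize}
\item The condition $h_\theta(0)h_{\theta'}(b)\leq h_\theta(b)h_{\theta'}(0)$ for $b>0$ forces the positive-bid support at $\theta'$ to lie inside the one at $\theta$. Hence $\beta_\theta\leq\beta_{\theta'}$.
\item If $\beta_\theta<\beta_{\theta'}$, pick $b'\in(\beta_\theta,\beta_{\theta'})$ and $b$ in both positive supports. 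Then $h_\theta(b')h_{\theta'}(b)>0=h_\theta(b)h_{\theta'}(b')$, which violates the MLRP. Hence $\beta_\theta\geq\beta_{\theta'}$.
\end{itemize}
So the lowest positive bid $\beta$ is the same in every state and the positive supports overlap; the disjoint case never arises.

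The proof then closes at once. For $b\leq\beta$ both sides equal $0$. For $\beta<b<b^{\psi,p}_{\theta'}(\overline v_{\theta'})$ your Claims~1--2 apply at both states. Above that, the limit argument from Theorem~\ref{thm:main} finishes. The example above also shows that the MLRP in the statement must be read with the atom at $0$ counted, as you did.
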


\begin{proof} Note that, for every $b\in\mathbb R_+$, we obtain 
$
(G^{{\rm 1st}, p}_\theta)^{-1}\!\left(G^{\psi,p}_\theta(b)\right)
=b_\theta^{{\rm 1st}, p}\!\left((b^{\psi,p}_\theta)^{-1}(b)\right)
$
as in the proof of Lemma~\ref{lem:competition}.  Then, writing bids as functions of quantiles, it suffices to show, as in the proof of Theorem~\ref{thm:main}, that for each $\theta>\theta'$, $b_\theta^{{\rm 1st}, p}\!\left((b^{\psi,p}_\theta)^{-1}(b)\right)\geq b_{\theta'}^{{\rm 1st}, p}\!\left((b^{\psi,p}_{\theta'})^{-1}(b)\right)$ for any $b\in (b^{\psi,p}_{\theta}(0), b^{\psi,p}_{\theta'}(1))$. This follows from the same revenue-equivalence arguments that give the analog of (\ref{eq:rev-equivalence}) for every $x\in[ \underline x_\theta, 1)$, where $\underline x_\theta=F_\theta(c_\theta)$.
\end{proof}

As in the reserve-price setting in Section~\ref{sec:reserve}, the seller's revenue under entry fee $p$ at each $\theta$ can be written as 
\[
u(p,\theta)=\int_{c_\theta(p)}^{\overline v_\theta}\left(v-\frac{1-F_\theta(v)}{f_\theta(v)}\right)dF^n_\theta(v),
\] where the cutoff $c_\theta(p)$ solves $p=c_\theta(p)F_\theta(c_\theta(p))^{n-1}$.  
The choice of the entry fee is an instance of a monotone decision problem if the value distributions $(F_\theta)$ satisfy Myerson regularity and are monotone with respect to the star order, i.e., $\frac{F_{\theta}^{-1}(x)}{F_{\theta'}^{-1}(x)}$ is increasing in $x\in (0,1)$ for each $\theta>\theta'$.\footnote{Single-peakedness follows from Myerson regularity. Focusing on the interior solution for simplicity, the optimal fee $p^*_\theta$ at  $\theta$ is such that 
\[
p^*_\theta=c^*_\theta F_\theta(c^*_\theta)^{n-1}=F_\theta^{-1}(q_\theta^*)(q_\theta^*)^{n-1},
\] where $c^*_\theta$ is the unique type $v$ satisfying $v-\frac{1-F_\theta(v)}{f_\theta(v)}=0$ and $q_\theta^*:=F_\theta(c^*_\theta)$. 
Since $F_\theta^{-1}$ increases with $\theta$, it suffices to verify that $q^*_\theta$ is also increasing. 
To see this, note that $q_\theta^*$ also uniquely maximizes
$(1-q)F_\theta^{-1}(q)$. Indeed,
$
\frac{d}{dq}\bigl[(1-q)F_\theta^{-1}(q)\bigr]
=\frac{1-q}{f_\theta(F_\theta^{-1}(q))}-F_\theta^{-1}(q),
$
which changes sign from positive to negative at $q_\theta^*$ by Myerson regularity. By star-order monotonicity, for any $\theta>\theta'$,
\[
\frac{d}{dq}\log\bigl[(1-q)F_{\theta}^{-1}(q)\bigr]
\ge
\frac{d}{dq}\log\bigl[(1-q)F_{\theta'}^{-1}(q)\bigr].
\]
At $q=q_{\theta'}^*$ the right-hand side is zero. Hence, single-peakedness of $(1-q)F^{-1}_\theta(q)$
implies $q_{\theta}^*\ge q_{\theta'}^*$.
}  For example, the latter condition is satisfied by value distributions $(F_\theta)$ that belong to a scale family, i.e., $F_\theta(v)=F_0(v/s(\theta))$ for some $F_0$ and some increasing function $s(\cdot)>0$.

Based on this, one can apply Proposition~\ref{prop:entry} to an analogous two-period setting as in Section~\ref{sec:reserve}, where the auctioneer uses today's bid observations to optimally choose tomorrow's entry fee.

\section{Comparison against General Auctions without MLRP}\label{app:undominated}

By Theorem~\ref{thm:main}, the first-price bid distributions $G^{\rm 1st}$ are more accurate than the bid distributions $G^{\psi}$ under any symmetric monotone equilibrium of any standard auction $\psi$, provided that $G^{\psi}$ satisfies the MLRP. One issue in obtaining a comparison without the MLRP is that some standard auctions $\psi$ may admit multiple symmetric monotone equilibria, although equilibrium multiplicity does not arise under most canonical auction formats (e.g., $k$th-price and own-pay auctions; see Section~\ref{sec:subclass}). If $\psi$ admits multiple equilibria, one can in principle exploit arbitrary equilibrium selection across states to generate bid distributions $G^{\psi}$ that violate the MLRP and are more accurate than $G^{\rm 1st}$, broadly related to the way in which asymmetric equilibria under the second-price auction can perfectly reveal the state (see footnote~\ref{fn:report}). 

However, the following result shows that if we focus on standard auctions $\psi$ with a unique symmetric monotone equilibrium, then first-price auction bids are undominated in terms of the accuracy order. Thus, even if $G^{\psi}$ violates the MLRP, $G^{\psi}$ is never strictly more accurate than $G^{\rm 1st}$.


\begin{prop}\label{prop:undominated}
Suppose auction $\psi$ admits a unique symmetric monotone equilibrium, with bid distributions $G^{\psi}$. If $G^\psi$ is more accurate than $G^{{\rm 1st}}$, then
$G^{{\rm 1st}}$ is more accurate than $G^\psi$.
\end{prop}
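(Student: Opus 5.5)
The plan is to show that, under uniqueness, mutual accuracy collapses to informational equivalence. Define $h_\theta := b^\psi_\theta\circ (b^{\rm 1st}_\theta)^{-1}$ on the first-price bid support in state $\theta$. By Lemma~\ref{lem:competition}, the hypothesis that $G^\psi$ is more accurate than $G^{\rm 1st}$ says exactly that $h_\theta(c)\geq h_{\theta'}(c)$ for all $c$ and all $\theta>\theta'$.

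The conclusion requires that $(b^{\rm 1st}_\theta)\circ(b^\psi_\theta)^{-1}=h_\theta^{-1}$ be increasing in $\theta$. This holds as soon as $h_\theta$ does not depend on $\theta$ on the overlap of the supports. I will therefore aim to prove that $h_\theta=h_{\theta'}$ wherever both are defined, and then handle the support endpoints separately. For the endpoints I will use that $b^{\rm 1st}_\theta(0)=\underline v_\theta$ and $b^{\rm 1st}_\theta(1)$ are increasing in $\theta$ (by the MLRP of $(F_\theta)$), together with the continuity arguments already used at the end of the proof of Theorem~\ref{thm:main}.

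To get $\theta$-independence, I will re-express the $\psi$-auction in ``first-price bid units.'' For an increasing map $h$, let $\psi^h$ be the auction with payments $\psi^h_i(c_1,\dots,c_n)=\psi_i(h(c_1),\dots,h(c_n))$. In state $\theta$, the strategy $b^\psi_\theta=h_\theta\circ b^{\rm 1st}_\theta$ is an equilibrium of $\psi$ exactly when $b^{\rm 1st}_\theta$ is an equilibrium of $\psi^{h_\theta}$. The allocation is the same in both descriptions, since rankings are unchanged. By the revenue-equivalence identity (\ref{eq:rev-equivalence}), this equilibrium property amounts to the payment identity
\[
P_{h_\theta,\theta}(x):=\sum_i p_i(x)\,{\mathbb E}\bigl[\psi_i(h_\theta(b^{\rm 1st}_\theta(x^{(1)})),\dots)\mid x^{(i)}=x\bigr]=p_1(x)\,b^{\rm 1st}_\theta(x),
\]
holding for every quantile $x$. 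Because each $\psi_i$ is increasing, $P_{h,\theta}$ is monotone in $h$ in the pointwise order. So the maps $h_{\theta'}\le h_\theta$ (both evaluated in state $\theta$) bracket the target payment schedule: $h_\theta$ meets it exactly, and $h_{\theta'}$ falls weakly below it.

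The key step, and the main obstacle, is to turn this bracketing into a second symmetric monotone equilibrium in state $\theta$ whenever $h_\theta\neq h_{\theta'}$ on a set of positive measure. Uniqueness then forces equality. I would first try a monotone (Tarski-type) fixed-point construction on the lattice of increasing maps $h$ lying between $h_{\theta'}$ and $h_\theta$. The operator would adjust $h$ pointwise in quantile order so as to restore the payment identity. The delicate point is that $P_{h,\theta}(x)$ depends on $h$ at other quantiles, not only at $x$. To keep the operator monotone, I would exploit the Markov decomposition used in Claim~2 of the proof of Theorem~\ref{thm:main}, under which the $i$th-ranked bidder's payment separates into contributions from higher and lower bids.

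If this construction stalls, my fallback is to run the comparison in state $\theta'$ instead, viewing $h_\theta$ as a candidate that overpays there. The aim would be to show that the equilibrium $h_{\theta'}$ can be perturbed toward $h_\theta$ while keeping the payment identity, which would again produce a distinct equilibrium. Either route yields $h_\theta=h_{\theta'}$, so $G^{\rm 1st}$ and $G^\psi$ are equivalent, and in particular $G^{\rm 1st}$ is more accurate than $G^\psi$.
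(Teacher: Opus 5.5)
There is a genuine gap, at exactly the step you call the main obstacle. Your only comparison tool is monotonicity of expected payments in $h$. That gives a one-sided bracket: $h_\theta$ meets the target payment schedule and $h_{\theta'}$ lies weakly below it. A one-sided bracket does not produce a second equilibrium.

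In the Tarski route, $h_\theta$ is already a fixed point of any operator whose fixed points are the maps satisfying the payment identity. Nothing in a one-sided bracket rules out that it is the only one. Moreover, in state $\theta$ the map $h_{\theta'}$ is undefined above $b^{\rm 1st}_{\theta'}(1)$, so your lattice is not even defined on the whole support. In the fallback, monotonicity in $h$ says that pushing $h_{\theta'}$ up toward $h_\theta$ weakly \emph{raises} payments in state $\theta'$, so you have no mechanism for keeping the payment identity. Comparing schedules at a fixed quantile $x$, as your $P_{h,\theta}(x)$ does, cannot close the bracket either, because the target $p_1(x)b^{\rm 1st}_\theta(x)$ itself moves with $\theta$.

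The missing idea is a second monotonicity, in the state, holding a first-price \emph{bid} fixed rather than a quantile. Let $C_\theta[h](b)$ be the expected $\psi$-payment of quantile $(b^{\rm 1st}_\theta)^{-1}(b)$ when all buyers play $h\circ b^{\rm 1st}_\theta$, divided by its winning probability $G^{\rm 1st}_\theta(b)^{n-1}$. Revenue equivalence gives $C_\theta[h_\theta](b)=b$. Besides being increasing in $h$, $C_\theta[h](b)$ is increasing in $\theta$. To see this:
\begin{itemize}
\item The first-price first-order condition gives $(G^{\rm 1st}_\theta)'(b)/G^{\rm 1st}_\theta(b)=1/\bigl((n-1)(v_\theta(b)-b)\bigr)$, where $v_\theta(b)$ is the value that bids $b$ and is decreasing in $\theta$. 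Hence $(G^{\rm 1st}_\theta)$ is reverse-hazard-rate ordered.
\item The Markov decomposition of order statistics must be taken with respect to $G^{\rm 1st}_\theta$, not $G^\psi$, since $G^\psi$ need not satisfy the MLRP here.
\item In that decomposition, the part above $b$ is the measure $dG^{\rm 1st}_\theta/G^{\rm 1st}_\theta(b)$ on $(b,\infty)$. Its tails $(1-G^{\rm 1st}_\theta(z))/G^{\rm 1st}_\theta(b)$ increase in $\theta$, which delivers the claim.
\end{itemize}

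The bracket then closes in state $\theta'$. For $b$ in the overlap of the two first-price supports and any continuous strictly increasing $\tilde h$ with $h_{\theta'}\le\tilde h\le h_\theta$,
\[
b=C_{\theta'}[h_{\theta'}](b)\le C_{\theta'}[\tilde h](b)\le C_{\theta'}[h_\theta](b)\le C_\theta[h_\theta](b)=b .
\]
No fixed-point argument is needed. A small bump on $h_{\theta'}$ near a point where $h_{\theta'}<h_\theta$ reproduces the first-price payment schedule, so it is a second equilibrium at $\theta'$.

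You still have to treat types whose first-price bid lies below $b^{\rm 1st}_\theta(0)$, where $h_\theta$ is undefined. The paper uses nonnegativity of interim payoffs to show two things. Such types pay nothing when they lose. When they win, all opponent bids lie outside the bump, so their payment is unchanged.
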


To prove Proposition~\ref{prop:undominated}, we show that if $G^{\rm 1st}$ is not more accurate than $G^{\psi}$, then we can slightly perturb the equilibrium bidding function under $\psi$ to construct another equilibrium, contradicting uniqueness.

\begin{proof}
We write buyers' bids as functions of quantiles. 
Suppose that $G^\psi$ is more accurate than $G^{{\rm 1st}}$.  Then Lemma~\ref{lem:competition} implies that $ h_\theta(b):=b_\theta^\psi((b_\theta^{{\rm 1st}})^{-1}(b))$ is increasing in $\theta$ for every $b\geq 0$. It also implies that $b_\theta^\psi(0)$ and $b_\theta^\psi(1)$ are increasing in $\theta$. Each $h_\theta(\cdot)$ is continuous and strictly increasing in $b\in (b^{\rm 1st}_\theta(0), b^{\rm 1st}_\theta(1))$.

\medskip
\textbf{Step 1: Reverse hazard-rate order of $G^{\rm 1st}$.}
By the MLRP of $(F_\theta)$, it follows that $b^{\rm 1st}_\theta(0)$ and $b^{\rm 1st}_\theta(1)$ are increasing in $\theta$. For $b\in (b^{\rm 1st}_\theta(0), b^{\rm 1st}_\theta(1))$,  the first-order optimality condition in equilibrium can be rewritten as 
\[
 \frac{(G_\theta^{{\rm 1st}})'(b)}{G_\theta^{{\rm 1st}}(b)}
 =\frac{1}{(n-1)(F_\theta^{-1}((b_\theta^{{\rm 1st}})^{-1}(b))-b)}.
\]
The right-hand side is increasing in $\theta$, which shows that $G^{\rm 1st}_{\theta}$ is monotone with respect to the reverse hazard-rate order.

\medskip
\textbf{Step 2: Normalized expected payments under $\psi$.}
For each $\theta$, $b\in (b_\theta^{\rm 1st}(0), b_\theta^{\rm 1st}(1))$, and a function $h$ that is strictly increasing on $(b_\theta^{\rm 1st}(0), b_\theta^{\rm 1st}(1))$, we define
\[
 C_\theta[h](b):=\sum_i\frac{p_i((b_\theta^{{\rm 1st}})^{-1}(b))}{G^{\rm 1st}_\theta(b)^{n-1}}
 {\mathbb E}_\theta\!\left[
 \psi_i\bigl(h(b^{(1)}),\ldots, h(b^{(n)})\bigr) | b=b^{(i)} \right],
\]
where $b^{(1)},\ldots, b^{(n)}$ denote the order statistics of $n$ i.i.d.\ draws from distribution $G_\theta^{\rm 1st}$. 
Here, as in Appendix~\ref{app:uniform-order},  $p_i(x)$ denotes the probability that a realized quantile $x$ is $i$th-highest. 
When every buyer follows bidding function $h(b^{\rm 1st}_\theta(\cdot))$ in auction $\psi$,  $C_\theta[h](b)$ corresponds to the expected payment of quantile type $(b_\theta^{\rm 1st})^{-1}(b)$  divided by her winning probability $G^{\rm 1st}_\theta(b)^{n-1}$. 
Since $\psi$ is monotone in bids, $ C_\theta[h](b)$ is increasing in $h$ pointwise.

We claim that $C_\theta[h](b)$ is increasing in $\theta$. To prove this, define measures $\mu_{\theta,b}^-$ and $\mu_{\theta,b}^+$ on $\mathbb{R}_+$ by
\[
 d\mu_{\theta,b}^-(y)=\frac{\mathbf 1_{\{y<b\}}\,dG_\theta^{{\rm 1st}}(y)}{G_\theta^{{\rm 1st}}(b)},
 \qquad
 d\mu_{\theta,b}^+(y)=\frac{\mathbf 1_{\{y>b\}}\,dG_\theta^{{\rm 1st}}(y)}{G_\theta^{{\rm 1st}}(b)}.
\]
The first is a probability measure, which is FOSD-increasing in $\theta$ by Step 1. The second is a finite measure whose mass need not equal one. For each $z\geq b$,  
$
 \mu_{\theta,b}^+((z,\infty))
 =\frac{1-G_\theta^{{\rm 1st}}(z)}{G_\theta^{{\rm 1st}}(b)}
$
is increasing in $\theta$ by Step 1. This implies that integrals of nonnegative increasing functions with respect to $\mu_{\theta,b}^+$ are also increasing in $\theta$. 
Using the Markov property of order statistics,  we can write $C_\theta[h](b)$ as
\[
C_\theta[h](b)=\sum_{i=1}^{n}\binom{n-1}{i-1}
\int \int
 \psi_i\!\left(\bigl(h(b_j^+)\bigr)_{j=1}^{i-1},h(b),
\bigl(h(b_j^-)\bigr)_{j=1}^{n-i}
\right)
\prod_{j=1}^{n-i}d\mu_{\theta,b}^-(b_j^-)
\prod_{j=1}^{i-1}d\mu_{\theta,b}^+(b_j^+),
\]
which then shows that $C_\theta[h](b)$ is increasing in $\theta$.

For  $ h_\theta(\cdot)=b_\theta^\psi((b_\theta^{{\rm 1st}})^{-1}(\cdot))$, observe that $C_\theta[h_\theta](b)$ is the expected payment of quantile type $(b_\theta^{\rm 1st})^{-1}(b)$ in auction $\psi$ divided by the winning probability. Thus, the revenue-equivalence argument gives $ C_\theta[h_\theta](b)=b$ for $b\in (b^{\rm 1st}_\theta(0), b_\theta^{\rm 1st}(1))$.

\medskip
\textbf{Step 3: Construction of new strategy $ \widetilde b_{\theta'}$.}
Suppose toward a contradiction that $G^{\rm 1st}$ is not more accurate than $G^{\psi}$. By Lemma~\ref{lem:competition}, there exist $\theta>\theta'$ and a bid $b^*\ge0$ such that $b_\theta^{\rm 1st}((b_\theta^{\psi})^{-1}(b^*))<b_{\theta'}^{\rm 1st}((b_{\theta'}^{\psi})^{-1}(b^*))$.

Choose $b \in(b_\theta^{\rm 1st}((b_\theta^{\psi})^{-1}(b^*)), b_{\theta'}^{\rm 1st}((b_{\theta'}^{\psi})^{-1}(b^*)))$. Then $b\in (b_\theta^{\rm 1st}(0), b_{\theta'}^{\rm 1st}(1))$. Since the first-price auction bidding functions are strictly increasing, 
$
 h_{\theta'}(b)<b^*<h_\theta(b).
$
Choose a small compact interval $K\subseteq (b_\theta^{\rm 1st}(0), b_{\theta'}^{\rm 1st}(1))$ around $b$ such that 
$
 \sup_{b'\in K}h_{\theta'}(b')<b^*<\inf_{b'\in K}h_\theta(b').
$
Then there exists a function $\widetilde h$ that is continuous and strictly
increasing on $(b_{\theta'}^{\rm 1st}(0), b_{\theta'}^{\rm 1st}(1))$ such that
\[
 h_{\theta'}\le\widetilde h\le h_\theta,
 \qquad
 \widetilde h=h_{\theta'}\text{ outside }K,
 \qquad \text{ and } \quad
 \widetilde h\ne h_{\theta'}.
\]
Indeed, $\widetilde h$ can be constructed by taking a strictly positive and continuously differentiable
function $\eta$ supported in the interior of $h_{\theta'}(K)$ and setting 
$\widetilde h(b')=h_{\theta'}(b')+\varepsilon\eta(h_{\theta'}(b'))$.
For all sufficiently small $\varepsilon>0$, the map
$y\mapsto y+\varepsilon\eta(y)$ is strictly increasing and we have $\tilde h\leq h_\theta$ on $K$. Moreover, $\tilde h\leq h_\theta$ outside $K$ since $h_\theta(\cdot)\geq h_{\theta'}(\cdot)$. 
Define a new bidding strategy $
 \widetilde b_{\theta'}(x)
 :=\widetilde h(b_{\theta'}^{{\rm 1st}}(x))
$  under auction $\psi$ at $\theta'$.

\medskip
\textbf{Step 4: Verify that $ \widetilde b_{\theta'}$ is an equilibrium under $\psi$.} We verify that every type's expected payment under $ \widetilde b_{\theta'}$ is identical to the one under $b^\psi_{\theta'}$:
\begin{enumerate}
\item Take any quantile $x<1$ such that $b:=b^{\rm 1st}_{\theta'}(x)>b_\theta^{\rm 1st}(0)$. Step 2 and $h_{\theta'}\leq \tilde h\leq h_\theta$ give
\[
 b=C_{\theta'}[h_{\theta'}](b)
 \le C_{\theta'}[\widetilde h](b)
 \le C_{\theta'}[h_\theta](b)
 \le C_\theta[h_\theta](b)=b.
\]
This verifies the claim, since $C_{\theta'}[\widetilde h](b)$ and $C_{\theta'}[h_{\theta'}](b)$ are $x$'s expected payments normalized by the same winning probability under $ \widetilde b_{\theta'}$  and $b^\psi_{\theta'}$, respectively. 

\item Take any quantile $x$ such that $b_{\theta'}^{\rm 1st}(x)\le b_\theta^{\rm 1st}(0)$. Then $
\widetilde b_{\theta'}(x)=b_{\theta'}^\psi(x)\le b_\theta^\psi(0),
$ and all opponents' bids under either strategy are strictly below $b_\theta^\psi(1)$. Whenever $x$ does not win the item, her payment  must be zero under both strategies.  To see this, suppose that her payment is $\delta>0$ for some realization in which her rank is $i\geq 2$. Then choose $x'<1$ such that $b_\theta^\psi(x')$ exceeds every bid in that realization. Since $\psi_i$ is monotone, a state-$\theta$ bidder of quantile $y<x'$ pays at least $\delta$ whenever $i-1$ opponents have quantiles above $x'$ and the remaining $n-i$ have quantiles below $y$. 
The fact that each buyer's interim expected equilibrium payoff is bounded below by $0$ implies 
$
\delta\binom{n-1}{i-1}(1-x')^{i-1}y^{n-i}
\le \overline v_\theta y^{n-1},
$
which is impossible when $y\approx 0$. Whenever $x$ wins, all opponents' first-price bids are below $b_{\theta'}^{\rm 1st}(x)$, hence outside $K$, so the entire bid profile and payment are unchanged across  $ \widetilde b_{\theta'}$ and $b^\psi_{\theta'}$. 

\end{enumerate}

The strategy $\widetilde b_{\theta'}$ is continuous and strictly increasing.
Each type has no incentive to mimic other types' bids, as the resulting allocation probability and expected
payment coincide with those under $b_{\theta'}^\psi$. Likewise, bidding outside the range $\widetilde b_{\theta'}([0,1])$ is also not profitable. Hence, $\widetilde b_{\theta'}$ is another equilibrium under $\theta'$, which contradicts equilibrium uniqueness.
\end{proof}

\end{document}